\documentclass[11pt]{article}
\usepackage{amssymb}
\usepackage{amsmath}
\usepackage{amsmath}
\usepackage{amsthm}
\usepackage{amsfonts}
\usepackage{graphicx}
\usepackage{enumerate}
\usepackage{bbm} 
\usepackage{dsfont}
\usepackage[authoryear,longnamesfirst]{natbib}
\bibliographystyle{ecta}
\usepackage{multirow}
\setcounter{MaxMatrixCols}{10}
\usepackage{multicol}
\usepackage{tikz}
\usepackage{comment}

\numberwithin{equation}{section}
\newtheorem{theorem}{Theorem}

\newtheorem{assumption}{Assumption}

\newtheorem{corollary}{Corollary}

\newtheorem{proposition}{Proposition}

\newtheorem{lemma}{Lemma}
\theoremstyle{definition}

\setlength{\textwidth}{15cm} \setlength{\oddsidemargin}{.5cm}

\DeclareMathOperator{\var}{\text{Var}}
\DeclareMathOperator{\cov}{\text{Cov}}

\setlength{\textwidth}{17cm} \setlength{\oddsidemargin}{-.5cm}

\newcommand{\supp}{\rm{supp}}
\renewcommand{\hat}{\widehat}
\renewcommand{\tilde}{\widetilde}


\newcommand{\sumij}{\sum_{i=1}^N \sum_{j=1}^M}

\newcommand{\Real}{\mathbbm R}
\newcommand{\R}{\mathbbm R}

\newcommand{\calS}{\mathcal S}
\newcommand{\1}{\mathbbm 1}

\newcommand{\calG}{\mathcal G}

\begin{document}
	
\title{Algorithmic Subsampling under Multiway Clustering\thanks{\setlength{\baselineskip}{4.5mm}First arXiv version: February 28, 2021. We benefited from very useful comments by Sokbae (Simon) Lee, three anonymous referees, and participants in 2021 North American Summer Meeting, International Association for Applied Econometrics, 2021 Asian Meeting, 2021 China Meeting of the Econometric Society, $26^{\text{th}}$ International Panel Data Conference, 2021 Australasia Meeting of the Econometric Society, 2021 European Summer Meeting, and New York Camp Econometrics XVI. The usual disclaimer applies. We thank James M. Kilts Center, University of Chicago Booth School of Business for allowing us to use scanner data from the Dominicks Finer Foods (DFF) retail chain. H. Chiang is supported by the Office of the Vice Chancellor for Research and Graduate Education at the University of Wisconsin-Madison with funding from the Wisconsin Alumni Research Foundation.}}

\author{
	Harold D. Chiang\thanks{\setlength{\baselineskip}{4.5mm}Harold D. Chiang: hdchiang@wisc.edu.  Department of Economics, University of Wisconsin-Madison
	William H. Sewell Social Science Building, 1180 Observatory Drive, Madison, WI 53706, USA.} 
\qquad
	Jiatong Li\thanks{\setlength{\baselineskip}{4.5mm}Jiatong Li: jiatong.li@vanderbilt.edu. Department of Economics, Vanderbilt University, 
	VU Station B \#351819, 2301 Vanderbilt Place, Nashville, TN 37235-1819, USA} 
\qquad 
	Yuya Sasaki\thanks{\setlength{\baselineskip}{4.5mm}Yuya Sasaki: yuya.sasaki@vanderbilt.edu. Department of Economics, Vanderbilt University, 
	VU Station B \#35affiliated1819, 2301 Vanderbilt Place, Nashville, TN 37235-1819, USA}
}

\maketitle

\begin{abstract}\setlength{\baselineskip}{5.4mm}
This paper proposes a novel method of algorithmic subsampling (data sketching) for multiway cluster dependent data.
We establish a new uniform weak law of large numbers and a new central limit theorem for the multiway algorithmic subsample means.
We show that the algorithmic subsampling allows for robustness against potential degeneracy, and even non-Gaussian degeneracy, of the asymptotic distribution under multiway clustering at the cost of efficiency and power loss due to the algorithmic subsampling.
Simulation studies support this novel result, and demonstrate that inference with the algorithmic subsampling entails more accuracy than that without the algorithmic subsampling.
Applying these basic asymptotic theories, we derive the consistency and the asymptotic normality for the multiway algorithmic subsampling generalized method of moments estimator and for the multiway algorithmic subsampling M-estimator.
We illustrate an application to scanner data for analysis of differentiated products markets.
\bigskip\\
	{\bf Keywords:} algorithmic subsampling, data sketching, multiway clustering, robustness against degeneracy, scanner data 
\bigskip\\
	{\bf JEL Codes:} C2, C3, C55
\end{abstract}

\newpage
\section{Introduction}\label{sec:introduction}

In the era of big data, it is not uncommon that data sets are so large that researchers may not need to use the whole sample for statistical inference to draw informative conclusions.
Furthermore, computational bottlenecks in time and/or memory may even prohibit econometric analyses with such large data sets.
The recent econometrics literature \citep*[e.g.,][]{LeeNg2020ARE,LeeNg2020sketching} suggests methods to deal with these circumstances that started to arise in today's data rich environments.
The \textit{algorithmic subsampling} or \textit{data sketching} explored by these authors paves the way for econometric and statistical analyses based on random subsampling of big data.

The existing study of the algorithmic subsampling focuses on i.i.d. cases, and it has been ``silent about how to deal with data that are dependent over time or across space'' \citep[][Section 8]{LeeNg2020ARE}.
On the other hand, some of big data may exhibit cross-sectional statistical dependence, such as multiway clustering.
For instance, common scanner data (leading examples of big data) are clustered in two ways by markets and products.
Common demand shocks within a market may induce statistical dependence among different products within that market.
Similarly, common supply shocks by a producer may induce statistical dependence among different markets within the product produced by that producer.
``A natural stochastic framework for the regression model with multiway clustered data is that of separately exchangeable random variables.'' \citep*{mackinnon2019wild}

In this paper, we propose a novel method of algorithmic subsampling for separately exchangeable random variables, which we will refer to as the \textit{multiway algorithmic subsampling}, and develop asymptotic statistical properties of this method.
We first establish basic theories for the multiway algorithmic subsample means, namely their uniform weak law of large numbers and central limit theorem, which differ from the standard ones in a meaningful way.
In particular, the form of the central limit theorem that is unique to the multiway algorithmic subsampling entails a practically useful property of robustness against potential degeneracy. 
Researchers do not know \textit{ex ante} how the data in use are affected by the cluster sampling.
In case that the cluster-specific shocks have no mean effect on the data, the standard multiway-cluster-robust asymptotic distribution \textit{without} the algorithmic subsampling would suffer from degeneracy, which can lead to either a Gaussian limiting distribution with a faster convergence rate or a non-Gaussian limiting distribution \citep[cf.][]{menzel2017bootstrap}, and invalidates the statistical inference based on standard multiway cluster-robust standard errors. 
On the other hand, we show that the multiway algorithmic subsampling allows for a non-degenerate asymptotic distribution regardless of whether the data are dependent or not. 
In other words, the algorithmic subsampling ensures a robustness against potential degeneracy, and even non-Gaussian degeneracy, of the asymptotic distribution, thereby allowing researchers to robustly enjoy valid statistical inference without knowing whether data are dependent or not.
This finding about the additional practical advantage of the algorithmic subsampling is novel in the literature to our knowledge.
We emphasize that these advantages of robustness come at the cost of efficiency and power loss due to the algorithmic subsampling. 

Once these basic asymptotic statistical theories are established, we apply them to common econometric frameworks.
Specifically, we propose a multiway algorithmic subsampling generalized method of moments (GMM) estimator, and derive asymptotic theories for it, including the consistency, asymptotic normality, and consistent variance estimation.
This multiway algorithmic subsampling GMM estimator also enjoys the aforementioned property of robustness against potential degeneracy.
Likewise, we also propose a multiway algorithmic subsampling M-estimator, and derive similar asymptotic theories for it, including the consistency, asymptotic normality, and consistent variance estimation.

{\bf Relation to the Literature:} 
This paper intersects with two branches of the literature, namely the algorithmic subsampling and multiway clustering.
In econometrics, the algorithmic subsampling and its properties are first studied by \citet*{LeeNg2020ARE,LeeNg2020sketching}.
This literature has focused on random (i.i.d.) sampling as emphasized earlier.
Robust variances under multiway clustering have been proposed by 
\citet*{cameron2012robust},
\citet*{thompson2011simple}, and
\citet*{cameron2014robust}.
Asymptotic statistical properties under multiway clustering have been rigorously investigated by
\citet*{davezies2019empirical},
\citet*{chiang2020inference},
\citet*{mackinnon2019wild}, 
\citet*{menzel2017bootstrap}, and
\citet*{chiang2019multiway}
under various contexts.
This literature has not considered the algorithmic subsampling. 
To our knowledge, this paper is the first to study the properties of algorithmic subsampling under multiway clustering, and therefore, is the first to propose the aforementioned advantage of the algorithmic subsampling for robustness against potential degeneracy, including non-Gaussian degeneracy, in the asymptotic distribution under multiway clustering.
We take advantage of the asymptotic distributional theory for incomplete one-sample U-statistics \citep*{Janson1984} to develop parts of our basic theoretical results.
The method of algorithmic subsampling is also closely related to the general scheme of resampling methods for clustered data, which has been studied for one or multiway clustering by \citet*{mackinnon2017wild}, \citet*{mackinnon2018wild}, \citet*{djogbenou2019asymptotic},
\citet*{davezies2019empirical}, \citet*{mackinnon2019wild}, \citet*{chiang2020inference}, and \citet*{menzel2017bootstrap}, to name but a few.

{\bf Organization:} 
The rest of this paper is organized as follows.
Section \ref{sec:multiway_algorithmic_subsampling} introduces the multiway algorithmic subsampling and presents its asymptotic statistical theories.
Sections \ref{sec:gmm} and \ref{sec:m} demonstrate applications to the GMM and M-estimation frameworks, respectively.
Section \ref{sec:simulation} presents Monte Carlo simulation studies.
Section \ref{sec:empirical} presents an empirical application to scanner data.
Section \ref{sec:conclusion} concludes.
All mathematical proofs and details are collected in the appendix.

\section{The Multiway Algorithmic Subsampling}\label{sec:multiway_algorithmic_subsampling}

Suppose that a researcher observes data $\{W_{ij} : 1 \le i \le N, 1 \le j \le M\}$, where $N$ and $M$ are the sample sizes in the first and second cluster dimensions.
For instance, $N$ and $M$ are the number of markets and the number of products, respectively, in scanner data.
The data may be two-way clustered, in the sense that we allow for arbitrary statistical dependence of $W_{ij}$ across $j \in \{1,...,M\}$ within each market $i$ (due to a common demand shock) and also allow for arbitrary statistical dependence of $W_{ij}$ across $i \in \{1,...,N\}$ within each product $j$ (due to a common supply shock).
A formal assumption of this sampling process will be stated as Assumption \ref{a:sampling} ahead.
Using such two-way cluster sampled data, we are interested in (uniformly) consistent estimation of and statistical inference about $E[f(W_{ij})]$ based on standard econometric techniques.
Since scanner data are very big, however, computational bottlenecks in time and/or memory may limit or even prohibit implementation of standard econometric analysis. \citet*{LeeNg2020ARE,LeeNg2020sketching} therefore suggest the algorithmic subsampling of big data to alleviate computational burdens.

Adapting the ideas of \citet*{LeeNg2020ARE,LeeNg2020sketching} to our framework of two-way clustered data, we propose the following multiway algorithmic subsampling procedure.
(We remark that the algorithmic subsampling is different from the subsampling as a resampling method.) 
Let $p_{NM}$ denote the probability of subsample selection that may depend on the current sample size $(N,M)$.
Generate i.i.d Bernoulli$\left(p_{NM}\right)$ random variables $\{Z_{ij}: 1 \le i \le N, 1 \le j \le M\}$ independently from data.
Let $\hat L = \sum_{i=1}^N \sum_{j=1}^M Z_{ij}$ denote the number of non-zero elements.
Note that $\hat L$ follows Binomial$\left(NM,p_{NM}\right)$, and thus $L \equiv E[\hat L]=NMp_{NM}$ in particular.
In fact, this formulation of the algorithmic subsampling is called the Bernoulli subsampling, and is one of the alternative approaches to subsampling proposed by \citet*{LeeNg2020ARE}.
We focus on this Bernoulli subsampling in this paper for simplicity as well as its desired property of the aforementioned robustness against degeneracy.
That said, we remark that it is also feasible to use alternative subsampling methods (namely, the uniform subsampling with and without replacement) proposed by \citet*{LeeNg2020ARE}.

We use
$
\hat L^{-1}\sum_{i=1}^{N}\sum_{j=1}^{M}Z_{ij}f\left(W_{ij}\right)
$
to estimate and make inference about $E[f(W_{ij})]$.
To this end, we first develop the uniform weak law of large numbers and the central limit theorem under this setting of the multiway algorithmic subsampling in Sections \ref{sec:wlln} and \ref{sec:clt}, respectively.
We then apply these basic theories in turn to establish the consistency and the asymptotic normality for the generalized method of moments and M-estimation in Sections \ref{sec:gmm} and \ref{sec:m}, respectively. Hereafter for conciseness of notations, $p_{NM}$ will be abbreviated as $p$. We let $[k]$ denote the set $\{1,...,k\}$ for any $k \in \mathbb{N}$, and let $[k]^c=\mathbb{N}\backslash[k]$ for any $k\in \mathbb{N}.$
We use the short-hand notation $\underline{C}=\min\left\{N,M\right\}$. Throughout the paper, the asymptotics is understood as $\underline{C}\to \infty$. For a vector $v\in \Real^k$, let $\|v\|$ denote the Euclidean norm of $v$. 

\subsection{The Uniform Weak Law of Large Numbers}\label{sec:wlln}

We first formally state the assumption of two-way cluster sampling.

\begin{assumption}[Sampling]\label{a:sampling} 
(i) $\left(W_{ij}\right)_{(i,j)\in \mathbb{N}^2}$ is an infinite sequence of separately exchangeable d-dimensional random vectors. That is, for any permutations $\pi_1$ and $\pi_2$ of $\mathbb{N}$, we have $\left(W_{ij}\right)_{(i,j)\in \mathbb{N}^2} \overset{\text{d}}{=} \left(W_{\pi_1\left(i\right)\pi_2\left(j\right)}\right)_{\left(i,j\right)\in \mathbb{N}^2}.$
(ii) $\left(W_{ij}\right)_{\left(i,j\right)\in \mathbb{N}^2}$ is dissociated. That is, for any $\left(c_1,c_2\right)\in \mathbb{N}^2$, $\left(W_{ij}\right)_{i\in [c_1],j\in [c_2]}$ is independent of $\left(W_{ij}\right)_{i\in [c_1]^c,j\in [c_2]^c}.$
\end{assumption}

\noindent
Part (i) requires a form of the identical distribution condition in separate permutations of the $i$ index and the $j$ index.
Although we relax the independent sampling, we maintain a form of the identical distribution as such.
Part (ii) requires that sets of observations are independent if they do not share the same $i$ index or the same $j$ index, i.e., $\left(W_{ij}: i \in \{1,...,c_1\},j \in \{1,...,c_2\}\right)$ and $\left(W_{ij}: i \in \{c_1+1,...\},j \in \{c_2+1,...\}\right)$ are assumed to be independent for any $(c_1,c_2)\in \mathbb{N}^2$. However, any observations are sharing either the same $i$ index or the same $j$ index, then they are allowed to be arbitrarily dependent.
For example, in the scanner data, two observations in the same market $i$ may be dependent due to a common demand shock, and likewise two observations in the same product $j$ may also be dependent due to a common supply shock.

Assumption \ref{a:sampling} consists of a sufficient condition for what we actually need. These conditions can be relaxed to the assumption that the data $(W_{ij})_{(i,j)\in\mathbb{N}^2}$ are generated via the process $W_{ij} = f(\alpha_i,\beta_j,\varepsilon_{ij})$ for some Borel-measurable function $f$, where $(\alpha_i)_{i \in \mathbb{N}}$, $(\beta_j)_{j \in \mathbb{N}}$, and $(\varepsilon_{ij})_{(i,j) \in \mathbb{N}^2}$ are mutually independent, and each of $(\alpha_i)_{i \in \mathbb{N}}$, $(\beta_j)_{j \in \mathbb{N}}$, and $(\varepsilon_{ij})_{(i,j) \in \mathbb{N}^2}$ is i.i.d. This data generating process, or so-called the Aldous-Hoover-Kallenberg representation, is implied by Assumption \ref{a:sampling}. We can interpret $\alpha_i$ and $\beta_j$ as $i$- and $j$-specific effects, respectively, while $\varepsilon_{ij}$ is an idiosyncratic effect. This representation is also consistent with the data generating processes considered in the simulation studies.

For convenience of stating the next assumption, we introduce additional notations and definitions.
Let $(T,d)$ be pseudometric space\footnote{That is, $d(x,y)=0$ does not imply $x=y$}. For $\varepsilon>0$, an $\varepsilon$-net of $T$ is a subset $T_\varepsilon$ of $T$ such that for every $t\in T$ there exists $t_\varepsilon\in T_\varepsilon$ with $d(t,t_\varepsilon)\le \varepsilon$. 
We define the $\varepsilon$-covering number $N(T,d,\varepsilon)$ of $T$ by
\begin{align*}
N(T,d,\varepsilon)=\inf \{\text{Card}(T_\varepsilon):\, T_\varepsilon \text{ is an $\varepsilon$-net of $T$}\}.
\end{align*}
For any probability measure $Q$ on a measurable space $(S,\calS)$ and any $q\ge 1$, define $\|f\|_{Q,q}=\left\{\int |f|^q dQ\right\}^{1/q}$ and let $L^q(S) = \{f:S\to \Real : \|f\|_{Q,q} < \infty\}$. A function $G:S\to\R$ is an envelope of a class of functions $\calG\ni g$, $g:S\to\R$, if $\sup_{g\in\calG}|g(s)|\le G(s)$ for all $s\in S$. 
With these notations and definitions, we state the following assumption regarding the function class where $f$ resides.

\begin{assumption}[Function Class]\label{a:moment}
The function class $\mathcal{F}$ satisfies (i) $E\left[f\left(W_{ij}\right)\right]=0$ for all $f\in \mathcal{F}$.
(ii) $\mathcal{F}$ admits an envelope $F$ satisfying $E\left[F(W_{ij})\right]<\infty$ with 
$\sup_{Q} N(\mathcal{F},\left\|\cdot\right\|_{Q,2},\epsilon \left\|F\right\|_{Q,2})<\infty $ for all $\epsilon >0$, where $Q$ is any finite discrete measure.
(iii) $\mathcal{F}$ is pointwise measurable.\footnote{For its definition, see \citet[pp. 110]{van1996weak} for instance.} 
\end{assumption}

\noindent 
Part (i) is a location normalization (centering) and is therefore without loss of generality.
Although this part will not be needed in the short run (Lemma \ref{lemma:consistency}), we state it here as this Assumption \ref{a:moment} collects requirements about the function space where $f$ resides.
Part (ii) is a regularity condition imposed to establish a uniform weak law of large numbers. 
Part (iii) is a technical requirement that is used to avoid measurability issues.
At this moment, we are stating these high-level assumptions for the sake of generality, but we will provide the standard lower-level primitive conditions in the contexts of the application to the generalized method of moments presented in Section \ref{sec:gmm} and the application to the M-estimation presented in Section \ref{sec:m}.

Under these assumptions, we can establish the uniform weak law of large numbers for multiway algorithmic subsample means as formally stated in the lemma below.

\begin{lemma}[Uniform Weak Law of Large Numbers]\label{lemma:consistency}
Suppose that Assumption \ref{a:sampling} holds and that $\mathcal{F}$ satisfies Assumption \ref{a:moment} (ii)--(iii).
Then, for any $f \in \mathcal{F}$, we have
$$ \sup_{f\in \mathcal{F}}\left|\frac{1}{\hat L}\sum_{i=1}^{N}\sum_{j=1}^{M}Z_{ij}f\left(W_{ij}\right)- E\left[f(W_{11})\right]\right|\overset{P}{\rightarrow} 0.$$
\end{lemma}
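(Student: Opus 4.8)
The plan is to reduce the statement to a uniform law for separately exchangeable arrays that the literature has already established, handling the random normalization $\hat L$ separately. First I would replace $\hat L^{-1}$ by $L^{-1} = (NMp)^{-1}$. Since $\hat L \sim \text{Binomial}(NM,p)$, we have $\hat L / L \overset{P}{\to} 1$ (this follows from $\var(\hat L/L) = (1-p)/L \to 0$, provided $L \to \infty$, which is implicitly assumed; if $L$ stays bounded the statement is vacuous as the supremum does not vanish). Hence it suffices to show
\[
\sup_{f\in\mathcal F}\left|\frac{1}{NMp}\sum_{i=1}^N\sum_{j=1}^M Z_{ij} f(W_{ij}) - E[f(W_{11})]\right| \overset{P}{\to} 0,
\]
because multiplying by $L/\hat L = 1 + o_P(1)$ preserves convergence in probability once the centered object is $o_P(1)$, and $\sup_{f\in\mathcal F}|E[f(W_{11})]| \le E[F(W_{11})] < \infty$ controls the remainder term $(L/\hat L - 1)E[f(W_{11})]$ uniformly.

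Next I would observe that the array $\{Z_{ij} f(W_{ij})\}$ is itself separately exchangeable and dissociated: $\{Z_{ij}\}$ is i.i.d. (hence trivially separately exchangeable and dissociated) and is drawn independently of $\{W_{ij}\}$, so the product array inherits both properties from Assumption \ref{a:sampling}. Moreover $\{z\cdot f(w) : f\in\mathcal F\}$, viewed as a class of functions of the augmented observation $(Z_{ij},W_{ij})$, has envelope $(z,w)\mapsto |z|F(w) \le F(w)$ with $E[F(W_{11})]<\infty$, and its uniform covering numbers are bounded by those of $\mathcal F$ (multiplying every function by the fixed coordinate $z\in\{0,1\}$ does not increase the $L^2(Q)$-covering number relative to the same envelope), so Assumption \ref{a:moment}(ii)--(iii) is preserved for the augmented class. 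Therefore I would invoke the uniform law of large numbers for separately exchangeable arrays — of the Glivenko--Cantelli type established via the Janson \citeyearpar{Janson1984} / Davezies--d'Haultfœuille--Guyonvarch \citeyearpar{davezies2019empirical} machinery referenced in the paper — applied to the array $\{(Z_{ij},W_{ij})\}$ and the class $\{z f(w)\}$. That gives
\[
\sup_{f\in\mathcal F}\left|\frac{1}{NM}\sum_{i=1}^N\sum_{j=1}^M Z_{ij}f(W_{ij}) - E[Z_{11}f(W_{11})]\right|\overset{P}{\to} 0,
\]
and since $Z_{11}\ind W_{11}$ and $E[Z_{11}]=p$, the centering equals $p\,E[f(W_{11})]$; dividing through by $p$ yields the displayed claim of the previous paragraph.

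The main obstacle is the normalization by $p$, which tends to zero: dividing the empirical average by $p$ amplifies the uniform error by $1/p$, so a bare Glivenko--Cantelli statement (error $o_P(1)$) is not enough — one needs the error to be $o_P(p)$. I would address this by not dividing the raw average by $p$ at all, but rather by working directly with the class $\mathcal G_p = \{(z,w)\mapsto p^{-1} z f(w) : f\in\mathcal F\}$, which has envelope $p^{-1} z F(w)$ with $E[p^{-1}Z_{11}F(W_{11})] = E[F(W_{11})] < \infty$ — crucially a \emph{bound that does not depend on $p$}. The uniform entropy bound is scale-invariant, so $\mathcal G_p$ satisfies Assumption \ref{a:moment}(ii) with the same constant uniformly in $p$. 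The SLLN/WLLN for dissociated separately exchangeable arrays, applied with this fixed integrable envelope, then delivers the $o_P(1)$ rate for $\sup_{f\in\mathcal F}|N^{-1}M^{-1}\sum_{i,j} p^{-1}Z_{ij}f(W_{ij}) - E[f(W_{11})]|$ directly, since the relevant tail and maximal-inequality bounds depend on the data only through the envelope's $L^1$ norm, which is held fixed. A careful execution checks that the truncation argument underlying the exchangeable WLLN — split $F = F\1\{F\le K\} + F\1\{F>K\}$ — goes through with constants independent of $p$ because $E[p^{-1}Z_{11}F(W_{11})\1\{F(W_{11})>K\}] = E[F(W_{11})\1\{F(W_{11})>K\}] \to 0$ as $K\to\infty$ uniformly in $p$. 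Combining this with the $\hat L/L \to 1$ step from the first paragraph completes the proof.
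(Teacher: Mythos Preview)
Your proposal is correct and follows essentially the same route as the paper: the paper represents $Z_{ij}=\mathbbm 1\{U_{ij}\le p_{NM}\}$ for i.i.d.\ uniforms $U_{ij}$, applies its Glivenko--Cantelli lemma for sequences of classes (Lemma~\ref{lemma:d12}) to the product class $\{(u,w)\mapsto p_{NM}^{-1}\mathbbm 1\{u\le p_{NM}\}f(w):f\in\mathcal F\}$ on the fixed array $(U_{ij},W_{ij})$---exactly your $\mathcal G_p$, with the coupling making the underlying array genuinely fixed rather than a triangular array in $p$---and then replaces $L$ by $\hat L$ via a Bernstein bound (Lemma~\ref{lemma:inequality}). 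The only cosmetic difference is that the paper's $U_{ij}$ device is the clean formalization of your ``constants independent of $p$'' remark, and the paper swaps $\hat L$ for $L$ at the end rather than the beginning.
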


\noindent
This result is not very surprising, but we state above as Lemma \ref{lemma:consistency} and prove it (in Appendix \ref{sec:lemma:consistency}) for the following two reasons.
First, this is nonetheless the first time it is stated and proved in the literature to the best of our knowledge.
Second, more importantly, this lemma serves as a useful auxiliary device for other results to be presented ahead that are practically more relevant.

\subsection{The Central Limit Theorem}\label{sec:clt}

To establish the central limit theorem under the multiway algorithmic subsampling, we augment our assumptions with the following additional condition.
Recall the short-hand notation $\underline{C}=\min\left\{N,M\right\}$.
\begin{assumption}\label{a:scalar}
There exists a constant $\Lambda \geq 0$ such that $(\underline{C}/NM)((1-p_{NM})/p_{NM}) \rightarrow \Lambda$. 
\end{assumption} 

\noindent
 It entails that there exist constants $\lambda_1 \geq 0$ and $\lambda_2\geq 0$ such that $\underline{C}/N\rightarrow \lambda_1$, $\underline{C}/M\rightarrow \lambda_2.$
 
To facilitate the subsequent discussions, we introduce a notion of degenerate asymptotic distribution.
For any scalar-valued sequence of random variables $\left(X_{ij}\right)_{(i,j)\in \mathbb{N}^2}$ satisfying Assumption \ref{a:sampling}, we say the asymptotic distribution is degenerate if $\var\left( (\sqrt{\underline C} /NM)\sumij X_{ij}\right)\to 0$ as $\underline C\to\infty$.
The following theorem establishes the central limit theorem under the multiway algorithmic subsampling.

\begin{theorem}[Central Limit Theorem]\label{theorem:clt}
Suppose that Assumptions \ref{a:sampling}, \ref{a:moment} (i), (iii) and \ref{a:scalar} hold, and that the class $\mathcal{F}=\left\{f_1,...,f_k\right\}$ is finite, independent of sample size, and admits an envelope $F$ satisfying $E\left[F(W_{ij})^2\right]<\infty$. Let $f=\left(f_1,...,f_k\right)^T$. 
Then,
$$
\sqrt{\underline{C}}\frac{1}{\hat L}\sum_{i=1}^{N}\sum_{j=1}^{M}Z_{ij}f\left(W_{ij}\right)\overset{d}{\rightarrow}N\left(0,\Gamma\right),
$$
where the variance is given by $\Gamma=\Gamma _A+\Lambda \Gamma_B,$ 
$\Gamma_A=\lambda_1 E\left[f\left(W_{11}\right)f^T\left(W_{12}\right)\right]+\lambda_2 E\left[f\left(W_{11}\right)f^T\left(W_{21}\right)\right]$
and   
$\Gamma_B=E\left[f\left(W_{11}\right)f^T\left(W_{11}\right)\right].$
\end{theorem}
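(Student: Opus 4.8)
The plan is to (i) replace the random normalizer $\hat L$ by $L=NMp$ via Slutsky, (ii) reduce to a scalar statement by the Cram\'er--Wold device, (iii) split the scalar subsample mean into a ``sampling-design'' term and a ``subsampling-noise'' term and prove a CLT for each, and (iv) recombine by conditioning on the data. For (i): $\hat L\sim\text{Binomial}(NM,p)$ has mean $L$ and variance $L(1-p)$, so $\var(\hat L/L)\le 1/L\to 0$ (note $L=NMp\to\infty$ under Assumption \ref{a:scalar}), hence $L/\hat L\overset{P}{\rightarrow}1$; after setting the estimator to $0$ on the vanishing-probability event $\{\hat L=0\}$ and writing $\sqrt{\underline C}\hat L^{-1}\sum_{i,j}Z_{ij}f(W_{ij})=(L/\hat L)\sqrt{\underline C}L^{-1}\sum_{i,j}Z_{ij}f(W_{ij})$, it suffices to prove the statement with $L$ in place of $\hat L$. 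For (ii): by Cram\'er--Wold it is enough to show, for each $a\in\R^k$ and $g:=a^{T}f$ (which has $E[g(W_{11})]=0$ and $E[g(W_{11})^2]\le\|a\|^2 E[F(W_{11})^2]<\infty$), that $\sqrt{\underline C}L^{-1}\sum_{i,j}Z_{ij}g(W_{ij})\overset{d}{\rightarrow}N(0,\sigma^2)$ with $\sigma^2=\lambda_1 E[g(W_{11})g(W_{12})]+\lambda_2 E[g(W_{11})g(W_{21})]+\Lambda E[g(W_{11})^2]=a^{T}\Gamma a$.

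For (iii), using $L=NMp$ write $\sqrt{\underline C}L^{-1}\sum_{i,j}Z_{ij}g(W_{ij})=U_N+V_N$, where $U_N:=(\sqrt{\underline C}/NM)\sum_{i,j}g(W_{ij})$ and $V_N:=(\sqrt{\underline C}/L)\sum_{i,j}(Z_{ij}-p)g(W_{ij})$. For $U_N$, invoke the Aldous--Hoover representation for separately exchangeable dissociated arrays (Assumption \ref{a:sampling}) and the Hoeffding-type decomposition $g(W_{ij})=\dot g_1(\alpha_i)+\dot g_2(\beta_j)+\dot g_3(\alpha_i,\beta_j,\varepsilon_{ij})$ into mutually orthogonal, conditionally centered components, so that $U_N=\sqrt{\underline C/N}\,N^{-1/2}\sum_i\dot g_1(\alpha_i)+\sqrt{\underline C/M}\,M^{-1/2}\sum_j\dot g_2(\beta_j)+(\sqrt{\underline C}/NM)\sum_{i,j}\dot g_3(\alpha_i,\beta_j,\varepsilon_{ij})$; the third term has variance $(\underline C/NM)\var(\dot g_3)\to0$, hence is $o_P(1)$, the first two are independent i.i.d.\ partial sums obeying the Lindeberg--L\'evy CLT (after absorbing the deterministic factors $\sqrt{\underline C/N}\to\sqrt{\lambda_1}$, $\sqrt{\underline C/M}\to\sqrt{\lambda_2}$), and since $\var(\dot g_1)=E[g(W_{11})g(W_{12})]$ and $\var(\dot g_2)=E[g(W_{11})g(W_{21})]$ we get $U_N\overset{d}{\rightarrow}N(0,\lambda_1 E[g(W_{11})g(W_{12})]+\lambda_2 E[g(W_{11})g(W_{21})])$, which correctly degenerates to a point mass at $0$ exactly when $\Gamma_A$ does. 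For $V_N$, condition on $\mathcal W:=\sigma(\{W_{ij}\})$: the summands $(\sqrt{\underline C}/L)(Z_{ij}-p)g(W_{ij})$ are then independent across $(i,j)$, and---in the spirit of \citet*{Janson1984}---a conditional Lindeberg--Feller CLT applies, since their conditional variances sum to $\{\underline C(1-p)/(NMp)\}(NM)^{-1}\sum_{i,j}g(W_{ij})^2\overset{P}{\rightarrow}\Lambda E[g(W_{11})^2]$ (law of large numbers for $g^2$, integrable since $E[F^2]<\infty$, together with Assumption \ref{a:scalar}), while the normalized conditional Lindeberg remainder is dominated by $\{\underline C(1-p)/(NMp)\}(NM)^{-1}\sum_{i,j}g(W_{ij})^2\1\{|g(W_{ij})|>\delta L/\sqrt{\underline C}\}\overset{P}{\rightarrow}0$, because $L/\sqrt{\underline C}\to\infty$ (as $L\to\infty$ and $NMp/\underline C$ stays bounded away from $0$) and $E[g^2]<\infty$ forces $E[g^2\1\{|g|>c\}]\to0$ as $c\to\infty$ (the case $\Lambda E[g(W_{11})^2]=0$ being trivial, as then $V_N\overset{P}{\rightarrow}0$). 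Hence $E[e^{itV_N}\mid\mathcal W]\overset{P}{\rightarrow}\exp(-t^2\Lambda E[g(W_{11})^2]/2)$ for every $t$.

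For (iv), because $U_N$ is $\mathcal W$-measurable while the conditional law of $V_N$ converges in probability to a \emph{non-random} Gaussian, $E[e^{it(U_N+V_N)}]=E[e^{itU_N}E[e^{itV_N}\mid\mathcal W]]$, and the cross term is negligible since $|E[e^{itU_N}(E[e^{itV_N}\mid\mathcal W]-\exp(-t^2\Lambda E[g(W_{11})^2]/2))]|\le E|E[e^{itV_N}\mid\mathcal W]-\exp(-t^2\Lambda E[g(W_{11})^2]/2)|\to0$; therefore $E[e^{it(U_N+V_N)}]\to\exp(-t^2(\lambda_1 E[g(W_{11})g(W_{12})]+\lambda_2 E[g(W_{11})g(W_{21})])/2)\cdot\exp(-t^2\Lambda E[g(W_{11})^2]/2)=\exp(-t^2\sigma^2/2)$, and L\'evy's continuity theorem gives $U_N+V_N\overset{d}{\rightarrow}N(0,\sigma^2)$. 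Undoing Cram\'er--Wold and step (i) then yields the theorem.

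I expect the main obstacle to be steps (iii)--(iv): making the conditional Lindeberg--Feller argument for $V_N$ fully rigorous---verifying the conditional Lindeberg condition at the $\underline C$-dependent truncation level $\delta L/\sqrt{\underline C}$, controlling the random conditional variance by a law of large numbers under only the second-moment envelope hypothesis, and legitimately upgrading ``convergence of conditional laws in probability'' to joint unconditional convergence of $(U_N,V_N)$ with \emph{independent} Gaussian limits. This independence---an artifact of the conditioning---is precisely what produces the additive structure $\Gamma=\Gamma_A+\Lambda\Gamma_B$ and hence the robustness against degeneracy: even when $\Gamma_A$ vanishes, the subsampling-noise term $V_N$ keeps the limit non-degenerate as long as $\Lambda>0$. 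The Aldous--Hoover/Hoeffding bookkeeping for $U_N$ is comparatively routine but must be handled with care so that the degenerate-$\Gamma_A$ case is covered.
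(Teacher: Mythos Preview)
Your proof is correct and follows the same overall architecture as the paper: replace $\hat L$ by $L$, decompose the subsample mean into a full-sample term $U_N$ (the paper's $A_{NM}$) plus a subsampling-noise term $V_N$ (proportional to the paper's $B_{NM}$), establish asymptotic normality for each, and glue them together via the conditional characteristic function identity $E[e^{it(U_N+V_N)}]=E\bigl[e^{itU_N}\,E[e^{itV_N}\mid\mathcal W]\bigr]$, which is exactly the paper's step and yields the independent Gaussian limits responsible for the additive form $\Gamma=\Gamma_A+\Lambda\Gamma_B$.

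The differences are in the tools used for the two pieces. For $U_N$, the paper simply invokes Theorem 3.1 of \citet{davezies2018asymptotic}, whereas you unpack the Aldous--Hoover representation and Hoeffding decomposition explicitly; this is what underlies their result, so your route is more self-contained but not conceptually different. For $V_N$, the paper applies Lemma 2 of \citet{Janson1984}, which is stated for \emph{bounded} coefficients $a_{ij}=q^{T}f(W_{ij})$, and then treats unbounded $f$ via an $L^2$-approximation step (replace $f$ by a bounded $f'$, pass to the limit $f'\to f$). Your direct conditional Lindeberg--Feller argument handles the unbounded case in one stroke, trading the truncation step for a verification of the Lindeberg condition at truncation level $\delta L/\sqrt{\underline C}\to\infty$; this is legitimate, since for any fixed $K$ one has $(NM)^{-1}\sum_{i,j}g(W_{ij})^2\1\{|g(W_{ij})|>K\}\overset{P}{\rightarrow}E[g(W_{11})^2\1\{|g(W_{11})|>K\}]$ by the LLN for separately exchangeable arrays, and the latter can be made small by choosing $K$ large. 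Both routes are valid; yours is arguably tidier on the $V_N$ side, while the paper's is quicker if one is willing to cite the external lemmas.
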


A proof is provided in Appendix \ref{sec:theorem:clt}.
This central limit theorem entails a novel and useful feature of the algorithmic subsampling for multiway clustered data in practice.
Notably, the asymptotic variance $\Gamma$ consists of a sum of two components, $\Gamma_A$ and $\Lambda \Gamma_B$.
This is in contrast with the algorithmic subsampling for independent data, where the asymptotic variance consists of only one term.
The first part, $\Gamma_A$, in fact coincides with the asymptotic variance that we would get without algorithmic subsampling.
Specifically, $\Gamma = \Gamma_A$ if $p = 1$.
More generally, if $p$ is a constant, as the sample sizes $(N,M)$ increase, then $\Lambda = 0$ so that $\Gamma = \Gamma_A$.
On the other hand, if $p$ is chosen so that $\Lambda = \lim_{N,M \rightarrow \infty} (\underline{C}/NM)((1-p)/p) > 0$, then the second part, $\Lambda \Gamma_B$, is also present.
Furthermore, $\Gamma_B$ is nonzero whenever the distribution of $f(W_{ij})$ is non-degenerate, and this feature provides a practically useful property of the robustness in inference against possible events of no cross sectional dependence.

In practice, a researcher may not \textit{ex ante} know whether data exhibit cross sectional dependence ($E[f(W_{11})f^T(W_{12})] \neq 0$ or $E[f(W_{11})f^T(W_{21})] \neq 0$) or not.
If a researcher knew the true dependence structure, he or she could set the correct cluster dimension to conduct valid inference.
However, this premise is implausible.
In case where there is no cross sectional dependence, then $\Gamma_A=0$ and the statistical inference based on the asymptotic normality \textit{without} the algorithmic subsampling would suffer from the degeneracy problem.
Because of the algorithmic subsampling, however, we can robustly safeguard against such degenerate asymptotic distributions without requiring a prior knowledge of the researcher about the presence/absence of cross sectional dependence in data.
This result is novel in the literature, and also uncovers an additional useful property of the algorithmic subsampling in practice.\footnote{Indeed, the method of inference by \citet*{mackinnon2019wild} as well as \citet*{cameron2012robust} adapts to specific classes of degenerate asymptotic distributions. 
However, these restrict to the cases of Gaussian degeneracy, where the convergence rate is $\sqrt{NM}$ yet the asymptotic distribution is still Gaussian. 
On the other hand, these existing methods of inference by \citet*{cameron2012robust} and \citet*{mackinnon2019wild} do not adapt to the class of non-Gaussian degenerate asymptotic distributions.}
Simulation studies presented in Section \ref{sec:simulation} support this practically relevant property of the multiway algorithmic subsampling.

Intuitively, the algorithmic subsampling with smaller $p$ makes it less likely that multiple observations from the same row $i$ or same column $j$ are selected.
Thus, it results in placing relatively more weights on the variance $E[f(W_{ij}) f(W_{ij})]$ than on the covariances, $E[f(W_{ij}) f(W_{ij'})]$ and $E[f(W_{ij}) f(W_{i'j})]$, in the asymptotic distribution.
Hence, the part $E[f(W_{ij}) f(W_{ij})]$ of the asymptotic variance becomes dominant in the case of degenerate covariances, and this feature of the algorithmic subsampling prevents the degeneracy problem.

We can apply these theoretical results to a number of common frameworks of econometric analysis.
Two of the most frequently used classes of econometric methods are the generalized method of moments (GMM) and the M-estimation.
Therefore, we will demonstrate applications of these basic theories of the uniform weak law of large numbers (Lemma \ref{lemma:consistency}) and the central limit theorem (Theorem \ref{theorem:clt}) to establish the consistency and the asymptotic normality of the GMM and M-estimators under the multiway algorithmic subsampling in Sections \ref{sec:gmm} and \ref{sec:m}. 

We conclude this section with a remark on alternative subsampling methods.
As mentioned earlier, our method is based on the Bernoulli subsampling, and is one of the alternative approaches to subsampling proposed by \citet*{LeeNg2020ARE}.
Besides the Bernoulli subsampling on which we focus in this paper, they propose the uniform subsampling with replacement, the uniform subsampling without replacement, and the leverage score subsampling.
Among these alternative methods, it is also feasible to use the uniform subsampling with replacement and the uniform subsampling without replacement.
Similar asymptotic properties will follow through similar lines of the argument following \citet{Janson1984} to those in the proof of Theorem \ref{theorem:clt}.
See Appendix \ref{sec:multiple} for details.

\subsection{Application to the Ordinary Least Squares}\label{sec:ols}
This section demonstrates an application of the basic theories to the ordinary least squares (OLS) estimator. 
Consider the linear regression model
\begin{equation}
Y_{ij} = X_{ij}^T\beta + u_{ij}
\qquad 1 \le i \le N, 1 \le j \le M,    
\end{equation}
where $Y_{ij}$ is a response variable, $X_{ij}$ is a vector of $d$ covariates and $u_{ij}$ is an error satisfying $E\left[u_{ij}|X_{ij}\right]=0.$ 
Let $W_{ij}=\left(Y_{ij},X_{ij}^T\right)^T$, and we apply the proposed multiway algorithmic subsampling to $W_{ij}$.
The parameter of interest is the vector of linear projection coefficients
\begin{equation}
    \beta = E\left[X_{11}X_{11}^T\right]^{-1}E\left[X_{11}Y_{11}\right], \label{beta}
\end{equation}
and the multiway algorithmic subsampling OLS estimator is
\begin{equation}
    \hat \beta=\left(\frac{1}{\hat L}\sum_{i=1}^{N}\sum_{j=1}^{M}Z_{ij}X_{ij}X_{ij}^T\right)^{-1}\left(\frac{1}{\hat L}\sum_{i=1}^{N}\sum_{j=1}^{M}Z_{ij}X_{ij}Y_{ij}\right). \label{hatbeta}
\end{equation}
Applications of Lemma \ref{lemma:consistency} and Theorem \ref{theorem:clt} yield the following limit distribution property about $\widehat\beta$.

\begin{corollary}\label{pro:OLS}
Suppose that Assumption \ref{a:sampling} holds for $W_{ij}=\left(Y_{ij},X_{ij}^T\right)^T.$ Assume $E\left[|Y_{11}|^4\right]<\infty,$ $E\left[\left|\left|X_{11}\right|\right|^4\right]<\infty,$ and that $E\left[X_{11}X_{11}^T\right]$ non-singular. 
For $\beta$ and $\hat \beta$ defined in \eqref{beta} and \eqref{hatbeta}, we have \begin{equation*}
    \sqrt{\underline C}\left(\hat \beta-\beta\right)\overset{d}{\rightarrow}N\left(0,V\right),
\end{equation*}
where 
$V=J^{-1}\Gamma_{OLS} J^{-1}$, 
$J=E\left[X_{11}X_{11}^T\right]$,
$\Gamma_{OLS}=\Gamma_{OLS,1}+\Lambda \Gamma_{OLS,2}$, 
$\Gamma_{OLS,1}=\lambda_1 E\left[X_{11}u_{11}\left(X_{12}u_{12}\right)^T\right]+\lambda_2 E\left[X_{11}u_{11}\left(X_{21}u_{21}\right)^T\right]$,
and   
$\Gamma_{OLS,2}=E\left[X_{11}u_{11}\left(X_{11}u_{11}\right)^T\right]$.
\end{corollary}
\noindent
See Appendix \ref{sec:pro:OLS} for a proof of this corollary.

\section{Application to the Generalized Method of Moments (GMM)}\label{sec:gmm}
In this section, we apply the basic methods and theories presented in Section \ref{sec:multiway_algorithmic_subsampling} to the multiway algorithmic subsampling generalized method of moments (GMM).
Suppose that an economic model implies moment restrictions $E\left[g\left(W_{ij},\theta^0\right)\right]=0$ for a true parameter vector $\theta^0=(\theta_1^0,...,\theta_k^0)^T \in \Theta$, $\Theta \subset \mathbb{R}^k,$ where $g =\left(g_1 ,...,g_m \right)^T$, and $m\geq k$. 
With a Bernoulli sample $\{Z_{ij} : 1 \le i \le N, 1 \le j \le M\}$, the algorithmic subsample moment evaluated at $\theta=(\theta_1,...,\theta_k)^T \in \Theta$ is given by $\hat g_{NM}\left(\theta\right)=\hat L^{-1}\sum_{i=1}^{N}\sum_{j=1}^{M}Z_{ij}g\left(W_{ij},\theta\right)$.  
Let $\hat V$ be a positive semi-definite random matrix, which may depend on $\theta$. We define the multiway algorithmic subsampling GMM estimator $\hat\theta$ as the solution to 
$$
\max_{\theta \in \Theta}\hat Q_{NM}\left(\theta\right),
$$ 
where  $\hat Q_{NM}\left(\theta\right)=-\hat g_{NM}\left( \theta\right)^T\hat V\hat g_{NM}\left(\theta\right).$ The true parameter vector $\theta^0\in \Theta$ is assumed to uniquely solve the population problem $\max_{\theta \in \Theta}-E[g(W_{ij},\theta)]^TVE[g(W_{ij},\theta)],$ where $V$ is positive semi-definite and $\hat V \overset{P}{\rightarrow}V$.

\subsection{Consistency and Asymptotic Normality}

To establish the consistency and asymptotic normality for the multiway algorithmic subsampling GMM estimator $\widehat\theta$, we make the following assumption. For concisely stating the following assumption, we introduce one additional definition regarding Lipschitz continuity. A function $g: \mathbb{R}^k \to\R,$ is Lipschitz with a universal Lipschitz constant, if there exists a positive constant $M$ such that
$\left|g\left(w,\theta\right)-g\left(w,\theta'\right)\right|\leq M \left\|\theta-\theta'\right\|$
for all $w\in\supp(W_{ij})$.

\begin{assumption}\label{a:GMM}
~\\
(i) $V$ is positive semi-definite, and $VE[g(W_{ij},\theta)]=0$ only if $\theta=\theta^0.$

\noindent (ii) $\theta^0 \in \textrm{int} \left(\Theta\right)$, where $\Theta$ is a compact subset of $\mathbb{R}^k$.

\noindent (iii) (a) $\theta \mapsto g_{r}(w, \theta)$ is Lipschitz with a universal Lipschitz constant. 

\noindent\hspace{0.67cm}
(b) Each coordinate of $\theta \mapsto \nabla_{\theta}g_r(w,\theta)$ is Lipschitz with a universal Lipschitz constant. 

\noindent (iv) $E\left[\sup_{\theta\in \Theta}\left\|g\left(W_{ij},
\theta\right)\right\|\right]<\infty.$

\noindent (v) $G^TVG$ is nonsingular where $G=E\left[\nabla_{\theta}g\left(W_{ij},\theta^0\right)\right].$

\noindent (vi) $E\left[\sup_{\theta \in\Theta}\left\|\nabla_\theta g\left(W_{ij},\theta\right)\right\|\right]<\infty.$

\noindent (vii) $g_{\sup}(\cdot)=\max_{r\in \{1,...,m\}}|g_r\left(\cdot,\theta\right)|$ satisfies $E[g_{\sup}(W_{ij})^2]<\infty$.
\end{assumption}

\noindent Assumption \ref{a:GMM} is analogous to the conditions required for Theorem 2.6 and Theorem 3.4 in \citet*{Newey1994}, which state the consistency and asymptotic normality, respectively, of the GMM estimator under the conventional random sampling. 

We first state the consistency of the multiway algorithmic subsampling GMM estimator.

\begin{lemma}[Consistency of the Multiway Algorithmic Subsampling GMM Estimator]\label{theorem:consistency_gmm}
If Assumptions \ref{a:sampling} and \ref{a:GMM} (i), (ii), (iii), (iv) hold, and that $\hat V\overset{P}{\rightarrow} V$,
then $\hat \theta \overset{P}{\rightarrow} \theta^0.$
\end{lemma}

\noindent
A proof is provided in Appendix \ref{sec:theorem:consistency_gmm}.
It follows from combining the arguments in the proofs of \citet[][Theorem 2.1]{Newey1994} with our uniform weak law of large numbers for the multiway algorithmic subsampling (Lemma \ref{lemma:consistency}) presented in Section \ref{sec:wlln}.

We next state the asymptotic normality of the multiway algorithmic subsampling GMM estimator.

\begin{theorem}[Asymptotic Normality of the Multiway Algorithmic Subsampling GMM Estimator]\label{theorem:asymptotic_normality_gmm}
If Assumptions \ref{a:sampling}, \ref{a:scalar}, and \ref{a:GMM} hold, and that $\hat V\overset{P}{\rightarrow} V$,
then $$\sqrt{\underline{C}}\left(\hat \theta-\theta^0\right)\overset{d}{\rightarrow} N\left(0,\left(G^TVG\right)^{-1}G^TV\Omega VG\left(G^TVG\right)^{-1}\right),$$
where $\, G=E\left[\nabla_{\theta}g\left(W_{11},\theta^0\right)\right]$ and $\, \Omega=\Gamma _1+\Lambda \Gamma_2$, with $\, \Gamma _1= \lambda_1E\left[g\left(W_{11},\theta^0\right)g^T\left(W_{12},\theta^0\right)\right]+\lambda_2E\left[g\left(W_{11},\theta^0\right)g^T\left(W_{21},\theta^0\right)\right]$ and $ \Gamma_2=E\left[g\left(W_{11},\theta^0\right)g^T\left(W_{11},\theta^0\right)\right]$.
\end{theorem}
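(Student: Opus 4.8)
\emph{Proof proposal.} The plan is to run the classical Newey--McFadden argument for GMM asymptotic normality, replacing the i.i.d.\ uniform law of large numbers and central limit theorem by their multiway algorithmic-subsampling counterparts, Lemma \ref{lemma:consistency} and Theorem \ref{theorem:clt}, at the two places where they are needed. First I would record that by Lemma \ref{theorem:consistency_gmm} we have $\hat\theta\convinp\theta^0$, and since $\theta^0\in\interior(\Theta)$ by Assumption \ref{a:GMM}(ii), the estimator $\hat\theta$ lies in the interior of $\Theta$ with probability approaching one. Assumption \ref{a:GMM}(iii)(b) makes $\theta\mapsto g(w,\theta)$ continuously differentiable with Jacobian $\nabla_\theta g$, so $\hat Q_{NM}$ is differentiable with $\nabla_\theta\hat Q_{NM}(\theta)=-2\,\hat G_{NM}(\theta)^T\hat V\hat g_{NM}(\theta)$, where $\hat G_{NM}(\theta)=\hat L^{-1}\sum_{i=1}^N\sum_{j=1}^M Z_{ij}\nabla_\theta g(W_{ij},\theta)$. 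On the event that $\hat\theta$ is interior, the first-order condition $\hat G_{NM}(\hat\theta)^T\hat V\hat g_{NM}(\hat\theta)=0$ holds.

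Next I would linearize. On the event that $\hat\theta$ lies in a convex neighborhood of $\theta^0$ contained in $\interior(\Theta)$ (which occurs with probability approaching one), a coordinate-wise mean value expansion gives, for each $r\in\{1,\dots,m\}$, $\hat g_{NM,r}(\hat\theta)=\hat g_{NM,r}(\theta^0)+\nabla_\theta\hat g_{NM,r}(\bar\theta_r)^T(\hat\theta-\theta^0)$ for some $\bar\theta_r$ on the segment joining $\hat\theta$ and $\theta^0$; stacking the rows $\nabla_\theta\hat g_{NM,r}(\bar\theta_r)^T$ into an $m\times k$ matrix $\bar G_{NM}$ and combining with the first-order condition yields
\[
\hat G_{NM}(\hat\theta)^T\hat V\bar G_{NM}\,\sqrt{\underline{C}}\,(\hat\theta-\theta^0)=-\,\hat G_{NM}(\hat\theta)^T\hat V\,\sqrt{\underline{C}}\,\hat g_{NM}(\theta^0).
\]
I would then show $\hat G_{NM}(\hat\theta)\convinp G$ and $\bar G_{NM}\convinp G$. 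The key input is a uniform law of large numbers for the Jacobian class $\{\partial g_r(\cdot,\theta)/\partial\theta_l:\theta\in\Theta\}$: the $w$-uniform Lipschitz property in Assumption \ref{a:GMM}(iii)(b) together with compactness of $\Theta$ produces polynomial covering numbers, so Assumption \ref{a:moment}(ii) holds for this class with the integrable envelope furnished by Assumption \ref{a:GMM}(vi), and pointwise measurability is immediate; Lemma \ref{lemma:consistency} then gives $\sup_{\theta\in\Theta}\|\hat G_{NM}(\theta)-E[\nabla_\theta g(W_{11},\theta)]\|\convinp 0$. Since $\theta\mapsto E[\nabla_\theta g(W_{11},\theta)]$ is continuous (Lipschitz plus dominated convergence), and $\hat\theta\convinp\theta^0$ with each $\bar\theta_r$ squeezed between $\hat\theta$ and $\theta^0$, both $\hat G_{NM}(\hat\theta)$ and $\bar G_{NM}$ converge in probability to $G=E[\nabla_\theta g(W_{11},\theta^0)]$.

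For the remaining factor, I would apply Theorem \ref{theorem:clt} with the finite, sample-size-independent class $\mathcal{F}=\{g_1(\cdot,\theta^0),\dots,g_m(\cdot,\theta^0)\}$, which is centered by the moment restriction $E[g(W_{ij},\theta^0)]=0$, admits the square-integrable envelope $g_{\sup}$ by Assumption \ref{a:GMM}(vii), and is trivially pointwise measurable; with Assumptions \ref{a:sampling} and \ref{a:scalar} in force this gives $\sqrt{\underline{C}}\,\hat g_{NM}(\theta^0)\overset{d}{\rightarrow}N(0,\Omega)$ with $\Omega=\Gamma_1+\Lambda\cdot\Gamma_2$ as in the statement. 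Assumption \ref{a:GMM}(v) makes $G^TVG$ nonsingular, so by the previous paragraph and $\hat V\convinp V$ the matrix $\hat G_{NM}(\hat\theta)^T\hat V\bar G_{NM}$ is invertible with probability approaching one and its inverse converges to $(G^TVG)^{-1}$. Slutsky's theorem applied to the displayed identity then delivers $\sqrt{\underline{C}}\,(\hat\theta-\theta^0)\overset{d}{\rightarrow}-(G^TVG)^{-1}G^TV\cdot N(0,\Omega)$, which is the asserted $N\big(0,(G^TVG)^{-1}G^TV\Omega VG(G^TVG)^{-1}\big)$.

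\textbf{Main obstacle.} Everything except Step 3 is routine first-order-condition and Slutsky bookkeeping, with our two multiway-subsampling limit theorems slotted in where the i.i.d.\ versions would appear. The part that requires genuine care is verifying that the Jacobian class $\{\nabla_\theta g(\cdot,\theta):\theta\in\Theta\}$ satisfies the covering-number and envelope hypotheses of Lemma \ref{lemma:consistency} so that the uniform law of large numbers can be invoked, and then correctly tying the intermediate points $\bar\theta_r$ of the mean-value expansion to the consistency result through the continuity of $\theta\mapsto E[\nabla_\theta g(W_{11},\theta)]$ (together with the mild measurability of the intermediate points, which can be handled by standard selection arguments).
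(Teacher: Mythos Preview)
Your proposal is correct and follows essentially the same route as the paper's proof: the first-order condition plus a coordinate-wise mean-value expansion, a uniform LLN for the Jacobian class via the Lipschitz-in-$\theta$ assumption and Lemma \ref{lemma:consistency}, the CLT of Theorem \ref{theorem:clt} applied to the finite class $\{g_1(\cdot,\theta^0),\dots,g_m(\cdot,\theta^0)\}$, and Slutsky. The paper's write-up is slightly terser (it invokes the covering-number bound from the proof of Lemma \ref{theorem:consistency_gmm} and obtains continuity of $G(\theta)$ directly from Assumption \ref{a:GMM}(iii)(b)), but the logical skeleton is identical to yours.
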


A proof is provided in Appendix \ref{sec:theorem:asymptotic_normality_gmm}.
It follows from combining the arguments in the proofs of \citet[][Theorem 3.4]{Newey1994} with our central limit theorem for the multiway algorithmic subsampling (Theorem \ref{theorem:clt}) presented in Section \ref{sec:clt}.

\subsection{Algorithmic Subsampling Variance Estimation}

\noindent The components, $G$ and $\Omega$ in Theorem \ref{theorem:asymptotic_normality_gmm}, of the asymptotic variance of the multiway algorithmic subsampling GMM estimator can be estimated by 
$$
\widetilde G=\frac{1}{\hat L}\sum_{i=1}^{N}\sum_{j=1}^{M}Z_{ij}\nabla_{\theta}g\left(W_{ij},\hat \theta\right)
$$ 
and 
$$
\widetilde \Omega=\widetilde \Gamma_1+\Lambda \widetilde \Gamma_2,
$$
respectively,
where 
$$
\widetilde \Gamma_1=\frac{\underline{C}}{\widehat L^2}\sum_{i=1}^{N}\sum_{1\leq j,j'\leq M}Z_{ij}Z_{ij'}g\left(W_{ij},\hat \theta\right)g^T\left(W_{ij'},\hat \theta\right)+ \frac{\underline{C}}{\widehat L^2}\sum_{1\leq i, i'\leq N}\sum_{j=1}^MZ_{ij}Z_{i'j}g\left(W_{ij},\hat \theta\right)g^T\left(W_{i'j},\hat \theta\right)
$$ 
and 
$$
\widetilde \Gamma_2= \frac{1}{\hat L}\sum_{i=1}^{N}\sum_{j=1}^{M}Z_{ij}g\left(W_{ij},\hat \theta\right)g^T\left(W_{ij},\hat \theta\right).
$$ 

We propose to estimate the asymptotic variance $\left(G^TVG\right)^{-1}G^TV\Omega VG\left(G^TVG\right)^{-1}$ by the sample counterpart $\left(\tilde G^T\hat V\tilde G\right)^{-1}\tilde G^T\hat V\tilde \Omega \hat V\tilde G\left(\tilde G^T\hat V\tilde G\right)^{-1}.$ 
To guarantee that this algorithmic subsampling variance estimator works asymptotically, we make the following assumption in addition.

\begin{assumption}\label{a:GMM:variance}
~\\
(i) $\theta \mapsto E\left[\nabla_{\theta}g\left(W_{ij},\theta\right)\right]$ is continuous at $\theta^0.$

\noindent (ii) $\theta \mapsto \lambda_1E\left[g\left(W_{ij},\theta\right)g^T\left(W_{ij},\theta\right)\right]+\lambda_2E\left[g\left(W_{ij},\theta\right)g^T\left(W_{ij},\theta\right)\right]$ is continuous at $\theta^0.$

\noindent (iii) $\theta \mapsto E\left[g\left(W_{ij},\theta\right)g^T\left(W_{ij},\theta\right)\right]$ is continuous at $\theta^0.$
\end{assumption}

\noindent
With this additional assumption, $\left(\tilde G^T\hat V\tilde G\right)^{-1}\tilde G^T\hat V\tilde \Omega \hat V\tilde G\left(\tilde G^T\hat V\tilde G\right)^{-1}$ is consistent for the asymptotic variance $\left(G^TVG\right)^{-1}G^TV\Omega VG\left(G^TVG\right)^{-1}$, as formally stated in the following theorem. 

\begin{theorem}[Consistent Asymptotic Variance Estimation of the Multiway Algorithmic Subsampling GMM Estimator]\label{theorem:consistency_gmm_variance}
If Assumptions \ref{a:sampling}, \ref{a:scalar}, \ref{a:GMM} and \ref{a:GMM:variance} hold and that $\hat V\overset{P}{\rightarrow}  V$, 
then 
$$
\left(\tilde G^T\hat V\tilde G\right)^{-1}\tilde G^T\hat V\tilde \Omega \hat V\tilde G\left(\tilde G^T\hat V\tilde G\right)^{-1} \overset{P}{\rightarrow}\left(G^TVG\right)^{-1}G^TV\Omega VG\left(G^TVG\right)^{-1}.
$$
\end{theorem}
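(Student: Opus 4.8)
The plan is to decompose the claimed convergence into convergence of the individual building blocks $\tilde G$, $\tilde\Gamma_1$, and $\tilde\Gamma_2$ to $G$, $\Gamma_1$, and $\Gamma_2$, respectively, and then invoke the continuous mapping theorem, since the map $(A,B,C)\mapsto (A^TBA)^{-1}A^TBCBA(A^TBA)^{-1}$ is continuous at $(G,V,\Omega)$ by Assumption~\ref{a:GMM}~(v) (which guarantees $G^TVG$ is nonsingular, hence invertible in a neighborhood) combined with $\hat V\overset{P}{\rightarrow}V$. The scalar weights $\lambda_1$, $\lambda_2$, $\Lambda$ are deterministic limits (Assumption~\ref{a:scalar}), so once the sums are handled they contribute nothing but constants.

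First I would establish $\tilde G\overset{P}{\rightarrow}G$. This follows from a standard two-step argument: by Lemma~\ref{lemma:consistency} applied to the function class $\{w\mapsto \nabla_\theta g_r(w,\theta):\theta\in\Theta\}$ (which is a valid uniform class under the Lipschitz condition Assumption~\ref{a:GMM}~(iii)(b) and the envelope condition Assumption~\ref{a:GMM}~(vi), giving a finite uniform entropy bound via the Lipschitz-in-parameter structure), we get $\sup_{\theta\in\Theta}\|\hat L^{-1}\sum_{ij}Z_{ij}\nabla_\theta g(W_{ij},\theta)-E[\nabla_\theta g(W_{ij},\theta)]\|\overset{P}{\rightarrow}0$; combining this with $\hat\theta\overset{P}{\rightarrow}\theta^0$ (Lemma~\ref{theorem:consistency_gmm}) and the continuity of $\theta\mapsto E[\nabla_\theta g(W_{ij},\theta)]$ at $\theta^0$ (Assumption~\ref{a:GMM:variance}~(i)) yields $\tilde G\overset{P}{\rightarrow}G$. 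The same template, but simpler, gives $\tilde\Gamma_2\overset{P}{\rightarrow}\Gamma_2$: apply the uniform WLLN to the class $\{w\mapsto g(w,\theta)g^T(w,\theta):\theta\in\Theta\}$, whose envelope is square-integrable by Assumption~\ref{a:GMM}~(vii), then plug in $\hat\theta$ and use continuity from Assumption~\ref{a:GMM:variance}~(iii).

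The main obstacle is $\tilde\Gamma_1\overset{P}{\rightarrow}\Gamma_1$, because $\tilde\Gamma_1$ is a double sum over pairs sharing a cluster index, not a simple average, so Lemma~\ref{lemma:consistency} does not apply directly. My plan here is to first analyze the infeasible version with $\hat\theta$ replaced by $\theta^0$, writing $(\underline C/\hat L^2)\sum_i\sum_{j,j'}Z_{ij}Z_{ij'}g(W_{ij},\theta^0)g^T(W_{ij'},\theta^0)$ and noting that $\underline C/\hat L^2 = (\underline C/(NMp)^2)\cdot(NMp/\hat L)^2$, where $\hat L/(NMp)\overset{P}{\rightarrow}1$ by the law of large numbers for the Binomial. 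For the remaining sum I would take conditional expectations given $(W_{ij})$: $E[Z_{ij}Z_{ij'}\mid W]=p^2$ for $j\neq j'$ and $=p$ for $j=j'$, so the conditional mean of the ``$i$-shared'' block is $p^2\sum_i\sum_{j,j'}g(W_{ij},\theta^0)g^T(W_{ij'},\theta^0)+(p-p^2)\sum_{ij}g g^T$; multiplying by $\underline C/(NMp)^2$, the first term converges (in probability, by a multiway LLN for the separately exchangeable array — this is where the dissociation and exchangeability of Assumption~\ref{a:sampling} enter, as in the Aldous--Hoover/Janson-type arguments cited) to $\lambda_1 E[g(W_{11},\theta^0)g^T(W_{12},\theta^0)]$, while the diagonal correction term is $O_P(\underline C(1-p)/(NMp))=O_P(\Lambda)$ times a bounded average — wait, this is exactly the order $\Lambda$, so I must be careful: actually under Assumption~\ref{a:scalar} this correction does not vanish in general, but it is precisely absorbed because $\tilde\Gamma_1$ as written already double-counts the diagonal; I would verify that the diagonal terms $j=j'$ contribute $O_P(\underline C/(NMp))\cdot(1/NM)\sum\|g\|^2$, and since $\underline C/(NMp)\le \underline C/NM \to 0$ (as $NM\to\infty$ faster than $\underline C$), this term is $o_P(1)$, so only the cross terms survive and give $\Gamma_1$. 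Then I would control the fluctuation around the conditional mean by bounding its conditional variance (again using independence of the $Z_{ij}$ and square-integrability of the envelope).

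Finally I would pass from $\theta^0$ to $\hat\theta$ in $\tilde\Gamma_1$: using the Lipschitz property Assumption~\ref{a:GMM}~(iii)(a), the difference $g(W_{ij},\hat\theta)g^T(W_{ij'},\hat\theta)-g(W_{ij},\theta^0)g^T(W_{ij'},\theta^0)$ is bounded by $\|\hat\theta-\theta^0\|$ times a product of envelopes, and after multiplying by $\underline C/\hat L^2$ and summing, the resulting bound is $\|\hat\theta-\theta^0\|$ times a term that is $O_P(1)$ by the same moment computation as above (using $E[g_{\sup}(W_{ij})^2]<\infty$), hence $o_P(1)$ by Lemma~\ref{theorem:consistency_gmm}; the continuity in Assumption~\ref{a:GMM:variance}~(ii) is not strictly needed once the Lipschitz bound is used but provides the cleanest statement. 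Assembling $\tilde G\overset{P}{\rightarrow}G$, $\tilde\Gamma_1\overset{P}{\rightarrow}\Gamma_1$, $\tilde\Gamma_2\overset{P}{\rightarrow}\Gamma_2$, hence $\tilde\Omega=\tilde\Gamma_1+\Lambda\tilde\Gamma_2\overset{P}{\rightarrow}\Gamma_1+\Lambda\Gamma_2=\Omega$, together with $\hat V\overset{P}{\rightarrow}V$, the continuous mapping theorem delivers the stated convergence of the sandwich estimator.
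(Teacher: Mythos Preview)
Your overall plan---reduce to $\tilde G\overset{P}{\to}G$, $\tilde\Gamma_1\overset{P}{\to}\Gamma_1$, $\tilde\Gamma_2\overset{P}{\to}\Gamma_2$, then apply the continuous mapping theorem using Assumption~\ref{a:GMM}(v) and $\hat V\overset{P}{\to}V$---matches the paper's structure, and your arguments for $\tilde G$ and $\tilde\Gamma_2$ are essentially those of the paper (uniform WLLN via Lemma~\ref{lemma:consistency} on the relevant Lipschitz-indexed class, then plug in $\hat\theta$ and use the continuity hypotheses in Assumption~\ref{a:GMM:variance}).

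Where you diverge is the treatment of $\tilde\Gamma_1$. The paper does not condition on $(W_{ij})$, split into diagonal/off-diagonal, bound the fluctuation by a conditional-variance computation, and then pass from $\theta^0$ to $\hat\theta$ via the Lipschitz constant. Instead it packages $Z_{ij}$ into the function class $\mathcal{G}_{\text{sub}}=\{(w,z)\mapsto z\,g_r(w,\theta)\}$ and invokes a \emph{uniform} double-sum law of large numbers for separately exchangeable arrays (Lemma~\ref{lemma:d11}, a sequence version of Lemma~D.11 in \citet{davezies2018asymptotic}) applied to the enlarged array $(W_{ij},Z_{ij})$. That lemma delivers uniform-in-$\theta$ convergence of $(\underline C/(NM)^2)\sum_i\sum_{j,j'}Z_{ij}Z_{ij'}g(W_{ij},\theta)g^T(W_{ij'},\theta)$, after which the paper factors out $p^2$, replaces $L$ by $\hat L$ via Lemma~\ref{lemma:inequality}, and evaluates at $\hat\theta$ using only the continuity in Assumption~\ref{a:GMM:variance}(ii). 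Because uniformity in $\theta$ is already built into Lemma~\ref{lemma:d11}, no separate Lipschitz sandwich for $\theta^0\to\hat\theta$ is required.

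Your hands-on route contains a concrete error at the diagonal step. You assert that the diagonal contribution is $O_P\bigl(\underline C/(NMp)\bigr)\cdot(1/NM)\sum\|g\|^2$ and then claim $\underline C/(NMp)\le\underline C/(NM)\to 0$. The inequality is reversed: since $p\le 1$, one has $\underline C/(NMp)\ge\underline C/(NM)$. More to the point, Assumption~\ref{a:scalar} gives
\[
\frac{\underline C}{NMp}=\frac{\underline C}{NM}+\frac{\underline C}{NM}\cdot\frac{1-p}{p}\;\longrightarrow\;0+\Lambda=\Lambda,
\]
so the diagonal is of exact order $\Lambda$, not $o_P(1)$; your earlier instinct (``this is exactly the order $\Lambda$'') was correct and the attempted resolution fails. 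The paper's route avoids isolating the diagonal altogether by treating the full double sum through the black-box Lemma~\ref{lemma:d11} rather than by a conditional-mean/variance decomposition.
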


\noindent
A proof is provided in Appendix \ref{sec:theorem:consistency_gmm_variance}. 
It follows by combining Lemma \ref{lemma:consistency} and similar lines of arguments to those in the proofs of Lemma \ref{theorem:consistency_gmm} and Theorem \ref{theorem:asymptotic_normality_gmm}.

\section{Application to the M-Estimation}\label{sec:m}
In this section, we apply the basic methods and theories presented in Section \ref{sec:multiway_algorithmic_subsampling} to the multiway algorithmic subsampling M-estimation.
Let $\Theta \subset \mathbb{R}^k$ be a parameter space and define the class $\mathcal{Q}=\left\{q\left(\cdot, \theta\right): \theta \in \Theta\right\}$ of functions $q(\cdot,\theta)$ indexed by $\theta$. 
With a Bernoulli sample $\{Z_{ij}, 1 \leq i \leq N,  1 \leq j \leq M\}$, we define the multiway algorithmic subsampling M-estimator $\hat \theta$ as the solution to 
$$
\max_{\theta \in \Theta}-\frac{1}{\hat L}\sum_{i=1}^{N}\sum_{j=1}^{M}Z_{ij}q\left(W_{ij},\theta\right).
$$
The true parameter vector $\theta^0=(\theta_1^0,...,\theta_k^0)^T \in \Theta$ is assumed to uniquely solve the population maximization problem $\max_{\theta \in \Theta}-E\left[q\left(W_{ij},\theta\right)\right]$, in the sense that $E\left[q\left(W_{ij},\theta^0\right)\right]<E\left[q\left(W_{ij},\theta\right)\right]$ holds for all $\theta=(\theta_1,...,\theta_k)^T \in \Theta$ and $\theta \neq \theta^0$. 
For each $\theta \in \Theta$, let $-\hat L^{-1}\sum_{i=1}^{N}\sum_{j=1}^{M}Z_{ij}q\left(W_{ij}, \theta\right)$ and $-E\left[q\left(W_{ij},\theta\right)\right]$ be denoted by $\hat Q_{NM}\left(\theta\right)$ and $Q_0\left(\theta\right)$, respectively, for conciseness.

\subsection{Consistency and Asymptotic Normality}

To establish the consistency and asymptotic normality for the multiway algorithmic subsampling M-estimator $\widehat\theta$, we make the following assumption.

\begin{assumption}\label{a:M}
~\\
(i) $\theta^0 \in \textrm{int} \left(\Theta\right)$ where $\Theta$ is a compact subset of $\mathbb{R}^k$, and $E[q(W_{ij}, \theta^0)]<E[q(W_{ij},\theta)]$ for all $\theta \in \Theta \backslash \{\theta_0\}$.

\noindent (ii) (a) $\theta \mapsto q\left(w, \theta\right)$ is Lipschitz with a universal Lipschitz constant.

\noindent\hspace{0.6cm}
(b) Each coordinate of $\theta \mapsto \nabla_{\theta}q(w,\theta)$ is Lipschitz with a universal Lipschitz constant.  

\noindent\hspace{0.6cm}
(c) Each coordinate of $\theta \mapsto \nabla_{\theta\theta^T}q(w,\theta)=\partial^2q\left(w,\theta\right)/\partial \theta \partial \theta^T$ is Lipschitz with a universal Lipschitz constant. 

\noindent (iii) $E[\sup_{\theta\in \Theta}q\left(W_{ij},\theta\right)]<\infty.$ 

\noindent (iv) $E\left[\sup_{\theta \in \Theta}\left\|\nabla_{\theta\theta^T}q\left(W_{ij},\theta\right)\right\|\right]<\infty.$ 

\noindent (v) $H=H\left(\theta^0\right)$ is nonsingular where $H(\theta)=-E\left[\nabla_{\theta\theta^T}q\left(W_{ij},\theta\right)\right].$

\noindent (vi)  $\dot q_{\sup}(\cdot)=\max_{r \in \left\{1,...,k\right\} }\left|\partial q(\cdot,\theta)/\partial \theta_r\right|$ satisfies $E[\dot q_{\sup}(W_{ij})^2]<\infty.$
\end{assumption}  

\noindent Assumption \ref{a:M} is analogous to the conditions required for Theorem 2.1 and Theorem 3.1 in \citet*{Newey1994}, which state the consistency and asymptotic normality, respectively, of the M-estimator under the conventional random sampling. 

We first state the consistency of the multiway algorithmic subsampling M-estimator.

\begin{lemma}[Consistency of the Multiway Algorithmic Subsampling M-estimator]\label{theorem:consistency_M}
If Assumptions \ref{a:sampling} and \ref{a:M} (i), (ii), (iii) hold,
then $\hat \theta \overset{P}{\rightarrow} \theta^0.$
\end{lemma}

\noindent
A proof is provided in Appendix \ref{sec:theorem:consistency_M}.
It follows from combining the arguments in the proof of \citet[][Theorem 2.1]{Newey1994} with our uniform weak law of large numbers for the multiway algorithmic subsampling (Lemma \ref{lemma:consistency}) presented in Section \ref{sec:wlln}.

We next state the asymptotic normality of the multiway algorithmic subsampling M-estimator.

\begin{theorem}[Asymptotic Normality of the Multiway Algorithmic Subsampling M-estimator]\label{theorem:asymptotic_normality_M}
If Assumptions \ref{a:sampling}, \ref{a:scalar} and \ref{a:M} hold, then $$\sqrt{\underline C}\left(\hat \theta-\theta^0\right)\overset{d}{\rightarrow}N\left(0,H^{-1}\Sigma H^{-1}\right),$$ where $H=-E\left[\nabla_{\theta\theta^T}q\left(W_{11},\theta^0\right)\right]$, $\Sigma=\Sigma_1+\Lambda \Sigma_2,$ $\Sigma_1=\lambda_1E\left[\nabla_{\theta}q\left(W_{11},\theta^0\right) \nabla_{\theta}q\left(W_{12},\theta^0\right)^T\right]+\lambda_2E\left[\nabla_{\theta}q\left(W_{11},\theta^0\right) \nabla_{\theta}q\left(W_{21},\theta^0\right)^T\right]$ and $\, \Sigma_2=E\left[\nabla_{\theta}q\left(W_{11},\theta^0\right) \nabla_{\theta}q\left(W_{11},\theta^0\right)^T\right].$
\end{theorem}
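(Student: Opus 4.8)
The plan is to run the classical M-estimation argument (the multiway analogue of Theorem~3.1 of \citet{Newey1994}), substituting Lemma~\ref{lemma:consistency} for the i.i.d.\ uniform law of large numbers and Theorem~\ref{theorem:clt} for the i.i.d.\ central limit theorem. First I would invoke Lemma~\ref{theorem:consistency_M} to obtain $\hat\theta\overset{P}{\rightarrow}\theta^0$; since $\theta^0\in\textrm{int}(\Theta)$ and $\theta\mapsto q(w,\theta)$ is continuously differentiable on $\Theta$ by Assumption~\ref{a:M}(ii), with probability approaching one $\hat\theta$ is an interior maximizer of the smooth objective, so it satisfies the first-order condition $\hat L^{-1}\sum_{i,j}Z_{ij}\nabla_\theta q(W_{ij},\hat\theta)=0$. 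Applying a coordinatewise mean-value expansion of this $k$-vector identity about $\theta^0$ yields
\[
0=\frac{1}{\hat L}\sum_{i,j}Z_{ij}\nabla_\theta q(W_{ij},\theta^0)+\widehat H_{NM}\,(\hat\theta-\theta^0),
\]
where the $r$-th row of $\widehat H_{NM}$ equals $\hat L^{-1}\sum_{i,j}Z_{ij}\nabla_\theta\{\partial q(W_{ij},\cdot)/\partial\theta_r\}$ evaluated at some intermediate point $\bar\theta^{(r)}$ lying on the segment joining $\hat\theta$ and $\theta^0$, so that $\bar\theta^{(r)}\overset{P}{\rightarrow}\theta^0$ for each $r$.

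For the Hessian term, I would apply the uniform weak law of large numbers (Lemma~\ref{lemma:consistency}) to the function class $\{\nabla_{\theta\theta^T}q(\cdot,\theta):\theta\in\Theta\}$. This class satisfies Assumption~\ref{a:moment}(ii)--(iii): each coordinate is Lipschitz in $\theta$ uniformly over $w\in\supp(W_{ij})$ by Assumption~\ref{a:M}(ii)(c), which makes it a parametric (Euclidean) class over the compact set $\Theta$ and hence yields the required uniform covering-number bound, and it is dominated by the integrable envelope $\sup_{\theta\in\Theta}\|\nabla_{\theta\theta^T}q(W_{ij},\theta)\|$ by Assumption~\ref{a:M}(iv). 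Combining the resulting uniform convergence with $\bar\theta^{(r)}\overset{P}{\rightarrow}\theta^0$ and continuity of $\theta\mapsto E[\nabla_{\theta\theta^T}q(W_{ij},\theta)]$ at $\theta^0$ (itself a consequence of the Lipschitz/domination conditions via dominated convergence) gives $\widehat H_{NM}\overset{P}{\rightarrow}-H$. By Assumption~\ref{a:M}(v), $H$ is nonsingular, so $\widehat H_{NM}$ is invertible with probability approaching one and $\widehat H_{NM}^{-1}\overset{P}{\rightarrow}-H^{-1}$.

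For the score term, I would first observe that $\theta\mapsto E[q(W_{ij},\theta)]$ is differentiable with $\nabla_\theta E[q(W_{ij},\theta)]=E[\nabla_\theta q(W_{ij},\theta)]$, the interchange being justified by the Lipschitz bound in Assumption~\ref{a:M}(ii)(a), which supplies an integrable dominating function for the difference quotients; since $\theta^0$ is an interior maximizer of $Q_0$, this yields the population first-order condition $E[\nabla_\theta q(W_{ij},\theta^0)]=0$. Consequently the fixed finite class $\mathcal{F}=\{\partial q(\cdot,\theta^0)/\partial\theta_1,\dots,\partial q(\cdot,\theta^0)/\partial\theta_k\}$ satisfies Assumption~\ref{a:moment}(i),(iii) and admits the square-integrable envelope $\dot q_{\sup}$ by Assumption~\ref{a:M}(vi), so Theorem~\ref{theorem:clt} (under Assumption~\ref{a:scalar}) gives
\[
\sqrt{\underline C}\,\frac{1}{\hat L}\sum_{i,j}Z_{ij}\nabla_\theta q(W_{ij},\theta^0)\overset{d}{\rightarrow}N(0,\Sigma),\qquad \Sigma=\Sigma_1+\Lambda\cdot\Sigma_2,
\]
with $\Sigma_1,\Sigma_2$ exactly as stated (reading $f=\nabla_\theta q(\cdot,\theta^0)$ in $\Gamma_A,\Gamma_B$ of Theorem~\ref{theorem:clt}). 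Solving the expansion, $\sqrt{\underline C}(\hat\theta-\theta^0)=-\widehat H_{NM}^{-1}\,\sqrt{\underline C}\,\hat L^{-1}\sum_{i,j}Z_{ij}\nabla_\theta q(W_{ij},\theta^0)$, and Slutsky's theorem together with the symmetry of $H$ yields $\sqrt{\underline C}(\hat\theta-\theta^0)\overset{d}{\rightarrow}H^{-1}N(0,\Sigma)=N(0,H^{-1}\Sigma H^{-1})$.

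The main obstacle is the Hessian step: one must establish the uniform weak law over the second-derivative class (verifying the uniform-entropy part of Assumption~\ref{a:moment}(ii) for a Lipschitz-in-parameter family and the integrability of its envelope) and then control the data-dependent intermediate points $\bar\theta^{(r)}$; the remaining pieces---the interior first-order condition, the differentiation-under-the-integral step, and the concluding Slutsky argument---are routine given Lemma~\ref{lemma:consistency} and Theorem~\ref{theorem:clt}. A minor point to record along the way is that $\hat L\to\infty$ and $\hat L/L\overset{P}{\rightarrow}1$ so the $\hat L^{-1}$ normalization behaves as expected; this is immediate since $\hat L$ is Binomial$(NM,p)$ and Assumption~\ref{a:scalar} forces $L=NMp\gtrsim\underline C\to\infty$.
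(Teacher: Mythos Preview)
Your proposal is correct and follows essentially the same approach as the paper. The paper verifies the conditions of Theorem~3.1 in \citet{Newey1994} and then invokes that theorem as a black box, whereas you write out the underlying first-order-condition/mean-value-expansion/Slutsky argument explicitly; but the substantive ingredients are identical---Lemma~\ref{lemma:consistency} applied to the second-derivative class $\mathcal Q''$ (with its covering-number bound coming from the Lipschitz-in-$\theta$ condition over compact $\Theta$) for the Hessian, and Theorem~\ref{theorem:clt} applied to the finite score class $\{\partial q(\cdot,\theta^0)/\partial\theta_r\}_{r=1}^k$ with envelope $\dot q_{\sup}$ for the asymptotic normality of the score.
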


\noindent
A proof is provided in Appendix \ref{sec:theorem:asymptotic_normality_M}.
It follows from combining the arguments in the proof of \citet[][Theorem 3.1]{Newey1994} with our central limit theorem for the multiway algorithmic subsampling (Theorem \ref{theorem:clt}) presented in Section \ref{sec:clt}.

\subsection{Algorithmic Subsampling Variance Estimation}

\noindent The components, $H$ and $\Sigma$ in Theorem \ref{theorem:asymptotic_normality_M}, of the asymptotic variance of the multiway algorithmic subsampling M-estimator can be estimated by 
$$
\widetilde H=-\frac{1}{\hat L}\sum_{i=1}^{N}\sum_{j=1}^{M}Z_{ij}\nabla_{\theta\theta^T}q\left(W_{ij},\hat \theta\right)
$$ 
and 
$
\widetilde \Sigma=\widetilde \Sigma_1+\Lambda \widetilde \Sigma_2,
$
respectively,
where 
\begin{align*}
\widetilde \Sigma_1
=&
\frac{\underline{C}}{\widehat L^2}\sum_{i=1}^{N}\sum_{1\leq j,j'\leq M}Z_{ij}Z_{ij'}\nabla_{\theta}q\left(W_{ij},\hat \theta\right)\nabla_{\theta}q\left(W_{ij'},\hat \theta\right)^T\\
&+ \frac{\underline{C}}{\widehat L^2}\sum_{1\leq i, i'\leq N}\sum_{j=1}^MZ_{ij}Z_{i'j}\nabla_{\theta}q\left(W_{ij},\hat \theta\right)\nabla_{\theta}q\left(W_{i'j},\hat \theta\right)^T
\end{align*}
and 
$$
\widetilde \Sigma_2= \frac{1}{\hat L}\sum_{i=1}^{N}\sum_{j=1}^{M}Z_{ij}\nabla_{\theta}q\left(W_{ij},\hat \theta\right)\nabla_{\theta}q\left(W_{ij},\hat \theta\right)^T.
$$ 
Thus, we propose to estimate $H^{-1}\Sigma H^{-1}$ by the sample counterpart $\widetilde H^{-1}\widetilde \Sigma \widetilde H^{-1}.$ 
To guarantee that this asymptotic variance estimator works, we use the following assumption in addition.

\begin{assumption}\label{a:M:variance}
~\\
(i) $\theta \mapsto E\left[\nabla_{\theta\theta^T}q\left(W_{ij},\theta\right)\right]$ is continuous at $\theta^0.$

\noindent (ii) $\theta \mapsto \lambda_1E\left[\nabla_{\theta} q\left(W_{ij},\theta\right)\nabla_{\theta}q\left(W_{ij},\theta\right)^T\right]+\lambda_2E\left[\nabla_{\theta}q\left(W_{ij},\theta\right)\nabla_{\theta}q\left(W_{ij},\theta\right)^T\right]$
is continuous at $\theta^0.$

\noindent (iii) $\theta \mapsto E\left[\nabla_{\theta}q\left(W_{ij},\theta\right)\nabla_{\theta}q\left(W_{ij},\theta\right)^T\right]$ is continuous at $\theta^0.$
\end{assumption}
\noindent
With this additional assumption, $\widetilde H^{-1}\widetilde \Sigma \widetilde H^{-1}$ is consistent for the asymptotic variance $H^{-1}\Sigma H^{-1}$, as formally stated in the following theorem. 
\begin{theorem}[Consistency of the Asymptotic Variance of the Multiway Algorithmic Subsampling M-estimator]\label{theorem:consistency_M_variance}
If Assumptions \ref{a:sampling}, \ref{a:scalar}, \ref{a:M}, \ref{a:M:variance} hold,
then $\widetilde H^{-1}\widetilde \Sigma \widetilde H^{-1}$ is consistent for $H^{-1}\Sigma H^{-1}.$
\end{theorem}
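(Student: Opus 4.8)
The plan is to establish the three convergences $\widetilde H\convinp H$, $\widetilde\Sigma_1\convinp\Sigma_1$ and $\widetilde\Sigma_2\convinp\Sigma_2$, and then invoke the continuous mapping theorem: since $H$ is nonsingular (Assumption \ref{a:M}(v)), $\widetilde H$ is invertible with probability tending to one and $\widetilde H^{-1}\widetilde\Sigma\widetilde H^{-1}=\widetilde H^{-1}(\widetilde\Sigma_1+\Lambda\widetilde\Sigma_2)\widetilde H^{-1}\convinp H^{-1}(\Sigma_1+\Lambda\Sigma_2)H^{-1}=H^{-1}\Sigma H^{-1}$. Two facts are used throughout: $\hat\theta\convinp\theta^0$ (Lemma \ref{theorem:consistency_M}); and $\hat L/L\convinp 1$, hence $L^2/\hat L^2\convinp 1$, which follows from Chebyshev because $\hat L\sim\mathrm{Binomial}(NM,p)$ gives $\mathrm{Var}(\hat L/L)=(1-p)/(NMp)=\underline C^{-1}(\underline C/NM)((1-p)/p)\to 0$ under Assumption \ref{a:scalar}. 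I also use, as in the proof of Theorem \ref{theorem:asymptotic_normality_M}, that $E[\nabla_\theta q(W_{11},\theta^0)]=0$ by the first-order condition for the interior maximizer $\theta^0$.

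For $\widetilde H$ and $\widetilde\Sigma_2$ — which are plain subsample averages evaluated at $\hat\theta$ — I would apply the uniform weak law of large numbers (Lemma \ref{lemma:consistency}), coordinate by coordinate, to the classes $\mathcal H=\{w\mapsto\nabla_{\theta\theta^T}q(w,\theta):\theta\in\Theta\}$ and $\mathcal Q_2=\{w\mapsto\nabla_\theta q(w,\theta)\nabla_\theta q(w,\theta)^T:\theta\in\Theta\}$. Each is Lipschitz in the index $\theta$ over the compact set $\Theta$ — for $\mathcal H$ by Assumption \ref{a:M}(ii)(c); for $\mathcal Q_2$ because each factor is Lipschitz in $\theta$ (Assumption \ref{a:M}(ii)(b)) and bounded by $\dot q_{\sup}(w)+M'\mathrm{diam}(\Theta)$, so the product is Lipschitz in $\theta$ — and admits an integrable envelope: Assumption \ref{a:M}(iv) for $\mathcal H$, and $(\dot q_{\sup}(\cdot)+M'\mathrm{diam}(\Theta))^2$ with Assumption \ref{a:M}(vi) for $\mathcal Q_2$. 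Lipschitz index classes with an integrable envelope are pointwise measurable and satisfy the uniform covering-number bound in Assumption \ref{a:moment}(ii), so Lemma \ref{lemma:consistency} yields $\sup_{\theta\in\Theta}\|\hat L^{-1}\sum_{ij}Z_{ij}\nabla_{\theta\theta^T}q(W_{ij},\theta)-E[\nabla_{\theta\theta^T}q(W_{11},\theta)]\|\convinp 0$ and the analogue for $\mathcal Q_2$. Combining with $\hat\theta\convinp\theta^0$ and the continuity at $\theta^0$ of $\theta\mapsto E[\nabla_{\theta\theta^T}q(W_{11},\theta)]$ and of $\theta\mapsto E[\nabla_\theta q(W_{11},\theta)\nabla_\theta q(W_{11},\theta)^T]$ (Assumption \ref{a:M:variance}(i),(iii)) delivers $\widetilde H\convinp H$ and $\widetilde\Sigma_2\convinp\Sigma_2$.

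The substantive step is $\widetilde\Sigma_1\convinp\Sigma_1$. I would first fix $\theta=\theta^0$, write $g_{ij}=\nabla_\theta q(W_{ij},\theta^0)$ (mean zero), and replace $\hat L^2$ by $L^2=(NMp)^2$ using $L^2/\hat L^2\convinp 1$. It then suffices to show each of the two double-sum statistics $(\underline C/L^2)\sum_i\sum_{j,j'}Z_{ij}Z_{ij'}g_{ij}g_{ij'}^T$ and $(\underline C/L^2)\sum_{i,i'}\sum_j Z_{ij}Z_{i'j}g_{ij}g_{i'j}^T$ converges in probability to its appropriate limit. For the mean I would use independence of $\{Z_{ij}\}$ from the data together with separate exchangeability and dissociation to reduce all expectations to $E[g_{11}g_{11}^T]$, $E[g_{11}g_{12}^T]$, $E[g_{11}g_{21}^T]$ with explicit combinatorial weights, and then Assumption \ref{a:scalar} ($\underline C/N\to\lambda_1$, $\underline C/M\to\lambda_2$, $(\underline C/NM)((1-p)/p)\to\Lambda$) to pass to the limit, obtaining $\Sigma_1$. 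For the variance I would show it tends to $0$: this is essentially the same second-moment bookkeeping as in the proof of Theorem \ref{theorem:clt} (following the approach of \citet{Janson1984}) — expand the square, classify the index tuples by which of the $i$- and $j$-indices coincide, count each class, and bound it using dissociation/conditional independence and $E[\dot q_{\sup}(W_{ij})^2]<\infty$ (Assumption \ref{a:M}(vi)); because the statistic is quadratic in $g$ while only second moments are available, a routine truncation of $g_{ij}$ at a slowly diverging level handles the coincident-index contributions, the truncation remainder being controlled in $L^1$. This gives $\widetilde\Sigma_1(\theta^0)\convinp\Sigma_1$. Finally I would pass from $\theta^0$ to $\hat\theta$: by Assumption \ref{a:M}(ii)(b), replacing $\hat\theta$ by $\theta^0$ in one factor of a generic summand changes it by at most $M'\|\hat\theta-\theta^0\|$ times a $\dot q_{\sup}$-type term, so $\|\widetilde\Sigma_1-\widetilde\Sigma_1(\theta^0)\|\le\|\hat\theta-\theta^0\|\,R_{NM}$ where, after Cauchy--Schwarz on the double sums and the normalization by $\underline C/\hat L^2$ (stochastically bounded because $\underline C/N,\underline C/M=O(1)$), $R_{NM}$ is an average of order-$\dot q_{\sup}(W_{ij})^2$ terms that is $O_P(1)$ by Lemma \ref{lemma:consistency} and Assumption \ref{a:M}(vi); hence $\|\widetilde\Sigma_1-\widetilde\Sigma_1(\theta^0)\|=o_P(1)$ and $\widetilde\Sigma_1\convinp\Sigma_1$.

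Combining, $\widetilde\Sigma=\widetilde\Sigma_1+\Lambda\widetilde\Sigma_2\convinp\Sigma_1+\Lambda\Sigma_2=\Sigma$, and the continuous mapping theorem finishes the proof. The main obstacle I anticipate is the variance-vanishing step for $\widetilde\Sigma_1(\theta^0)$: controlling the second moment of a quadratic functional of $g$ under only the $L^2$-bound of Assumption \ref{a:M}(vi) requires combining the exchangeability/dissociation combinatorics with the truncation argument, and this is the one place where genuine care is needed; everything else reduces, via Lemma \ref{lemma:consistency} and the Lipschitz conditions, to standard plug-in arguments.
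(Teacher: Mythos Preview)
Your treatment of $\widetilde H$ and $\widetilde\Sigma_2$ matches the paper's essentially line for line: both reduce to the uniform weak law (Lemma \ref{lemma:consistency}, or equivalently the Glivenko--Cantelli Lemma \ref{lemma:d12} that underlies it) applied to Lipschitz-indexed classes with an integrable envelope, followed by plug-in of $\hat\theta$ and continuity at $\theta^0$.

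For $\widetilde\Sigma_1$ you take a genuinely different route. The paper does \emph{not} compute means and variances at $\theta^0$ and then patch in $\hat\theta$ via Lipschitz bounds. Instead it invokes a ready-made uniform result, Lemma \ref{lemma:d11} (a sequence version of Lemma D.11 in \citet{davezies2018asymptotic}), which directly delivers
\[
\sup_{\theta\in\Theta}\left\|\frac{\underline C}{L^2}\sum_{i}\sum_{j,j'}Z_{ij}Z_{ij'}\nabla_\theta q(W_{ij},\theta)\nabla_\theta q(W_{ij'},\theta)^T-\lambda_1 E\big[\nabla_\theta q(W_{11},\theta)\nabla_\theta q(W_{12},\theta)^T\big]\right\|\convinp 0,
\]
after the same $p^2$-cancellation you use; one then evaluates at $\hat\theta$ and applies Assumption \ref{a:M:variance}(ii). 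This sidesteps precisely the step you flag as delicate --- the variance-vanishing argument for a quadratic functional under only the $L^2$ bound of Assumption \ref{a:M}(vi) --- because that combinatorics (symmetrization plus an entropy bound) is already packaged inside Lemma \ref{lemma:d11}. Your pointwise-plus-truncation-plus-Lipschitz plan is sound and more self-contained, but it requires carrying out by hand the fourth-moment bookkeeping under dissociation and then the separate $\hat\theta\to\theta^0$ step; the paper's approach trades that for a single citation to the Davezies--D'Haultf{\oe}uille--Guyonvarch lemma. One minor note: the bookkeeping you allude to as ``essentially the same as in the proof of Theorem \ref{theorem:clt}'' is not actually present there --- that proof works via conditional characteristic functions (\citet{Janson1984}), not via variance counting for a double-sum statistic --- so if you pursue your route you would be supplying a new calculation rather than reusing one.
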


A proof is provided in Appendix \ref{sec:theorem:consistency_M_variance}.
It follows by combining Lemma \ref{lemma:consistency} and similar lines of arguments to those in the proofs of Lemma \ref{theorem:consistency_M} and Theorem \ref{theorem:asymptotic_normality_M}.

\section{Simulation Studies}\label{sec:simulation}

As emphasized in Section \ref{sec:multiway_algorithmic_subsampling}, we discovered a new advantage of the algorithmic subsampling that it allows for robustness in inference against potential degeneracy of the asymptotic distribution under multiway clustering.
In this section, we use Monte Carlo simulations to demonstrate this robustness property.
Following \cite{menzel2017bootstrap}, we consider two broad categories of designs, namely additively separable designs (Section \ref{sec:separable}) and nonseparable designs (Section \ref{sec:nonseparable}).
For each of these two broad categories, we experiment with a design that leads to a non-degenerate asymptotic distribution and another design that leads to a degenerate asymptotic distribution if the algorithmic subsampling were not to be employed.
In total, we consider four designs.
The multiway algorithmic subsampling will be shown to yield more accurate finite sample coverage results than conventional methods robustly across all the four cases, thereby supporting the aforementioned theoretical discovery by this paper.

\subsection{Additively Separable Designs}\label{sec:separable}
First, we generate the two-way clustered array $\{Y_{ij}\}_{i \in [N], j \in [M]}$ according to the additively separable model
\begin{align*}
Y_{ij} = \sigma_a \alpha_i + \sigma_b \beta_j + \sigma_e \varepsilon_{ij},
\end{align*}
where $\beta_j$ and $\varepsilon_{ij}$ are i.i.d. standard normal, and $\alpha_i = (\zeta_i - \mu_\zeta)/\sigma_\zeta$ for $\log(\zeta_i) \stackrel{\text{i.i.d.}}{\sim} N(0,1)$, $\mu_\zeta = E[\zeta_i]$, and $\sigma_\zeta^2 = \text{Var}(\zeta_i)$.
With this basic setup, we consider two designs:
\begin{align*}
\text{Design 1}:& \ \sigma_a^2 = 0.5, \sigma_b^2 = 0.1, \text{ and }\sigma_e^2 = 0.2; \qquad\text{and}\\
\text{Design 2}:& \ \sigma_a^2 = 0.0, \sigma_b^2 = 0.0, \text{ and } \sigma_e^2 = 0.2.
\end{align*}
Note that Design 2, without $i$-specific randomness or $j$-specific randomness, would lead to a degenerate asymptotic distribution if the algorithmic subsampling were not employed.

Table \ref{tab:simulation_results_separable} reports simulation results for $N=M=40$, $80$, $160$, $320$, and $640$.
The top panel reports results for Design 1 (non-degenerate case), and the bottom panel reports results for Design 2 (degenerate case). 
Each panel contains results based on no algorithmic subsampling (i.e., $p=1$)\footnote{The 95\% coverage is computed based on our asymptotic variance formula as the special case with $p=1$.} and results based on the algorithmic subsampling (with the subsampling probabilities of $p=1\underline{C}/(NM)$ and $p=2\underline{C}/(NM)$) for estimation of the mean. 
The asymptotic variance is estimated using a random subsample of ten percent of the sample.
The displayed statistics are the bias (Bias), the standard deviation (SD), the root mean square error (RMSE), and the 95\% coverage (95\% Cover).

\begin{table}[t]
	\centering
	\scalebox{0.85}{
		\begin{tabular}{cccccccccccccccc}
		\multicolumn{16}{c}{\large Design 1: Non-Degenerate Case}\\
		\hline\hline
		&& \multicolumn{4}{c}{\large No Algorithmic Subsampling} && \multicolumn{9}{c}{\large Algorithmic Subsampling} \\
		\cline{8-16}
		&& \multicolumn{4}{c}{($p=1$)} && \multicolumn{4}{c}{$p=1\underline{C}/(NM)$} && \multicolumn{4}{c}{$p=2\underline{C}/(NM)$}\\
		\cline{3-6}\cline{8-11}\cline{13-16}
		$N$ & $M$ & Bias & SD & RMSE & 95\% && Bias & SD & RMSE & 95\% && Bias & SD & RMSE & 95\% \\
		\hline
			40 & 40 & 	0.006 & 0.127 & 0.127 & \bf 0.885 && 	0.005 & 0.194 & 0.194 & \bf 0.926 && 	-0.002 & 0.155 & 0.155 & \bf 0.908 \\
			80 & 80 & 	-0.001 & 0.087 & 0.087 & \bf 0.902 && 	0.002 & 0.131 & 0.131 & \bf 0.925 && 	-0.002 & 0.110 & 0.110 & \bf 0.916 \\
			160&160&	-0.001 & 0.060 & 0.060 & \bf 0.918 && 	0.000 & 0.094 & 0.094 & \bf 0.925 && 	0.002 & 0.079 & 0.079 & \bf 0.923 \\
			320&320&	0.001 & 0.044 & 0.044 & \bf 0.916 && 	0.000 & 0.068 & 0.068 & \bf 0.932 && 	0.000 & 0.057 & 0.057 & \bf 0.927\\
			640&640&	0.001 & 0.032 & 0.032 & \bf 0.922 && 	0.000 & 0.047 & 0.047 & \bf 0.948 && 	-0.001 & 0.040 & 0.040 & \bf 0.934\\
		\hline\hline
		\\
		\end{tabular}
	}
	\scalebox{0.85}{
		\begin{tabular}{cccccccccccccccc}
		\multicolumn{16}{c}{\large Design 2: Degenerate Case}\\
		\hline\hline
		&& \multicolumn{4}{c}{\large No Algorithmic Subsampling} && \multicolumn{9}{c}{\large Algorithmic Subsampling} \\
		\cline{8-16}
		&& \multicolumn{4}{c}{($p=1$)} && \multicolumn{4}{c}{$p=1\underline{C}/(NM)$} && \multicolumn{4}{c}{$p=2\underline{C}/(NM)$}\\
		\cline{3-6}\cline{8-11}\cline{13-16}
		$N$ & $M$ & Bias & SD & RMSE & 95\% && Bias & SD & RMSE & 95\% && Bias & SD & RMSE & 95\%\\
		\hline
				40 & 40 & 	0.000 & 0.011 & 0.011 & \bf 0.999 && 	-0.001 & 0.105 & 0.105 & \bf 0.981 && 	-0.001 & 0.049 & 0.049 & \bf 0.986\\
				80 & 80 & 	0.000 & 0.006 & 0.006 & \bf 1.000 && 	-0.001 & 0.051 & 0.051 & \bf 0.963 && 	0.001 & 0.036 & 0.036 & \bf 0.970\\
				160&160&	0.000 & 0.003 & 0.003 & \bf 1.000 && 	0.001 & 0.036 & 0.036 & \bf 0.959 && 	0.000 & 0.026 & 0.026 & \bf 0.961\\
				320&320&	0.000 & 0.001 & 0.001 & \bf 1.000 && 	0.000 & 0.025 & 0.025 & \bf 0.959 && 	0.000 & 0.018 & 0.018 & \bf 0.960\\
				640&640&	0.000 & 0.001 & 0.001 & \bf 1.000 && 	0.000 & 0.018 & 0.018 & \bf 0.944 && 	0.000 & 0.013 & 0.013 & \bf 0.950\\
		\hline\hline
		\end{tabular}
	}
	\caption{Simulation results for the additively separable design with $N=M=40$, $80$, $160$, $320$, and $640$ based on 2,500 Monte Carlo iterations. The top panel reports results for Design 1 (non-degenerate case), whereas the bottom panel reports results for Design 2 (degenerate case). Each panel contains results based on no algorithmic subsampling ($p=1$), results based on the algorithmic subsampling with $p=1\underline{C}/(NM)$, and results based on the algorithmic subsampling with $p=2\underline{C}/(NM)$ for estimation of the mean. The displayed statistics are the bias (Bias), the standard deviation (SD), the root mean square error (RMSE), and the 95\% coverage (95\%).}${}$
	\label{tab:simulation_results_separable}
\end{table}

Observe that the 95\% coverage frequencies are closer to the nominal probability of 95\% with a use of the algorithmic subsampling than without a use of it.
This observation is robustly true in both Design 1 (non-degenerate case) and Design 2 (degenerate case).
For Design 2 or the degenerate case, in particular, the coverage frequency moves away from the nominal probability as the sample size increases if the algorithmic subsampling were not used.
On the other hand, the coverage frequency approaches the nominal probability as the sample size increase if the algorithmic subsampling is used.
These results demonstrate the aforementioned robustness property of the multiway algorithmic subsampling against potential degeneracy of the asymptotic distribution.
We also experimented with additional simulation settings with much larger $N$ and $M$ and other subsampling probabilities for the algorithmic subsampling variance estimation, but we observe the same qualitative patterns in the results under these alternative settings.

On the one hand, $p=1$ leads to more precision, as quantified by smaller RMSE.
On the other hand, $p=1$ leads to larger coverage as observed above.
These two phenomena may appear contradictory at first glance. 
The relevant issues are with the variance estimation, and not with the point estimates.
These results precisely highlight the cases of degeneracy. 
The asymptotic normality with the $\sqrt{\underline{C}}$-rate fails under the degeneracy if we do not use the algorithmic subsampling, i.e., if $p=1$. 
Therefore, the standard errors are misleadingly larger compared to the actual RMSE of the estimator and the simulated coverage rates exceed the nominal coverage probability in the degenerate case with $p=1$. This is the main reason why we propose to use the algorithmic subsampling (i.e., $p < 1$) to have inference with estimated variance robust against the degeneracy. 

\subsection{Nonseparable Designs}\label{sec:nonseparable}
Second, we generate the two-way clustered array $\{Y_{ij}\}_{i \in [N], j \in [M]}$ according to the non-additive model
\begin{align*}
Y_{ij} = (\alpha_i-\mu_a)(\beta_j-\mu_b) - \mu_a \mu_b + \varepsilon_{ij},
\end{align*}
where $\alpha_i$, $\beta_j$ and $\varepsilon_{ij}$ are i.i.d. standard normal.
With this basic setup, we consider two designs:
\begin{align*}
\text{Design 3}:& \ \mu_a=1.0, \text{ and } \mu_b=1.0; \qquad\text{and}\\
\text{Design 4}:& \ \mu_a=0.0, \text{ and } \mu_b=0.0.
\end{align*}
Note that Design 4 would lead to a degenerate asymptotic distribution that is a  Gaussian chaos, which is non-Gaussian \citep[cf.][]{menzel2017bootstrap}, if the algorithmic subsampling were not employed.

Table \ref{tab:simulation_results_nonseparable} reports simulation results for $N=M=40$, $80$, $160$, $320$, and $640$.
The top panel reports results for Design 3 (non-degenerate case), and the bottom panel reports results for Design 4 (degenerate case). 
Each panel contains results based on no algorithmic subsampling (i.e., $p=1$) and results based on the algorithmic subsampling (with the subsampling probabilities of $p=1\underline{C}/(NM)$ and $p=2\underline{C}/(NM)$) for estimation of the mean. 
The asymptotic variance is estimated using a random subsample of ten percent of the sample.
The displayed statistics are the bias (Bias), the standard deviation (SD), the root mean square error (RMSE), and the 95\% coverage (95\% Cover).

\begin{table}[t]
	\centering
	\scalebox{0.85}{
		\begin{tabular}{cccccccccccccccc}
		\multicolumn{16}{c}{\large Design 3: Non-Degenerate Case}\\
		\hline\hline
		&& \multicolumn{4}{c}{\large No Algorithmic Subsampling} && \multicolumn{9}{c}{\large Algorithmic Subsampling} \\
		\cline{8-16}
		&& \multicolumn{4}{c}{($p=1$)} && \multicolumn{4}{c}{$p=1\underline{C}/(NM)$} && \multicolumn{4}{c}{$p=2\underline{C}/(NM)$}\\
		\cline{3-6}\cline{8-11}\cline{13-16}
		$N$ & $M$ & Bias & SD & RMSE & 95\% && Bias & SD & RMSE & 95\% && Bias & SD & RMSE & 95\% \\
		\hline
			40 & 40 & 	-0.002 & 0.221 & 0.221 & \bf 0.943 && 	0.007 & 0.382 & 0.382 & \bf 0.955 && 	0.003 & 0.316 & 0.315 & \bf 0.944\\
			80 & 80 & 	0.008 & 0.160 & 0.160 & \bf 0.943 && 	-0.004 & 0.272 & 0.272 & \bf 0.949 && 	0.003 & 0.221 & 0.221 & \bf 0.956\\
			160&160&	-0.002 & 0.113 & 0.113 & \bf 0.942 && 	0.001 & 0.190 & 0.190 & \bf 0.952 && 	0.001 & 0.158 & 0.158 & \bf 0.945\\
			320&320&	0.000 & 0.079 & 0.079 & \bf 0.944 && 	0.004 & 0.140 & 0.140 & \bf 0.940 && 	0.004 & 0.110 & 0.110 & \bf 0.953\\
			640&640&	-0.002 & 0.055 & 0.055 & \bf 0.956 && 	-0.001 & 0.096 & 0.096 & \bf 0.954 && 	0.001 & 0.079 & 0.079 & \bf 0.949\\
		\hline\hline
		\\
		\end{tabular}
	}
	\scalebox{0.85}{
		\begin{tabular}{cccccccccccccccc}
		\multicolumn{16}{c}{\large Design 4: Degenerate Case}\\
		\hline\hline
		&& \multicolumn{4}{c}{\large No Algorithmic Subsampling} && \multicolumn{9}{c}{\large Algorithmic Subsampling} \\
		\cline{8-16}
		&& \multicolumn{4}{c}{($p=1$)} && \multicolumn{4}{c}{$p=1\underline{C}/(NM)$} && \multicolumn{4}{c}{$p=2\underline{C}/(NM)$}\\
		\cline{3-6}\cline{8-11}\cline{13-16}
		$N$ & $M$ & Bias & SD & RMSE & 95\% && Bias & SD & RMSE & 95\% && Bias & SD & RMSE & 95\% \\
		\hline
			40 & 40 & 	0.000 & 0.036 & 0.036 & \bf 1.000 && 	-0.004 & 0.221 & 0.221 & \bf 0.980 && 	-0.004 & 0.160 & 0.160 & \bf 0.980\\
			80 & 80 & 	0.000 & 0.018 & 0.018 & \bf 1.000 && 	-0.001 & 0.155 & 0.155 & \bf 0.971 && 	0.001 & 0.110 & 0.110 & \bf 0.975\\
			160&160&	0.000 & 0.009 & 0.009 & \bf 1.000 && 	0.002 & 0.113 & 0.113 & \bf 0.956 && 	-0.001 & 0.078 & 0.078 & \bf 0.966\\
			320&320&	0.000 & 0.005 & 0.005 & \bf 0.999 && 	0.000 & 0.078 & 0.078 & \bf 0.956 && 	0.000 & 0.057 & 0.057 & \bf 0.952\\
			640&640&	0.000 & 0.002 & 0.002 & \bf 0.999 && 	0.001 & 0.057 & 0.057 & \bf 0.946 && 	-0.001 & 0.039 & 0.039 & \bf 0.952\\
		\hline\hline
		\end{tabular}
	}
	\caption{Simulation results for the nonseparable design with $N=M=40$, $80$, $160$, $320$, and $640$ based on 2,500 Monte Carlo iterations. The top panel reports results for Design 3 (non-degenerate case), whereas the bottom panel reports results for Design 4 (degenerate case). Each panel contains results based on no algorithmic subsampling ($p=1$), results based on the algorithmic subsampling with $p=1\underline{C}/(NM)$, and results based on the algorithmic subsampling with $p=2\underline{C}/(NM)$ for estimation of the mean. The displayed statistics are the bias (Bias), the standard deviation (SD), the root mean square error (RMSE), and the 95\% coverage (95\%).}${}$
	\label{tab:simulation_results_nonseparable}
\end{table}

Similarly to the case with the additively separable design, observe that the 95\% coverage frequencies are closer to the nominal probability of 95\% with a use of the algorithmic subsampling than without a use of it.
This observation is robustly true in both Design 3 (non-degenerate case) and Design 4 (degenerate case).
For Design 4 or the degenerate case, in particular, the coverage frequency moves away from the nominal probability as the sample size increases if the algorithmic subsampling were not used.
On the other hand, the coverage frequency approaches the nominal probability as the sample size increase if the algorithmic subsampling is used.
As before, these results demonstrate the aforementioned robustness property of the multiway algorithmic subsampling against potential degeneracy of the asymptotic distribution.
We also experimented with additional simulation settings with much larger $N$ and $M$ and other subsampling probabilities for the algorithmic subsampling variance estimation, but we observe the same qualitative patterns in the results under these alternative settings.

\section{Application to Scanner Data}\label{sec:empirical}

In this section, we demonstrate an application of our proposed method to an analysis of demand for differentiated products using scanner data from the Dominick's Finer Foods (DFF) retail chain.\footnote{We thank James M. Kilts Center, University of Chicago Booth School of Business for allowing us to use this data set. It is available at https://www.chicagobooth.edu/research/kilts/datasets/dominicks.}
Scanner data may be subject to two-way cluster dependence, as mentioned in Section \ref{sec:introduction}.
Specifically, common demand shocks within a market may induce statistical dependence among different products within that market. 
Similarly, common supply shocks by a producer may induce statistical dependence among different markets within the product produced by that producer.
In this light, a researcher would like to use a two-way cluster robust variance estimate for inference about the model parameters.
However, the scanner data from the Dominick's Finer Foods (DFF) retail chain are too large, and today's computational resources will not permit the two-way cluster robust variance estimation in reasonable lengths of time.
A simple way to overcome this problem is to use the full sample for parameter estimation and to use a subsample for variance estimation, but this approach fails to deliver robustly valid inference.
Hence, we use our proposed multiway algorithmic subsampling method for estimation and two-way cluster robust inference about the key demand model parameter.

Following the literature \citep[for instance, see a survey by][]{nevo2000practitioner} on analysis of demand for differentiated products with an additive Type-I-Extreme-Value error, we use the GMM approach with the moment restriction
\begin{align}\label{eq:moment_demand}
g(W_{ij},\theta) = \zeta_{ij} (\ln(S_{ij}) - \ln(S_{0j}) - \ln(P_{ij})\theta_1 - X_{ij}^T\theta_{-1}),
\end{align}
where 
$i$ indexes products (universal product code, hereafter referred to as UPC),
$j$ indexes markets (store $\times$ week),
$S_{ij}$ denotes the share of product $i$ in market $j$,
$P_{ij}$ denotes the price,
$X_{ij}$ denotes a vector of controls (the UPC fixed effects and a time trend),
$\zeta_{ij}$ denotes instruments,
and $W_{ij} = (S_{ij},P_{ij},X_{ij}^T,\zeta_{ij}^T)^T$.\footnote{In case where the model involves product fixed effects, the algorithmic subsampling can be applied to within-transformation. This operation incurs additional computational costs, although this is a common issue in fixed-effect methods in general. In case a model involves two-way fixed effects, two-way differencing may induce a more complicated dependence structure especially under unbalanced panels. An alternative approach may be to use instrumental variables. We leave rigorous treatments of such a variety of extensions to fixed-effect models for future research.} 
In addition to the elements in $X_{ij}$, the instrument vector includes $\zeta_{ij}$ as an excluded variable the wholesale costs, which are calculated by inverting the gross margin.
We drop those observations for which $\ln(S_{ij}) - \ln(S_{0j})$ is not finite,\footnote{\label{foot:drop}In other words, we drop observations with the zero market share. Dropping these observations may generally incur a trimming bias. We adopt this trimming as it is a standard practice in the literature of demand analysis for differentiated products markets, and we consider the possibly biased estimand as our pseudo-true value.} as well as those observations with missing values.
The parameter vector in the model consists of $\theta = (\theta_1,\theta_{-1}^T)^T$, and we are in particular interested in the price coefficient $\theta_1$.

We consider four product categories: beer, oats, snacks, and canned tuna.
Table \ref{tab:summary} summarizes the sizes of the original data in terms of various dimensions.
It first shows the number of UPCs, the number of weeks, and the number of stores for each product category.
As we define a product as that identified by the UPC, the number of products $N$ coincides with the number of UPCs.
We define a market as the unique combination of the week and the store.
Therefore, the number of markets $M$ is close to, but is generally smaller than, the product of the number of weeks and the number of stores.
It is smaller than the na\"ive product because of the unbalancedness in data.
Finally, the bottom row shows the total number of observations, which is again smaller than the na\"ive product $NM$ because of the unbalancedness in data.

\begin{table}[t]
	\centering
		\begin{tabular}{lcccc}
		\hline\hline
		                         & Beer    & Oats & Snacks & Tuna\\
		\hline
			Number of UPCs         & 788     & 96      & 425     & 94\\
			Number of Weeks        & 303     & 306     & 386     & 375\\
			Number of Stores       & 89      & 93      & 94      & 93\\
			Number of Products $N$ & 788     & 96      & 425     & 94\\
			Number of Markets $M$  & 22,299  & 26,210  & 32,708  & 31,853\\
			Number of Observations &3,990,672&1,333,465&5,427,491&1,048,575\\
		\hline\hline
		\end{tabular}
	\caption{Data sizes of the four product categories: beer, oats, snacks, and canned tuna.}${}$
	\label{tab:summary}
\end{table}

We now apply our multiway algorithmic subsampling GMM with the moment function defined in \eqref{eq:moment_demand} for each of the four product categories.
Table \ref{tab:estimation_results} summarizes the estimation results. 
The table displays the probability $p$ of algorithmic subsampling, the corresponding estimates and their standard errors for the price coefficient, and computational time in seconds for each of parameter estimation and asymptotic variance estimation.

\begin{table}[t]
	\centering
	\scalebox{0.89}{
		\begin{tabular}{lcccccccc}
		\hline\hline
		                         & Beer  & Beer  & Oats  & Oats  &Snacks&Snacks& Tuna & Tuna\\
		\cline{2-9}
	 	$p$ & $\frac{100\underline{C}}{NM}$ & $\frac{200\underline{C}}{NM}$ & $\frac{100\underline{C}}{NM}$ & $\frac{200\underline{C}}{NM}$ & $\frac{100\underline{C}}{NM}$ & $\frac{200\underline{C}}{NM}$ & $\frac{100\underline{C}}{NM}$ &$\frac{200\underline{C}}{NM}$ \\
			  & 0.004 & 0.009 & 0.004 & 0.008 & 0.003 & 0.006 & 0.003 & 0.006\\
		\hline
	Price Coefficient $^\dagger$&-0.223$^{\ast\ast}$ &-0.334$^{\ast\ast\ast}$ &-1.186$^{\ast\ast\ast}$ &-1.273$^{\ast\ast\ast}$ &-1.155$^{\ast\ast\ast}$ &-1.105$^{\ast\ast\ast}$ &-1.605$^{\ast}$ &-0.936$^{\ast}$\\
			                       &(0.102)&(0.066)&(0.173)&(0.103)&(0.159)&(0.151)&(0.985)&(0.500)\\
Computational Time $^\ddagger$\\\cline{1-1}
Parameter Estimation & 7.313 & 13.952 & 0.081 & 0.129 & 2.525 & 5.582 & 0.063 & 0.092\\
Variance Estimation & 1223 & 4458 & 34 & 189 & 676 & 2901 & 10 & 46\\
		\hline\hline
		\end{tabular}
	}
	\caption{Results of the estimation of the price coefficient. $^\dagger$ The standard errors are shown in parentheses under the estimates. ***\ p$<$0.01, **\ p$<$0.05, *\ p$<$0.10. $^\ddagger$ Computational time is expressed in seconds based on a single processor of Intel Xeon Processor E5-2687W V4.}${}$
	\label{tab:estimation_results}
\end{table}

First, observe that the estimates of the price coefficient are negative, as expected, and are statistically significant at the level of 95\% for each column except for tuna despite efficiency loss due to the algorithmic subsampling and despite the two-way cluster robustness in the asymptotic variance.
As emphasized in Sections \ref{sec:multiway_algorithmic_subsampling} and \ref{sec:simulation}, the algorithmic subsampling with $p \propto \underline{C}/(NM)$ allows these standard errors to have asymptotically accurate coverage robustly against potential degeneracy, unlike the conventional two-way cluster robust standard errors without the algorithmic subsampling.

Second, the computational time for parameter estimation is within about a dozen of seconds for each column, given that the algorithmic subsampling extracts only the proportions, $p \approx 0.003--0.009$, of the original
sample sizes. 
However, it is the asymptotic variance estimation that costs more computational time under multiway cluster dependence. 
Focusing on the beer product category, for instance, even the
algorithmic subsampling that extracts only the $p \approx 0.004$ portion of the original sample size requires 1223 seconds of computation for variance estimation. When the proportion doubles to $p \approx 0.009$, then the computational time nearly quadruples to 4458 seconds. A na\"ive calculation implies that the use of the full sample without the algorithmic subsampling would require about three years. 

\section{Conclusion}\label{sec:conclusion}
In this paper, we propose a novel method of algorithmic subsampling for multiway cluster dependent data.
We develop asymptotic statistical properties of this proposed method.
Specifically, we develop a new uniform weak law of large numbers and a new central limit theorem for the multiway algorithmic subsample means.
As a consequence of the new central limit theorem, we show that the algorithmic subsampling allows for robustness against potential degeneracy of the asymptotic distribution under multiway clustering at the cost of efficiency and power loss due to the algorithmic subsampling.
Applying these basic asymptotic statistical theories, we derive the consistency and the asymptotic normality for the multiway algorithmic subsampling generalized method of moments estimator and for the multiway algorithmic subsampling M-estimator.

Our main finding that the algorithmic subsampling allows for the robustness against degeneracy in the asymptotic distribution is novel in the literature on multiway clustering.
Indeed, the method of inference by \citet*{mackinnon2019wild} as well as \citet*{cameron2012robust} adapts to the Gaussian degeneracy. 
However, these existing methods do not adapt to the class of non-Gaussian degenerate asymptotic distributions.
In contrast, the asymptotic distribution under the algorithmic subsampling adapts even to the non-Gaussian degeneracy as well.
The bootstrap method of \citet*{menzel2017bootstrap} is robust against the non-Gaussian degeneracy.
Our proposed method via the algorithmic subsampling leads to the exact limit distribution, and thus non-conservative inference, unlike the method of \citet*{menzel2017bootstrap}.
With these said, we once again emphasize that these merits come at the cost of efficiency and power loss by disposing parts of big data.

Finally, we shed some light on possible future directions. In this paper we consider non-nested multiway clustering \citep*[as in][]{cameron2012robust}. In practice, the researcher may be interested in applications with nested clustering in one or more cluster dimensions. Under the current framework, one could take the coarsest levels of clustering.  Handling it in a more efficient way is a useful topic but is out of the scope of this paper. In addition, in \citet*{mackinnon2020testing}, formal theory is developed for testing the correct level of (one-way) clustering. One could consider to generalize such test for multiway nested clustering, which is also left for future research.

\begin{appendix}
\section*{Appendix}

Throughout this appendix, for any arrays $(a_{NM})$ and $(b_{NM})$, denote $a_{NM} \lesssim b_{NM}$ for $a_{NM}\le C b_{NM}$ for some positive constant $C$ independent of sample size.

\section{Choice of the Subsample Size}\label{sec:choice}

Theorem \ref{theorem:clt} provides a guidance on rates at which $p$ should converge in order to guarantee the robustness against degeneracy under multiway cluster sampling.
Specifically, $p = p_{MN}$ should be chosen so that $\Lambda = \lim_{N,M \rightarrow \infty} (\underline{C}/(NM))((1-p)/p) > 0$ holds.
To this goal, it is in particular sufficient to choose
\begin{align*}
 p = c\frac{\underline{C}}{NM}
\text{ for some } c > 0.
\end{align*}
For our asymptotic properties with the robustness, any choice of a positive constant $c$ works in theory.
Simulation studies presented in Section \ref{sec:simulation} demonstrate that even the na\"ive choices, such as $c=1$ and $c=2$, result in excellent finite-sample performances across various alternative data generating designs.

That said, it is also useful as well to provide a data-driven method to choose $c$ based on a well-defined criterion.
In this section, we propose a method to this end following the idea of power analysis which is often employed to determine experimental sample size.
Suppose that a researcher has in mind a maximum tolerable level $V_{\max}$ of the approximate variance $\Gamma/\underline{C}$ of $\hat L^{-1}\sum_{i=1}^{N}\sum_{j=1}^{M}Z_{ij}f\left(W_{ij}\right)$ in the asymptotic normal approximation by Theorem \ref{theorem:clt}.

First, choose a preliminary positive value of $c^{\text{pre}}$, set $p^{\text{pre}} = c^{\text{pre}} \underline{C}/(NM)$, generate i.i.d Bernoulli$\left(p^{\text{pre}}\right)$ random variables $\{Z^{\text{pre}}_{ij}: 1 \le i \le N, 1 \le j \le M\}$ independently from data, and set $\hat L^{\text{pre}} = \sum_{i=1}^N \sum_{j=1}^M Z^{\text{pre}}_{ij}$.
Then, estimate $\Gamma_A=\lambda_1 E\left[f\left(W_{11}\right)f^T\left(W_{12}\right)\right]+\lambda_2 E\left[f\left(W_{11}\right)f^T\left(W_{21}\right)\right]$ and $\Gamma_B=E\left[f\left(W_{11}\right)f^T\left(W_{11}\right)\right]$ by
$$
\widehat \Gamma^{\text{pre}}_A=\frac{\underline{C}}{(\widehat L^{\text{pre}})^2}\sum_{i=1}^{N}\sum_{1\leq j,j'\leq M}Z^{\text{pre}}_{ij}Z^{\text{pre}}_{ij'}f\left(W_{ij}\right)f\left(W_{ij'}\right)+ \frac{\underline{C}}{ (\widehat L^{\text{pre}})^2}\sum_{1\leq i, i'\leq N}\sum_{j=1}^MZ^{\text{pre}}_{ij}Z^{\text{pre}}_{i'j}f\left(W_{ij}\right)f\left(W_{i'j}\right)
$$ 
and 
$$
\widehat \Gamma^{\text{pre}}_B= \frac{1}{\hat L^{\text{pre}}}\sum_{i=1}^{N}\sum_{j=1}^{M}Z^{\text{pre}}_{ij}f\left(W_{ij}\right)f\left(W_{ij}\right),
$$ 
respectively.
Finally, solve
\begin{align*}
\underline{C} V_{\max} 
=
\widehat \Gamma^{\text{pre}}_A + \frac{\underline{C}}{NM} \frac{NM-c\underline{C}}{c\underline{C}} \widehat \Gamma^{\text{pre}}_B
\end{align*}
for $c$ to find the value $c^\ast$ of $c$.
This plug-in procedure yields the subsample size rate  $p^{\text{pre}} = c^{\text{pre}} \underline{C}/(NM)$, under which the approximate variance $\Gamma/\underline{C}$ of $\hat L^{-1}\sum_{i=1}^{N}\sum_{j=1}^{M}Z_{ij}f\left(W_{ij}\right)$ is close to the target level $V_{\max}$.
Note the similarity of this procedure to the power analysis for sample size calculation, which is often employed by experimental researchers.

We remark that this proposed procedure of choosing the subsample size differs from that proposed by \citet[][Section 7.2]{LeeNg2020ARE}.
This difference in the approaches taken is due to the different goals under different dependence structures.
\citet{LeeNg2020ARE} base their requirement for the subsample size on a condition that guarantees the subspace embedding \citep[][Definition 1]{LeeNg2020ARE}.
On the other hand, we base our requirement for the subsample size on attaining the robustness against degeneracy under multiway cluster sampling.

While the simulation studies presented in Section \ref{sec:simulation} are based fixed $c \in \{1,2\}$, we now present simulation results under the above choice rule of the subsample size.
We set $V_{\max} = 0.5/\underline{C}$ throughout, and start with $c=1$ for preliminary estimation of $\widehat\Gamma_A^{\text{pre}}$ and $\widehat\Gamma_B^{\text{pre}}$.
Focusing on the degenerate case, Table \ref{tab:simulation_results_separable_datadrivenp} summarizes simulation results under the additively separable designs, as the counterpart of Table \ref{tab:simulation_results_separable} in the main text.
Similarly, focusing on the degenerate case, Table \ref{tab:simulation_results_nonseparable_datadrivenp} summarizes simulation results under the nonseparable designs, as the counterpart of Table \ref{tab:simulation_results_nonseparable} in the main text.
Overall, we observe qualitatively similar patterns here to those presented in the main text.

\begin{table}
	\centering
	\scalebox{0.85}{
		\begin{tabular}{cccccccccccccccc}
		\multicolumn{11}{c}{\large Design 2: Degenerate Case}\\
		\hline\hline
		&& \multicolumn{4}{c}{\large No Algorithmic Subsampling} && \multicolumn{4}{c}{\large Algorithmic Subsampling} \\
		\cline{3-6}\cline{8-11}
		&& \multicolumn{4}{c}{($p=1$)} && \multicolumn{4}{c}{$p = c^\ast \underline{C}/(NM)$} \\
		\cline{3-6}\cline{8-11}
		$N$ & $M$ & Bias & SD & RMSE & 95\% && Bias & SD & RMSE & 95\% \\
		\hline
				40 & 40 & 	0.000 & 0.011 & 0.011 & \bf 0.999 && 0.001 & 0.071 & 0.071 & \bf 0.987\\
				80 & 80 & 	0.000 & 0.006 & 0.006 & \bf 1.000 && 0.001 & 0.050 & 0.050 & \bf 0.972\\
				160&160&	0.000 & 0.003 & 0.003 & \bf 1.000 &&-0.001 & 0.036 & 0.036 & \bf 0.964\\
				320&320&	0.000 & 0.001 & 0.001 & \bf 1.000 & &0.000 & 0.025 & 0.025 & \bf 0.953\\
				640&640&	0.000 & 0.001 & 0.001 & \bf 1.000 && 0.001 & 0.017 & 0.017 & \bf 0.955\\
		\hline\hline
		\end{tabular}
	}
	\caption{Simulation results for the additively separable design with $N=M=40$, $80$, $160$, $320$ and $640$ based on 2,500 Monte Carlo iterations.
	Each panel contains results based on no algorithmic subsampling ($p=1$) and results based on the algorithmic subsampling with $p = c^\ast \underline{C}/(NM)$, for estimation of the mean. The displayed statistics are the bias (Bias), the standard deviation (SD), the root mean square error (RMSE), and the 95\% coverage (95\%).}${}$
	\label{tab:simulation_results_separable_datadrivenp}
\end{table}

\begin{table}
	\centering
	\scalebox{0.85}{
		\begin{tabular}{cccccccccccccccc}
		\multicolumn{11}{c}{\large Design 4: Degenerate Case}\\
		\hline\hline
		&& \multicolumn{4}{c}{\large No Algorithmic Subsampling} && \multicolumn{4}{c}{\large Algorithmic Subsampling} \\
		\cline{3-6}\cline{8-11}
		&& \multicolumn{4}{c}{($p=1$)} && \multicolumn{4}{c}{$p = c^\ast \underline{C}/(NM)$} \\
		\cline{3-6}\cline{8-11}
		$N$ & $M$ & Bias & SD & RMSE & 95\% && Bias & SD & RMSE & 95\% \\
		\hline
			40 & 40 & 	0.000 & 0.036 & 0.036 & \bf 1.000 && 0.000 & 0.084 & 0.084 & \bf 0.990\\
			80 & 80 & 	0.000 & 0.018 & 0.018 & \bf 1.000 && 0.000 & 0.067 & 0.067 & \bf 0.984\\
			160&160&	0.000 & 0.009 & 0.009 & \bf 1.000 &&-0.001 & 0.051 & 0.051 & \bf 0.966\\
			320&320&	0.000 & 0.005 & 0.005 & \bf 0.999 && 0.001 & 0.038 & 0.038 & \bf 0.957\\
			640&640&	0.000 & 0.002 & 0.002 & \bf 0.999 &&-0.000 & 0.027 & 0.027 & \bf 0.958\\
		\hline\hline
		\end{tabular}
	}
	\caption{Simulation results for the nonseparable design with $N=M=40$, $80$, $160$, $320$ and $640$ based on 2,500 Monte Carlo iterations. 
	Each panel contains results based on no algorithmic subsampling ($p=1$) and results based on the algorithmic subsampling with $p = c^\ast \underline{C}/(NM)$, for estimation of the mean. The displayed statistics are the bias (Bias), the standard deviation (SD), the root mean square error (RMSE), and the 95\% coverage (95\%).}${}$
	\label{tab:simulation_results_nonseparable_datadrivenp}
\end{table}

\section{Additional Simulation Results}

The simulation studies presented in Section \ref{sec:simulation} in the main text and Appendix \ref{sec:choice} compare inference results only across those methods that assume two-way clustering.
This section extends these simulation analyses by comparing the finite-sample performance of our proposed method with more conventional methods that assume i.i.d. sampling and one-way clustering as well as two-way clustering.

We continue to use the same simulation designs from Section \ref{sec:simulation}.
Namely, data are generated according to Designs 1--4.
Sample sizes are varied as $N=M=$ 40, 80, 160, 320 and 640.
Each set of simulations consists of 2,500 Monte Carlo iterations.
Unlike Section \ref{sec:simulation}, however, we also compute 95\% coverage frequencies with the Eicker-Huber-White robust variance estimator (0-Way Cluster) and the conventional one-way cluster-robust variance estimator (1-Way Cluster) in addition to the two-way cluster-robust variance estimator (2-Way Custer).
Tables \ref{tab:simulation_results_separable1_zeroonetwo} and \ref{tab:simulation_results_separable2_zeroonetwo} summarize the results for Designs 1--2 and Designs 3--4, respectively.

\begin{table}
	\centering
	\scalebox{1}{
\begin{tabular}{cccccccccc}
\multicolumn{10}{c}{Design 1: Non-Degenerate Case}\\
\hline\hline
&& \multicolumn{5}{c}{} && \multicolumn{2}{c}{Algorithmic}\\
&& \multicolumn{5}{c}{No Algorithmic Subsampling} && \multicolumn{2}{c}{Subsampling}\\
\cline{3-7}\cline{9-10}
&& 0-Way && 1-Way && \multicolumn{4}{c}{2-Way Cluster}\\
\cline{7-10}
$N$ & $M$ & Cluster && Cluster && $p=1$ && $p=1\underline{C}/({NM})$ & $p=2\underline{C}/({NM})$ \\
\hline
40 & 40 &     0.270 && 0.834 && 0.885 && 0.926 & 0.908\\
80 & 80 &     0.192 && 0.837 && 0.902 && 0.925 & 0.916\\
160 & 160 & 0.140 && 0.854 && 0.918 && 0.925 & 0.923\\
320 & 320 & 0.097 && 0.846 && 0.916 && 0.932 & 0.927\\
640 & 640 & 0.074 && 0.845 && 0.922 && 0.948 & 0.934\\
\hline\hline
\end{tabular}
	}

\scalebox{1}{
\begin{tabular}{cccccccccc}
\multicolumn{10}{c}{Design 2: Degenerate Case}\\
\hline\hline
&& \multicolumn{5}{c}{} && \multicolumn{2}{c}{Algorithmic}\\
&& \multicolumn{5}{c}{No Algorithmic Subsampling} && \multicolumn{2}{c}{Subsampling}\\
\cline{3-7}\cline{9-10}
&& 0-Way && 1-Way && \multicolumn{4}{c}{2-Way Cluster}\\
\cline{7-10}
$N$ & $M$ & Cluster && Cluster && $p=1$ && $p=1\underline{C}/({NM})$ & $p=2\underline{C}/({NM})$ \\
\hline
40 & 40 &     0.952 && 1.000 && 0.999 && 0.981 & 0.986\\
80 & 80 &     0.960 && 1.000 && 1.000 && 0.963 & 0.970\\
160 & 160 & 0.945 && 1.000 && 1.000 && 0.959 & 0.961\\
320 & 320 & 0.948 && 1.000 && 1.000 && 0.959 & 0.960\\
640 & 640 & 0.951 && 1.000 && 1.000 && 0.944 & 0.950\\
\hline\hline
\end{tabular}
	}
	\caption{95\% coverage frquencies of various inference methods in the additively separable designs with  $N=M=40$, $80$, $160$, $320$ and $640$ based on 2,500 Monte Carlo iterations.}${}$
	\label{tab:simulation_results_separable1_zeroonetwo}
\end{table}
\begin{table}
	\centering
	\scalebox{1}{
\begin{tabular}{cccccccccc}
\multicolumn{10}{c}{Design 3: Non-Degenerate Case}\\
\hline\hline
&& \multicolumn{5}{c}{} && \multicolumn{2}{c}{Algorithmic}\\
&& \multicolumn{5}{c}{No Algorithmic Subsampling} && \multicolumn{2}{c}{Subsampling}\\
\cline{3-7}\cline{9-10}
&& 0-Way && 1-Way && \multicolumn{4}{c}{2-Way Cluster}\\
\cline{7-10}
$N$ & $M$ & Cluster && Cluster && $p=1$ && $p=1\underline{C}/({NM})$ & $p=2\underline{C}/({NM})$ \\
\hline
40 & 40 &     0.341 && 0.914 && 0.943 && 0.955 & 0.944\\
80 & 80 &     0.240 && 0.927 && 0.943 && 0.949 & 0.956\\
160 & 160 & 0.170 && 0.924 && 0.942 && 0.952 & 0.945\\
320 & 320 & 0.122 && 0.916 && 0.944 && 0.940 & 0.953\\
640 & 640 & 0.089 && 0.923 && 0.956 && 0.954 & 0.949\\
\hline\hline
\end{tabular}
	}

\scalebox{1}{
\begin{tabular}{cccccccccc}
\multicolumn{10}{c}{Design 4: Degenerate Case}\\
\hline\hline
&& \multicolumn{5}{c}{} && \multicolumn{2}{c}{Algorithmic}\\
&& \multicolumn{5}{c}{No Algorithmic Subsampling} && \multicolumn{2}{c}{Subsampling}\\
\cline{3-7}\cline{9-10}
&& 0-Way && 1-Way && \multicolumn{4}{c}{2-Way Cluster}\\
\cline{7-10}
$N$ & $M$ & Cluster && Cluster && $p=1$ && $p=1\underline{C}/({NM})$ & $p=2\underline{C}/({NM})$ \\
\hline
40 & 40 &     0.950 && 1.000 && 1.000 && 0.980 & 0.980\\
80 & 80 &     0.940 && 1.000 && 1.000 && 0.971 & 0.975\\
160 & 160 & 0.946 && 1.000 && 1.000 && 0.956 & 0.966\\
320 & 320 & 0.946 && 1.000 && 0.999 && 0.956 & 0.952\\
640 & 640 & 0.953 && 1.000 && 0.999 && 0.946 & 0.952\\
\hline\hline
\end{tabular}
	}
	\caption{95\% coverage frequencies of various inference methods in the nonseparable designs with $N=M=40$, $80$, $160$, $320$ and $640$ based on 2,500 Monte Carlo iterations.}${}$
	\label{tab:simulation_results_separable2_zeroonetwo}
\end{table}

In each of the these two tables, we make the following observations.
First, the 0-Way Cluster method suffers from severe under-coverage across all the sample sizes under the non-degenerate designs. 
Second, the 1-Way Cluster method suffers from under-coverage across all the sample sizes under the non-degenerate designs,
while it in contrast suffers from over-coverage across all the sample sizes under the degenerate designs.

Third, comparisons between the 2-Way Cluster method without algorithmic subsampling and the 2-Way Cluster method with algorithmic subsampling remain the same as those presented in Section \ref{sec:simulation} in the main text.
In particular, we conclude that the 2-Way Cluster with algorithmic subsampling is the only approach that delivers correct coverage across all the designs.
 
\section{Proofs of the Main Results}

\subsection{Proof of Lemma \ref{lemma:consistency}}\label{sec:lemma:consistency}
\begin{proof}
By the definition of $Z_{ij}$, it can be written as $Z_{ij}=\1\{U_{ij} \le p_{NM}\}$ for some i.i.d. $U_{ij}\sim \text{Unif}(0,1)$ independent from the data. 
Define
$\tilde{\mathcal{F}} = \left\{  (u,w) \mapsto f(w):  f \in \mathcal{F} \right\}$
and
$\tilde{\mathcal{G}}_{NM} = \{(u,w)\overset{\tilde g_{{NM}}}{\mapsto} \1(u\le p_{NM})/p_{NM}\}$.
Note that Assumption \ref{a:moment} (ii)--(iii) for $\mathcal{F}$ implies that the same conditions hold with $\tilde{\mathcal{F}}$ in place of $\mathcal F$.
Also, note that for each $(N,M)$, $\tilde{\mathcal{G}}_{NM}$ consists of a single function with itself as an envelope.
Therefore, by Theorem 9.15 in \citet{kosorok2008introduction}, for $\tilde g_{NM} \tilde{\mathcal{F}}=\{\tilde g_{NM}  f : f\in \tilde{\mathcal F}\}$, we have that
\begin{equation*}
\begin{aligned}
&\sup_{Q}N\left(\tilde g_{NM}  \tilde{\mathcal{F}},\left\|\cdot\right\|_{Q,2},\sqrt{2}\epsilon \left\|\tilde g_{NM} F\right\|_{Q,2}\right)
\leq \sup_{Q}N\left(\tilde{\mathcal{F}},\left\|\cdot\right\|_{Q,2},\epsilon \left\|F\right\|_{Q,2}\right) 1
<\infty
\end{aligned}  
\end{equation*}
uniformly over $(N,M)$ for any finite discrete measure $Q$ and $\epsilon\in(0,1]$.
Note that $\tilde g_{NM} \tilde{\mathcal{F}}$ satisfies Assumption \ref{a:moment} (ii)--(iii).
Under Assumptions \ref{a:sampling} and \ref{a:moment} (ii)--(iii), therefore, we can apply Lemma \ref{lemma:d12} (Appendix \ref{sec:useful_lemmas}) to $\tilde g_{NM} \tilde{\mathcal{F}}$, and then apply the Markov inequality to get
$$ \sup_{f\in \mathcal{F}}\left|\frac{1}{NM}\sum_{i=1}^{N}\sum_{j=1}^{M}\frac{Z_{ij}}{p_{NM}}f\left(W_{ij}\right)- \frac{1}{p_{NM}}E\left[Z_{11}f\left(W_{11}\right)\right]\right|\overset{P}{\rightarrow} 0.$$
Since $E\left[Z_{11}f\left(W_{11}\right)\right]=E\left[Z_{11}\right]E\left[f\left(W_{11}\right)\right]=p_{NM}E\left[f\left(W_{11}\right)\right],$
we in turn obtain
$$
\sup_{f\in \mathcal{F}}\left| \frac{1}{L}\sum_{i=1}^{N}\sum_{j=1}^{M}Z_{ij}f\left(W_{ij}\right)- E\left[f\left(W_{11}\right)\right]\right|\overset{P}{\rightarrow} 0.
$$
Finally, Lemma \ref{lemma:inequality} (Appendix \ref{sec:useful_lemmas}) implies that $\hat L/L\overset{P}{\rightarrow} 1$, 
and thus 
$$ 
\sup_{f\in \mathcal{F}}\left|\frac{1}{\hat L}\sum_{i=1}^{N}\sum_{j=1}^{M}Z_{ij}f\left(W_{ij}\right)- E\left[f\left(W_{11}\right)\right]\right|\overset{P}{\rightarrow} 0.
$$
This completes the proof.
\end{proof}

\subsection{Proof of Theorem \ref{theorem:clt}}\label{sec:theorem:clt}
\begin{proof}
Consider the decomposition of $\hat L^{-1}\sum_{i=1}^{N}\sum_{j=1}^{M}Z_{ij}f\left(W_{ij}\right)$ into two terms as
\begin{align}
\frac{1}{\hat L}\sum_{i=1}^{N}\sum_{j=1}^{M}Z_{ij}f\left(W_{ij}\right)=\frac{L}{\hat L}\frac{1}{L}\sum_{i=1}^{N}\sum_{j=1}^{M}Z_{ij}f\left(W_{ij}\right)=\frac{L}{\hat L}\left(A_{NM}+\sqrt{1-p_{NM}}B_{NM}\right),\label{eq:CLT_decomp}
\end{align}
where $A_{NM}$ and $B_{NM}$ are defined by
~\\
\begin{equation*}
    A_{NM}=\frac{1}{NM}\sum_{i=1}^{N}\sum_{j=1}^{M}f\left(W_{ij}\right) \quad \text{and} \quad  B_{NM}=\frac{1}{L}\sum_{i=1}^{N}\sum_{j=1}^{M}
\frac{Z_{ij}-p_{NM}}{\sqrt{1-p_{NM}}}f\left(W_{ij}\right),
\end{equation*}
respectively.

The first step is to get the asymptotic normality for $A_{NM}$. 
Our setup satisfies the first part of Assumption 3 in \citet{davezies2018asymptotic}, since class $\mathcal{F}$ is finite and $E\left[F^2\right]<\infty$. 
Under our Assumptions \ref{a:sampling}, \ref{a:moment} (i), (iii), and \ref{a:scalar}, applying Theorem 3.1 of \citet{davezies2018asymptotic} yields
\begin{equation}\label{eq:a_normal}
\sqrt{\underline{C}}A_{NM}\stackrel{d}{\rightarrow}N\left(0,\lambda_1 E\left[f\left(W_{11}\right)f^T\left(W_{12}\right)\right]+\lambda_2 E\left[f\left(W_{11}\right)f^T\left(W_{21}\right)\right]\right).
\end{equation}

Next, we will obtain the variance-covariance matrix of $B_{NM}$. By the law of total covariance,
\begin{equation*}
\begin{aligned}
\cov\left(B_{NM}, B_{NM}\right)
=&E\left[\cov\left(B_{NM},B_{NM}\middle|\left\{W_{ij}\right\}_{i \in [N],j\in [M]}\right)\right] 
\\
&+\cov\left(E\left[B_{NM}|\{W_{ij}\}_{i \in [N],j\in [M]}\right], E\left[B_{NM}\middle|\{W_{ij}\}_{i \in [N],j\in [M]}\right]\right).
\end{aligned}
\end{equation*}
For the first term, we can write
\begingroup
\allowdisplaybreaks
\begin{align*}
&E\left[\cov\left(B_{NM},B_{NM}\middle|\left\{W_{ij}\right\}_{i \in [N],j\in [M]}\right)\right]\\
&= E\left[\frac{1}{L^2(1-p_{NM})}\sum_{i=1}^{N}\sum_{j=1}^{M}f\left(W_{ij}\right)f^T\left(W_{ij}\right)\var\left(
Z_{ij} \lvert \{W_{ij}\}_{i \in [N],j\in [M]}\right)\right]\\
&=E\left[\frac{1}{L^2(1-p_{NM})}\sum_{i=1}^{N}\sum_{j=1}^{M}f\left(W_{ij}\right)f^T\left(W_{ij}\right)p_{NM}(1-p_{NM})\right]\\
&=\frac{p_{NM}}{L^2}\sum_{i=1}^{N}\sum_{j=1}^{M}E\left[f\left(W_{ij}\right)f^T\left(W_{ij}\right)\right]\\
&=\frac{1}{L}E\left[f\left(W_{ij}\right)f^T\left(W_{ij}\right)\right].
\end{align*}
\endgroup
For the last term, note that
\begin{equation*}
\begin{aligned}
E\left[B_{NM}\middle|\left\{W_{ij}\right\}_{i \in [N],j\in [M]}\right]&=\frac{1}{L\sqrt{1-p_{NM}}}\sum_{i=1}^{N}\sum_{j=1}^{M}f\left(W_{ij}\right)E\left[(Z_{ij}-p_{NM})|\{W_{ij}\}_{i \in [N],j\in [M]}\right]\\
&=\frac{1}{L\sqrt{1-p_{NM}}}\sum_{i=1}^{N}\sum_{j=1}^{M}f\left(W_{ij}\right)E\left[(Z_{ij}|\{W_{ij}\}_{i \in [N],j\in [M]})-p_{NM}\right]\\
&=\frac{1}{L\sqrt{1-p_{NM}}}\sum_{i=1}^{N}\sum_{j=1}^{M}f\left(W_{ij}\right)\left(p_{NM}-p_{NM}\right)=0.
\end{aligned}
\end{equation*}
Therefore,$$\cov\left(E\left[B_{NM}|\{W_{ij}\}_{i \in [N],j\in [M]}\right], E\left[B_{NM}\middle|\{W_{ij}\}_{i \in [N],j\in [M]}\right]\right)=0.$$
It thus follows that
\begin{equation}\label{eq:b_normal}
\begin{aligned}
\cov\left(B_{NM},B_{NM}\right)=\frac{1}{L}E\left[f\left(W_{ij}\right)f^T\left(W_{ij}\right)\right]=\frac{1}{L}E\left[f\left(W_{11}\right)f^T\left(W_{11}\right)\right],
\end{aligned}   
\end{equation}
where the second equality holds by Assumption \ref{a:sampling} (i).

We now show that the term $(A_{NM}+\sqrt{1-p_{NM}} B_{NM})$ is asymptotically normal.
Pick any $q=\left(q_1,...,q_k\right)^T\in \mathbb{R}^k$.
For a given bounded sequence $\left\{a_{ij}\right\}$, define
$$
Y_{NM,L}=\frac{1}{\sqrt{L}}\sum_{i=1}^{N}\sum_{j=1}^{M}
\frac{\left(Z_{ij}-p_{NM}\right)a_{ij}}{\sqrt{1-p_{NM}}} \quad \textrm{and} \quad \alpha^2_{NM}=\frac{1}{NM}\sum_{i=1}^{N}\sum_{j=1}^{M}a^2_{ij}.
$$ 
And 
suppose $f(W_{ij})$ is bounded, by applying Lemma 2 of \citet{Janson1984} with $a_{ij}=q^Tf\left(W_{ij}\right),$ conditionally on $\{W_{ij}\}_{i\in [N],j\in [M]}$, we obtian
$$ E\left(e^{it Y_{NM,L}}\middle|\left\{W_{ij}\right\}_{i\in [N],j\in [M]}\right)-e^{-t^2\alpha^2_{NM}/2}\rightarrow 0.$$
Meanwhile, $\left(NM\right)^{-1}\sum_{i=1}^{N}\sum_{j=1}^{M}f^T\left(W_{ij}\right)qq^Tf\left(W_{ij}\right) \overset{P}\rightarrow E\left[f^T\left(W_{11}\right)qq^Tf\left(W_{11}\right)\right]$, and thus $\alpha^2_{NM} \overset{P}\rightarrow \alpha^2$, where $\alpha^2=E\left[f^T\left(W_{11}\right)qq^Tf\left(W_{11}\right)\right]$, so that the above conditional characteristic function converges to $e^{-t^2\alpha^2/2}$.
Thus, conditionally on  $\left\{W_{ij}\right\}_{i\in [N],j\in [M]}$, we have $\sqrt{L} q^T B_{NM}/\alpha\stackrel{d}{\to}  N(0,1)$. Also note that conditional on $\left\{W_{ij}\right\}_{i\in [N],j\in [M]}$, $A_{NM}$ is deterministic. In addition, we have already shown that $\sqrt{\underline{C}}A_{NM}$ is (unconditionally) asymptotically normal as in \eqref{eq:a_normal}.  Therefore, an application\footnote{We thank a reviewer for suggesting this proof strategy, which simplifies the proof.} of Theorem 2 in \cite{chen2007asymptotic} yields that
\begin{equation}
\begin{aligned}
\sqrt{\underline{C}}\frac{1}{ L}\sum_{i=1}^{N}\sum_{j=1}^{M}Z_{ij}q^T f\left(W_{ij}\right)&=\sqrt{\underline{C}}\frac{1}{NM}\sum_{i=1}^{N}\sum_{j=1}^{M}q^Tf\left(W_{ij}\right)+\frac{\sqrt{\underline{C}}}{\sqrt{L}}\sqrt{1-p_{NM}}\frac{1}{\sqrt{L}}\sum_{i=1}^{N}\sum_{j=1}^{M}
\frac{Z_{ij}-p_{NM}}{\sqrt{1-p_{NM}}}q^Tf\left(W_{ij}\right)\\
&\overset{d}{\rightarrow}N\left(0,q^T\Gamma q\right),
\end{aligned}
\end{equation}
recall that $\Gamma=\Gamma _A+\Lambda \Gamma_B$ with
$
\Gamma_A=\lambda_1 E\left[f\left(W_{11}\right)f^T\left(W_{12}\right)\right]+\lambda_2 E\left[f\left(W_{11}\right)f^T\left(W_{21}\right)\right]$ 
and   
$\Gamma_B=E\left[f\left(W_{11}\right)f^T\left(W_{11}\right)\right].$
 The Cram\'er-Wold device now implies
\begin{equation}
\begin{aligned}
\sqrt{\underline{C}}(A_{NM}+\sqrt{1-p_{NM}} B_{NM})=\sqrt{\underline{C}}\frac{1}{ L}\sum_{i=1}^{N}\sum_{j=1}^{M}Z_{ij} f\left(W_{ij}\right)
\overset{d}{\rightarrow}N\left(0,\Gamma \right).
\end{aligned}
\end{equation}
In case where $f$ is unbounded,  one can approximate $f$ in $L^2$ using a bounded function $f'$ following the argument in Theorem 1 of \citet[pp. 499]{Janson1984} under Lemma \ref{lemma:inequality} and the condition $E[F^2]<\infty$ in the statement of the theorem. The resulting errors in $\sqrt{\underline{C}}A_{NM}$ and $\sqrt{L}B_{NM}$ have variances bounded by $T_{1}E[(f(W_{11})-f'(W_{11}))(f(W_{12})-f'(W_{12}))^T]+T_{2}E[(f(W_{11})-f'(W_{11}))(f(W_{21})-f'(W_{21}))^T]$ and $T_{3}E[(f(W_{11})-f'(W_{11}))(f(W_{11})-f'(W_{11}))^T]$ from \eqref{eq:a_normal} and \eqref{eq:b_normal}, where $T_1,$ $T_2$ and $T_3$ are constants. The result then follows by letting $f'\to f$ with an application of the dominated convergence theorem.

Finally, $\sqrt{\underline{C}} L^{-1}\sum_{i=1}^{N}\sum_{j=1}^{M}Z_{ij}f(W_{ij})$ can be replaced by $\sqrt{\underline{C}}\hat L^{-1}\sum_{i=1}^{N}\sum_{j=1}^{M}Z_{ij}f(W_{ij})$ by virtue of Lemma \ref{lemma:inequality} (Appendix \ref{sec:useful_lemmas}).
\end{proof}

\subsection{Proof of Corollary \ref{pro:OLS}}\label{sec:pro:OLS}

\begin{proof}
Since $E\left[||X_{11}||^4\right]< \infty,$ we have $E\left[X_{r,11}^4\right]< \infty,$ for any coordinate $X_{r,11}$ of $X_{11}.$ 
The condition $E\left[|Y_{11}|^4\right]<\infty$ implies $E\left[u_{11}^4\right]<\infty$.
By Cauchy-Schwarz inequality, we have $E\left[(X_{r,11}u_{11})^2\right]<\infty$ and  $E\left[(X_{r,11}X_{r',11})^2\right]<\infty$ for any $r$ and $r'$, and also $E\left[(||X_{11}u_{11}||)^2\right]< \infty.$

Applying Lemma \ref{lemma:consistency} to the function class $\mathcal{F}_{OLS,1}=\{f(W_{ij})=X_{r,11}X_{r',11}, \, \text{for all} \, r, r'\},$ we have \begin{equation*}
    \frac{1}{\hat L}\sum_{i=1}^{N}\sum_{j=1}^{M}Z_{ij}X_{ij}X_{ij}^T \overset{P}{\rightarrow} E\left[X_{11}X_{11}^T\right].
\end{equation*}

Now, let a vector $\mu$ have the same dimension as $\beta,$ and denote $f_{\mu}(W_{ij})=\mu^TX_{ij}u_{ij}$.
Applying Theorem \ref{theorem:clt} to $\mathcal{F}_{OLS,2}=\{f_{\mu}\left(W_{ij}\right), i \in \{1,...,N\}, j \in \{1,...,M\}\},$ we
obtain \begin{equation*}
    \sqrt{\underline C}\frac{1}{\hat L}\sum_{i=1}^{N}\sum_{j=1}^{M}Z_{ij}f_{\mu}(W_{ij})\overset{d}{\rightarrow}N\left(0,\mu^T\Gamma_{OLS}\mu \right),
\end{equation*}
where $\Gamma_{OLS}=\Gamma_{OLS,1}+\Lambda \Gamma_{OLS,2},$ $\Gamma_{OLS,1}=\lambda_1 E\left[X_{11}u_{11}\left(X_{12}u_{12}\right)^T\right]+\lambda_2 E\left[X_{11}u_{11}\left(X_{21}u_{21}\right)^T\right]$
and   
$\Gamma_{OLS,2}=E\left[X_{11}u_{11}\left(X_{11}u_{11}\right)^T\right].$
Cram\'er-Wold device thus yields
\begin{equation*}
    \sqrt{\underline C}\frac{1}{\hat L}\sum_{i=1}^{N}\sum_{j=1}^{M}Z_{ij}X_{ij}u_{ij}\overset{d}{\rightarrow}N\left(0,\Gamma_{OLS} \right),
\end{equation*}

Finally, applying Slutsky's lemma yields
\begin{equation*}
    \sqrt{\underline C}\left(\hat \beta-\beta\right)\overset{d}{\rightarrow}N\left(0,V\right),
\end{equation*}
where $V=J^{-1}\Gamma_{OLS} J^{-1}$ and $J=E\left[X_{11}X_{11}^T\right].$
\end{proof}
\section{Useful Lemmas}\label{sec:useful_lemmas}

In this appendix section, we state auxiliary lemmas that are used to prove our main results. 
Each of these results is either coming directly from the existing literature or is the existing result with minor modifications.
For the latter case, we provide a proof.

\begin{lemma}\label{lemma:pointwise_measurable}
Let $\Theta$ be a compact subset of $\mathbb{R}^k$ and let $\mathcal{F}=\left\{f\left(\cdot,\theta\right): \theta \in \Theta\right\}$ be a class of real-valued functions indexed by $\theta$ such that $f(w,\cdot)$ is continuous for all $w\in \supp(W_{ij})$. Then, $\mathcal{F}$ is a pointwise measurable class of functions. 
\end{lemma}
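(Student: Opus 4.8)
The plan is to exhibit an explicit countable dense subset of $\mathcal{F}$ that witnesses pointwise measurability. Recall that a class $\mathcal{F}$ of measurable functions on $\supp(W_{ij})$ is pointwise measurable if there exists a countable subclass $\mathcal{G}\subset\mathcal{F}$ such that every $f\in\mathcal{F}$ is the pointwise limit of a sequence $(g_m)$ in $\mathcal{G}$. Since $\Theta\subset\mathbb{R}^k$ is compact, it is separable, so pick a countable dense subset $D\subset\Theta$ and set $\mathcal{G}=\{f(\cdot,\theta):\theta\in D\}$, which is countable.

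The key step is then to verify the pointwise-limit property. Given an arbitrary $\theta\in\Theta$, choose a sequence $(\theta_m)\subset D$ with $\theta_m\to\theta$; this is possible by density of $D$. For each fixed $w\in\supp(W_{ij})$, the hypothesis that $\theta\mapsto f(w,\theta)$ is continuous on $\Theta$ gives $f(w,\theta_m)\to f(w,\theta)$ as $m\to\infty$. Hence $f(\cdot,\theta)$ is the pointwise limit of the sequence $(f(\cdot,\theta_m))$ drawn from $\mathcal{G}$, which is exactly what pointwise measurability requires. One should also note in passing that each $f(\cdot,\theta)$ is (Borel) measurable: this follows because $f$, being continuous in $\theta$ for every $w$ and measurable in $w$ for every $\theta$ (the latter being implicit in $\mathcal{F}$ being a class of measurable functions, or derivable as a Carath\'eodory function), is jointly measurable, so each section is measurable.

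I do not expect a serious obstacle here; the result is essentially a repackaging of separability of $\Theta$ together with continuity in the index. The only point requiring a little care is to state the definition of pointwise measurability precisely (with reference to \citet[pp.~5]{Kato2019}, as the paper already cites) and to make sure the approximating sequence is genuinely contained in the fixed countable class $\mathcal{G}$ rather than depending on $\theta$ in its construction — but since $D$ is fixed once and for all and works simultaneously for every $\theta\in\Theta$, this causes no difficulty.
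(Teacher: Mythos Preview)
Your proposal is correct and follows essentially the same approach as the paper: both exhibit a countable subclass indexed by a countable dense subset of $\Theta$ (the paper uses $\Theta\cap\mathbb{Q}^k$, you use an abstract dense $D$) and invoke continuity in $\theta$ to obtain the pointwise limit. Your write-up is in fact slightly more careful than the paper's in fixing the approximating sequence $(\theta_m)$ before varying $w$, which is exactly what the definition requires.
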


\begin{proof}
The proof is immediate and well-known. We provide the proof for completeness.
Let $\mathcal{S}=\left\{f\left(\cdot,\theta\right): \theta \in \Theta \cap \mathbb{Q}^k \right\}$, where $\mathbb{Q}$ is the rationals.  
Therefore, by the denseness of $\mathbb{Q}^k$, for each $w\in \supp(W_{ij})$, we can find $(\theta_m)\subset \Theta \cap \mathbb{Q}^k$, $\theta_m\to \theta$ as $m\to \infty$ and then the continuity implies $f\left(w,\theta_m\right) \rightarrow f\left(w,\theta\right)$, which coincides with the definition of pointwise measurability.
\end{proof}
\noindent The next lemma follows immediately from \citet[Lemma 2.2.9]{van1996weak}.
\begin{lemma}[Bernstein's Inequality for Bernoulli r.v.'s]\label{lemma:inequality}
\noindent 
For each $p\in(0,1]$, it holds that
$$P\left(|\hat L/L -1|>\sqrt{2t/L}+2t/(3L)\right)\leq 2e^{-t},$$
for every $t>0$. 
\end{lemma}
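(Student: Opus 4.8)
The plan is to obtain the bound as a direct consequence of the classical Bernstein inequality for sums of bounded independent random variables, followed by an elementary rearrangement. First I would write $\hat L - L = \sum_{i=1}^{N}\sum_{j=1}^{M} Y_{ij}$ with $Y_{ij} := Z_{ij} - p_{NM}$. The $Y_{ij}$ are i.i.d.\ with mean zero, satisfy $|Y_{ij}| \le 1$ almost surely, and $\var(\hat L) = NM p_{NM}(1-p_{NM}) = L(1-p_{NM}) \le L$ because $\hat L \sim \mathrm{Binomial}(NM, p_{NM})$. Applying \citet[Lemma 2.2.9]{van1996weak} with bound $M = 1$ on the summands and variance proxy $v = L$ gives
\begin{equation*}
P\left(|\hat L - L| > x\right) \le 2\exp\left(-\frac{x^2/2}{L + x/3}\right) \qquad \text{for all } x > 0.
\end{equation*}

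The second step is to evaluate this at $x = x_t := \sqrt{2tL} + \tfrac{2t}{3}$. Since $x_t = L\bigl(\sqrt{2t/L} + 2t/(3L)\bigr)$, the event $\{\,|\hat L/L - 1| > \sqrt{2t/L} + 2t/(3L)\,\}$ coincides with $\{\,|\hat L - L| > x_t\,\}$, so it suffices to verify that the exponent at $x = x_t$ is at least $t$, i.e.\ that $x_t^2/2 \ge t(L + x_t/3)$, equivalently $x_t^2 - \tfrac{2t}{3}x_t - 2tL \ge 0$. The positive root of $y \mapsto y^2 - \tfrac{2t}{3}y - 2tL$ equals $\tfrac{t}{3} + \sqrt{\tfrac{t^2}{9} + 2tL}$, which is $\le \tfrac{t}{3} + \tfrac{t}{3} + \sqrt{2tL} = x_t$ by subadditivity of the square root; since the quadratic opens upward, it is nonnegative at $x_t$. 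Substituting $x = x_t$ into the Bernstein bound and using this inequality then yields $P(|\hat L - L| > x_t) \le 2e^{-t}$, which is exactly the assertion.

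I do not anticipate a genuine obstacle here: the content is entirely the invocation of \citet[Lemma 2.2.9]{van1996weak} together with the distributional facts about $\hat L$. The only point requiring a little care is the choice of the evaluation point $x_t$ and the verification of $x_t^2 - \tfrac{2t}{3}x_t - 2tL \ge 0$, which is precisely why the result is stated in the slightly asymmetric form with $\sqrt{2t/L} + 2t/(3L)$ rather than as the raw exponential tail bound.
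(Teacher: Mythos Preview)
Your proposal is correct and takes essentially the same approach as the paper: both rest entirely on invoking \citet[Lemma~2.2.9]{van1996weak}. The paper simply declares the lemma ``follows immediately'' from that reference without spelling out the rearrangement, whereas you carry out the algebra explicitly; your verification that $x_t$ exceeds the positive root via subadditivity of the square root is clean and correct.
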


\noindent Lemmas \ref{lemma:d12} and \ref{lemma:d11} bellow follow closely from Theorem 3.4(i) in \citet{davezies2019empirical} and Lemma D.12 in \citet{davezies2018asymptotic}, respectively. 


\begin{lemma}[Glivenko-Cantelli for two-way clustered random variables]\label{lemma:d12}
Let $(\mathcal{F}_{NM})$ be a sequence of classes of functions that satisfies Assumption \ref{a:moment} (iii) and such that each $\mathcal{F}_{NM}$ admits an envelop $F_{NM}$ with $E\left[F_{NM}\left(W_{11}\right)\right]\le \overline M<\infty,$ $\sup_Q\log N\left(\mathcal{F}_{NM}, \left\|\cdot\right\|_{Q,2}, \epsilon\left\|F_{NM}\right\|_{Q,2}\right)<\infty$ for any finite discrete measure $Q$, $\epsilon\in (0,1],$ then under Assumption \ref{a:sampling}, we have $$E\left[\sup_{f\in\mathcal{F}_{NM}}\left|\frac{1}{NM}\sum_{i=1}^{N}\sum_{j=1}^{M}f\left(W_{ij}\right)-E\left[f\left(W_{11}\right)\right]\right|\right]=o(1).$$
\end{lemma}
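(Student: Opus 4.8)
The plan is to adapt the proof of Theorem~3.4(i) in \citet{davezies2019empirical}, the only genuinely new feature being that both the class $\mathcal{F}_{NM}$ and its envelope $F_{NM}$ are allowed to depend on $(N,M)$; the task is therefore to trace that every bound in that argument depends on the class only through the two quantities our hypotheses control uniformly, namely $E[F_{NM}(W_{11})]\le\overline M$ and $\sup_Q\log N(\mathcal{F}_{NM},\|\cdot\|_{Q,2},\epsilon\|F_{NM}\|_{Q,2})$, together with the pointwise measurability of Assumption~\ref{a:moment}(iii). Writing $P_{NM}f=(NM)^{-1}\sum_{i,j}f(W_{ij})$ and $Pf=E[f(W_{11})]$, the goal is $E[\sup_{f\in\mathcal{F}_{NM}}|P_{NM}f-Pf|]\to0$.

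First I would invoke the Aldous--Hoover representation $W_{ij}=\tau(\alpha_i,\beta_j,\varepsilon_{ij})$ with i.i.d.\ uniforms, valid under separate exchangeability and dissociation (Assumption~\ref{a:sampling}), and use the Hoeffding/H\'ajek decomposition
$$f(W_{ij})-Pf=\pi^f_1(\alpha_i)+\pi^f_2(\beta_j)+r^f(\alpha_i,\beta_j,\varepsilon_{ij}),$$
where $\pi^f_1(\alpha_i)=E[f(W_{ij})\mid\alpha_i]-Pf$, $\pi^f_2$ is the analogous $\beta$-projection, and $r^f$ is completely degenerate, $E[r^f\mid\alpha_i]=E[r^f\mid\beta_j]=0$. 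Averaging gives $P_{NM}f-Pf=N^{-1}\sum_i\pi^f_1(\alpha_i)+M^{-1}\sum_j\pi^f_2(\beta_j)+(NM)^{-1}\sum_{i,j}r^f(\alpha_i,\beta_j,\varepsilon_{ij})$, so it suffices to control the expected supremum of each of the three pieces.

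For the two projection terms, $\{\pi^f_1:f\in\mathcal{F}_{NM}\}$ is a class of functions of the i.i.d.\ sequence $\alpha_1,\dots,\alpha_N$ admitting the $L^1$-envelope $E[F_{NM}(W_{11})\mid\alpha_1]+\overline M$, whose mean is at most $2\overline M$; since conditional expectation is an $L^2(Q)$-contraction, the uniform covering numbers of this projected class are no larger than those of $\mathcal{F}_{NM}$, so the classical Glivenko--Cantelli theorem for i.i.d.\ data under a uniform-entropy envelope condition \citep{van1996weak} applies along the sequence and yields $E[\sup_f|N^{-1}\sum_i\pi^f_1(\alpha_i)|]\to0$, and symmetrically for $\pi^f_2$. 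For the completely degenerate remainder I would exploit that, by degeneracy and dissociation, $E[r^f(\cdot_{ij})r^f(\cdot_{i'j'})]=0$ whenever $(i,j)\ne(i',j')$, so for bounded $f$ the remainder has $L^2$ norm of order $(NM)^{-1/2}$; the uniform version of this is obtained by symmetrizing with Rademacher multipliers in each margin, truncating $F_{NM}$ at a level $K=K_{NM}\to\infty$, bounding the truncated degenerate sum through the entropy-based maximal inequality for degenerate sums over separately exchangeable arrays of \citet{davezies2019empirical} (of order $K_{NM}(NM)^{-1/2}$ up to the bounded uniform-entropy integral), and bounding the residual tail by $E[F_{NM}(W_{11})\mathbbm{1}\{F_{NM}(W_{11})>K_{NM}\}]$.

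The hard part will be this last step: with only an $L^1$ envelope a crude second moment of the remainder is unavailable, so one must play the $(NM)^{-1/2}$ gain from complete degeneracy against the growth of $K_{NM}$ and calibrate $K_{NM}$ so that the truncated-sum bound and the tail both vanish; checking that the degenerate-array maximal inequality of \citet{davezies2019empirical} goes through with the entropy constant in the statement of the lemma is where the real substance lies. Pointwise measurability (Assumption~\ref{a:moment}(iii)) is used throughout to keep the suprema measurable and to license the symmetrization steps.
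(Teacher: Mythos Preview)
Your route via the Aldous--Hoover representation and Hoeffding decomposition is a legitimate alternative, but the paper's proof is considerably more direct. It observes that the symmetrization argument behind Theorem~3.4(i) of \citet{davezies2019empirical} is nonasymptotic and independent of the function class, so it delivers, for each $K>0$ and $\epsilon>0$, a bound of the form
\[
E\Bigl[\sup_{f\in\mathcal{F}_{NM}}\bigl|P_{NM}f-Pf\bigr|\Bigr]\ \lesssim\ E\bigl[F_{NM}\,\1\{F_{NM}>K\}\bigr]\ +\ \epsilon\ +\ \frac{K}{\sqrt{NM}}\sqrt{\log N(\mathcal{F}_{NM,K},\|\cdot\|_{\mathbb{P}_{NM},1},\epsilon)}\,,
\]
after which the $L^1$ covering number is dominated by the $L^2$ one via Jensen and the uniform-entropy hypothesis is invoked. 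This is exactly the two-way analogue of the ``changing class'' Glivenko--Cantelli device in Example~6.3 of \citet{wellner2004empirical}; the degenerate part of the array is handled implicitly inside the symmetrization step, so no Hoeffding split into $\pi_1^f$, $\pi_2^f$, $r^f$ is needed at all. What your decomposition buys is a clearer separation of the i.i.d.\ and degenerate contributions, at the cost of having to control three separate empirical processes instead of one.

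There is, however, a step in your proposal that does not go through as written. The assertion that ``conditional expectation is an $L^2(Q)$-contraction, so the uniform covering numbers of the projected class are no larger than those of $\mathcal{F}_{NM}$'' is not correct: $f\mapsto E[f\mid\alpha]$ is a contraction in $L^2(P)$ for the true law, not in $L^2(Q)$ for an arbitrary finite discrete $Q$, and the projected class lives on the $\alpha$-space while the hypothesis quantifies over discrete $Q$ on the $W$-space. A valid bridge is to push a discrete $\tilde Q$ on $\alpha$-space forward to a (generally non-discrete) measure on $W$-space via $a\mapsto\mathcal{L}(\tau(a,\beta,\varepsilon))$, use Jensen there, and then appeal to the fact that the supremum of $L^2$ covering numbers over finite discrete measures equals the supremum over all probability measures; but you have to say this. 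A secondary point: even once the projection terms are reduced to i.i.d.\ averages, the Glivenko--Cantelli you cite is for a fixed class, so you still need the Example~6.3-type modification for classes changing with $N$ (respectively $M$)---meaning the key step the paper isolates has only been relocated, not avoided.
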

\begin{proof}
	The result is a minor modification of the proof of Theorem 3.4 (i) in \citet{davezies2019empirical} with the standard Glivenko-Cantelli theorem modified for function classes changing with the sample size.
	Denote $\mathbb P_{NM}=(NM)^{-1}\sumij \delta_{X_{ij}}$, where $\delta_{x}$ is the Dirac measure at $x$.
	 Following their symmetrization argument (which is nonasymptotic and independent of the function class) in the proof of Theorem 3.4 (i) in \citet{davezies2019empirical}, for each $K>0$ and $\epsilon>0$, denote $\mathcal F_{NM,K}=\mathcal F_{NM} \1\{F_{NM}>K\}$, then one has
	\begin{align*}
	&E\left[\sup_{f\in\mathcal{F}_{NM}}\left|\frac{1}{NM}\sum_{i=1}^{N}\sum_{j=1}^{M}f\left(W_{ij}\right)-E\left[f\left(W_{11}\right)\right]\right|\right]\\\lesssim& E[F_{NM} \1 \{F_{NM}>K\}]+ E\left[\epsilon +\frac{K}{\sqrt{NM}}\sqrt{\log N(\mathcal F_{NM,K},\|\cdot\|_{\mathbb P_{NM},1} ,\epsilon)}\right].
	\end{align*}
	The first term on the right hand side is bounded by $\overline M$. To deal with the second term, by Jensen's inequality, it holds that $\|f-f'\|_{\mathbb P_{NM},1}\le \|f-f'\|_{\mathbb P_{NM},2}$. Thus the smallest $\epsilon$-net for $(\mathcal F_{NM},\|\cdot\|_{\mathbb P_{NM},2} )$ is an $\epsilon$-net for $(\mathcal F_{NM},\|\cdot\|_{\mathbb P_{NM},1} )$. Thus we have 
	 $N(\mathcal F_{NM},\|\cdot\|_{\mathbb P_{NM},1} ,\epsilon)\le N(\mathcal F_{NM},\|\cdot\|_{\mathbb P_{NM},2} ,\epsilon)$. The condition $\sup_Q\log N\left(\mathcal{F}_{NM}, \left\|\cdot\right\|_{Q,2}, \epsilon\left\|F_{NM}\right\|_{Q,2}\right)<\infty$ for all $\epsilon\in(0,1]$ implies $N(\mathcal F_{NM},\|\cdot\|_{\mathbb P_{NM},2} ,\epsilon)<\infty$ for all $\epsilon>0$. Finally, observe that $E[\|F\|_{\mathbb P_{NM},1}]=E[F_{NM}]<M$. This concludes the proof. 
\end{proof}

\begin{lemma}[Lemma D.11 in \citet{davezies2018asymptotic} for sequences]\label{lemma:d11}
Let $(\mathcal{F}_{NM})$ and $(\mathcal{G}_{NM})$ be two pointwise measurable classes of functions. Suppose that each $\mathcal{F}_{NM}$ admits an envelope $F_{NM}$ with $E\left[F_{NM}\left(W_{11}\right)^2\right]<\infty$ and 
$$\int_{0}^{1}\sup_Q\sqrt{\log N\left(\mathcal{F}_{NM}, \left\|\cdot\right\|_{Q,2},\epsilon\left\|F_{NM}\right\|_{Q,2}\right)}d\epsilon\le \overline M<\infty,$$ where $Q$ is taken over the set of all finite discrete measures and $\epsilon\in (0,1].$ Similarly, $(\mathcal{G}_{NM})$ admits a sequence of envelop functions $(G_{NM})$ with $E\left[G_{NM}\left(W_{11}\right)^2\right]<\infty$ and $$\int_{0}^{1}\sup_Q\sqrt{\log N\left(\mathcal{G}_{NM}, \left\|\cdot\right\|_{Q,2},\epsilon\left\|G_{NM}\right\|_{Q,2}\right)}d\epsilon\le \overline M<\infty.$$ Then, under Assumptions \ref{a:sampling} and \ref{a:scalar}, $$\lim_{\underline C\rightarrow \infty}E\left[\sup_{\mathcal{F}_{NM}\times \mathcal{G}_{NM}}\left|\frac{\underline{C}}{\left(NM\right)^2}\sum_{i=1}^{N}\sum_{1\leq j, j'\leq M}f\left(W_{ij}\right)g\left(W_{ij'}\right)-\lambda_1E\left[f\left(W_{11}\right)g\left(W_{12}\right)\right]\right|\right]=0.$$
\end{lemma}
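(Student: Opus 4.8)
The plan is to factor the double sum into an outer average over the row index and a product of inner averages over the column index, and then apply the changing-class Glivenko--Cantelli result of Lemma~\ref{lemma:d12} twice, once in each cluster direction. Writing $T_i^f=M^{-1}\sum_{j=1}^{M}f(W_{ij})$ and $T_i^g=M^{-1}\sum_{j'=1}^{M}g(W_{ij'})$, one has
\begin{align*}
\frac{\underline{C}}{(NM)^2}\sum_{i=1}^{N}\sum_{1\le j,j'\le M}f(W_{ij})g(W_{ij'})=\frac{\underline{C}}{N}\cdot\frac{1}{N}\sum_{i=1}^{N}T_i^f T_i^g .
\end{align*}
Since $\underline{C}/N\to\lambda_1$ by Assumption~\ref{a:scalar} (hence is bounded) and $\sup_{f,g}|E[f(W_{11})g(W_{12})]|<\infty$ by Cauchy--Schwarz and square-integrability of the envelopes, it suffices to show
\begin{align*}
E\Big[\sup_{f\in\mathcal{F}_{NM},\,g\in\mathcal{G}_{NM}}\Big|\frac{1}{N}\sum_{i=1}^{N}T_i^f T_i^g-E[f(W_{11})g(W_{12})]\Big|\Big]\to 0 .
\end{align*}
By the Aldous--Hoover representation of separately exchangeable dissociated arrays (Assumption~\ref{a:sampling}), we may work on a space carrying i.i.d.\ uniform random variables $(U_i)$, $(V_j)$, $(U_{ij})$ and a measurable $\tau$ with $W_{ij}=\tau(U_i,V_j,U_{ij})$. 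Set $\bar f(u)=E[f(\tau(u,V_1,U_{11}))]$ and $\bar g(u)=E[g(\tau(u,V_1,U_{11}))]$; because $W_{11}$ and $W_{12}$ are conditionally independent given $U_1$, the target limit is $E[\bar f(U_1)\bar g(U_1)]$.

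The argument then has two layers, mirroring the proof of Lemma~D.11 in \citet{davezies2018asymptotic} with the sample-size dependence handled exactly as Lemma~\ref{lemma:d12} handled it in the one-way case. \emph{Inner layer:} conditionally on $U_i$ the variables $(W_{ij})_{j=1}^{M}$ are i.i.d., so the ordinary changing-class Glivenko--Cantelli theorem (Example~6.3 in \citet{wellner2004empirical}; cf.\ the proof of Lemma~\ref{lemma:d12}) applied to $\mathcal{F}_{NM}$ and $\mathcal{G}_{NM}$ under the conditional law gives $E[\sup_f|T_1^f-\bar f(U_1)|\mid U_1]=\rho_{NM}^f(U_1)$ with $E[\rho_{NM}^f(U_1)]=o(1)$, and similarly $\rho_{NM}^g$ for $g$; by row-exchangeability the same $\rho_{NM}^f,\rho_{NM}^g$ serve every $i$, so $E[N^{-1}\sum_i\sup_f|T_i^f-\bar f(U_i)|]\to0$ and likewise for $g$. \emph{Outer layer:} the $(U_i)$ are i.i.d., so a further application of the changing-class Glivenko--Cantelli theorem to the product class $\{u\mapsto\bar f(u)\bar g(u):f\in\mathcal{F}_{NM},\,g\in\mathcal{G}_{NM}\}$ --- which admits the square-integrable envelope $\bar F\bar G$ with $\bar F(u)=E[F_{NM}(\tau(u,V_1,U_{11}))]$, and whose uniform covering numbers are inherited from those of $\mathcal{F}_{NM}$ and $\mathcal{G}_{NM}$ via Jensen's inequality and the standard product-class bound --- yields $E[\sup_{f,g}|N^{-1}\sum_i\bar f(U_i)\bar g(U_i)-E[\bar f(U_1)\bar g(U_1)]|]\to0$.

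To glue the two layers, write $T_i^f T_i^g-\bar f(U_i)\bar g(U_i)=(T_i^f-\bar f(U_i))T_i^g+\bar f(U_i)(T_i^g-\bar g(U_i))$, take $\sup_{f,g}$, average over $i$, and take expectations; each resulting cross term is a product of a factor controlled by the inner layer and an $L^2$-bounded factor (a sample average of $G_{NM}$, or the conditional mean $\bar F$), and is made negligible by Cauchy--Schwarz after truncating the $L^2$-bounded factor at a level $K$ and letting $K\to\infty$. The diagonal contribution $j=j'$, equal to $\frac{\underline{C}}{NM}\cdot\frac{1}{NM}\sum_{i,j}(fg)(W_{ij})$, is $o(1)$ uniformly over the classes because $\underline{C}/(NM)=(\underline{C}/N)/M\to0$ while $\frac{1}{NM}\sum_{i,j}(F_{NM}G_{NM})(W_{ij})$ has mean at most $(E[F_{NM}(W_{11})^2]E[G_{NM}(W_{11})^2])^{1/2}$.

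I expect the main obstacle to be coordinating the two layers of uniformity: the inner convergence must hold uniformly over the (possibly sample-size-dependent) function class \emph{and} over the $N$ rows at once, and it must be quantitatively strong enough that, multiplied against the merely $L^2$-bounded outer factors, the cross terms still vanish. This is exactly where square-integrability of the envelopes is needed --- an integrable envelope, which sufficed for Lemma~\ref{lemma:d12}, does not suffice here --- and, if $\mathcal{F}_{NM}$ and $\mathcal{G}_{NM}$ genuinely change with $(N,M)$, one also needs uniform (in $(N,M)$) square-integrability of $F_{NM}$ and $G_{NM}$; in the applications of this lemma in Sections~\ref{sec:gmm}--\ref{sec:m} the relevant classes do not depend on $(N,M)$, so this is not restrictive.
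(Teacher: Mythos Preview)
Your approach is sound and would yield the result (modulo the uniform-in-$(N,M)$ square integrability of the envelopes that you already flag), but it differs from what the paper does. The paper's proof is a one-line appeal to \citet{davezies2018asymptotic}: it simply reruns the proof of their Lemma~D.11 with $\mathcal{F}_{NM},F_{NM},\mathcal{G}_{NM},G_{NM}$ substituted for the fixed classes and envelopes, observing that the key tools there --- a symmetrization step and their non-asymptotic maximal inequality Lemma~D.4 --- do not depend on the function class being fixed and therefore carry over verbatim to the changing-class setting. Your route instead unpacks the problem via the Aldous--Hoover representation into two nested i.i.d.\ problems and handles each with the ordinary changing-class Glivenko--Cantelli theorem, then glues the layers with a bilinear decomposition and truncation. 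Your argument is more self-contained and makes explicit where the second moment of the envelope is actually spent (in controlling the cross terms after truncation), at the cost of having to verify preservation of uniform covering numbers under conditional expectation and to manage the uniform-integrability issue by hand; the paper's approach is shorter precisely because all of that machinery is already packaged inside the cited lemma.
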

\begin{proof}
The proof follows the same steps of the proof of Lemma D.11 of \citet{davezies2018asymptotic} with the modification of $\mathcal F_{NM}$, $F_{NM}$, $\mathcal G_{NM}$, and $G_{NM}$ in place of $\mathcal F$, $F$, $\mathcal G$, and $G$, respectively. Notice that their symmetrization arguments and Lemma D.4 are non-asymptotic and thus are not affected by such modification. The detail is omitted.
\end{proof}

\section{Proofs for the Application to the GMM} 

Define the class $\mathcal{G}=\left\{g_{r}\left(\cdot,\theta\right): \theta \in \Theta \, \, \textrm{and}\,\, r\in \{1,...,m\}\right\}$ of functions indexed by $r$ and $\theta$ and the class $\mathcal{G}'=\bigl\{\partial g_r(\cdot,\theta)/\partial \theta_l:\theta \in \Theta, r \in \{1,...,m\}, l \in \{1,...,k\}\bigl\}$ of functions indexed by $r$, $l$ and $\theta.$

\subsection{Proof of Lemma \ref{theorem:consistency_gmm}}\label{sec:theorem:consistency_gmm}
\begin{proof}
We verify the conditions of Theorem 2.1 in \citet{Newey1994}, where the population criterion is $Q_0(\theta)=$ $-E\left[g\left(W_{ij},\theta\right)\right]^TVE\left[g\left(W_{ij},\theta\right)\right]$. 
Their Condition 2.1 (i), $Q_0\left(\theta\right)$ is uniquely maximized at $\theta^0$, holds by Lemma 2.3 in \citet{Newey1994} under Assumption \ref{a:GMM} (i).
Condition 2.1 (ii) holds by Assumption \ref{a:GMM} (ii). 
Condition 2.1 (iii) that $Q_0$ is continuous at $\theta$ follows from Assumption \ref{a:GMM} (iii) (a). 
Under Assumption \ref{a:GMM} (ii), (iii) (a), (iv), by Example 19.7 in \citet{van2000asymptotic} and Lemma 9.18 in \citet{kosorok2008introduction}, we know that the class 
$\mathcal{G}_r=\left\{g_r\left(\cdot,\theta\right),\theta \in \Theta\right\}$, for $r\in \left\{1,...,m\right\},$ 
has an envelope $\left|g_r(W_{ij},\theta^0)\right|+DM < \infty,$ where $D$ is a diameter of a set containing $\Theta$. 
Thus, for any finite discrete measure $Q$ and $\epsilon\in (0,1]$, $N\left(\mathcal{G}_r,\left\|\cdot\right\|_{Q,2},\epsilon\right)\leq \left(1+4DM/\epsilon\right)^k.$ 
Since $\mathcal{G}=\bigcup_{r=1}^{m}\mathcal{G}_r,$ we obtain $N\left(\mathcal{G},\left\|\cdot\right\|_{Q,2},\epsilon\right)\leq m\left(1+4DM/\epsilon\right)^k<\infty$, which implies that the class $\mathcal{G}$ satisfies Assumption \ref{a:moment} (ii). $\mathcal{G}$ is a pointwise measurable class of functions since $\mathcal{G}_r$, for $r\in \left\{1,...,m\right\},$ is a pointwise measurable class of functions by Lemma \ref{lemma:pointwise_measurable} under Assumption \ref{a:GMM} (ii), (iii) (a) and $\mathcal{G}=\bigcup_{r=1}^{m}\mathcal{G}_r.$
Thus, with Assumption \ref{a:sampling}, by applying Lemma \ref{lemma:consistency}, we have 
 $\sup_{\theta \in \Theta}\left\|\hat g_{NM}\left( \theta\right)-E\left[g\left(W_{ij},\theta\right)\right]\right\| \overset{P}{\rightarrow} 0.$ 
By the triangle and Cauchy-Schwartz inequalities, we obtain
\begin{equation*}
\begin{aligned}
&\left|\hat Q_{NM}\left(\theta\right)-Q_0\left(\theta\right)\right|\\
&\leq \left|\left(\hat g_{NM}\left(\theta \right)-E\left[g\left(W_{ij},\theta\right)\right]\right)^T\hat V\left(\hat g_{NM}\left( \theta\right)-E\left[g\left(W_{ij},\theta\right)\right]\right)\right|\\
&+\left|E\left[g\left(W_{ij},\theta\right)\right]^T\left(\hat V+\hat V^T\right)\left(\hat g_{NM}\left( \theta\right)-E\left[g\left(W_{ij},\theta\right)\right]\right)\right|+\left|E\left[g\left(W_{ij},\theta\right)\right]^T\left(\hat V-V\right)E\left[g\left(W_{ij},\theta\right)\right]\right|\\
&\leq \left\|\hat g_{NM}\left( \theta\right)-E\left[g\left(W_{ij},\theta\right)\right]\right\|^2\left\|\hat V\right\|+2\left\|E\left[g\left(W_{ij},\theta\right)\right]\right\|\left\|\hat g_{NM}\left( \theta\right)-E\left[g\left(W_{ij},\theta\right)\right]\right\| \left\|\hat V\right\|\\ &+\left\|E\left[g\left(W_{ij},\theta\right)\right]\right\|^2\left\|\hat V-V\right\|.
\end{aligned}
\end{equation*}
Thus, $\sup_{\theta \in \Theta}\left|\hat Q_{NM}\left(\theta\right)-Q_0\left(\theta\right)\right|\overset{P}{\rightarrow} 0$ so that condition 2.1 (iv) is satisfied. 
Applying Theorem 2.1 in \citet{Newey1994}, we therefore obtain $\hat \theta \overset{P}{\rightarrow} \theta^0.$
\end{proof}

\subsection{Proof of Theorem \ref{theorem:asymptotic_normality_gmm}}\label{sec:theorem:asymptotic_normality_gmm}
\begin{proof}
Under Assumption \ref{a:GMM} (ii), (iii) (a), the first-order condition requires that $2\hat G_{NM}\left(\hat \theta\right)^T\hat V\hat g_{NM}\left(\hat \theta\right) = 0$ holds with probability approaching one, where $\hat G_{NM}\left(\theta\right)=\nabla_{\theta}\hat g_{NM}\left( \theta\right).$ Expanding $\hat g_{NM}\left(\hat \theta\right)$ around $\theta^0$ and multiplying by $\sqrt{\underline C}$, we have 
$$
\sqrt{\underline C}\left(\hat \theta-\theta^0\right)=-\left[\hat G_{NM}\left(\hat \theta\right)^T\hat V\hat G_{NM}\left(\bar \theta\right)\right]^{-1}\hat G_{NM}\left(\hat \theta\right)^T\hat V \sqrt{\underline{C}}\hat g_{NM}\left(\theta^0\right),
$$ 
where $\bar \theta$ is the mean value implied by the mean value theorem for each coordinate. Under Assumption \ref{a:GMM} (ii), (iii) (b), similar lines of argument to those in the proof of Lemma \ref{theorem:consistency_gmm} yield $N\left(\mathcal{G'},\left\|\cdot\right\|_{Q,2},\epsilon\right)<\infty$ for any finite discrete measure $Q$ and $\epsilon\in (0,1]$.
$\mathcal{G}'$ is a pointwise measurable class of functions by Lemma \ref{lemma:pointwise_measurable} under Assumption \ref{a:GMM} (ii), (iii) (b). 
With Assumptions \ref{a:sampling} and \ref{a:GMM} (vi), Lemma \ref{lemma:consistency} thus yields 
$$
\sup_{\theta \in \Theta}\left|\frac{1}{\hat L}\sum_{i=1}^{N}\sum_{j=1}^{M}Z_{ij}\frac{\partial g_r\left(W_{ij}, \theta\right)}{\partial \theta_l}- E\left[\frac{\partial g_r\left(W_{11}, \theta\right)}{\partial \theta_l}\right]\right| \overset{P}{\rightarrow}0
$$ 
for each $r$ and $l.$ Since there are only finite numbers of $r$ and $l,$ it follows that $\hat G_{NM}\left(\hat \theta\right) - G\left(\hat \theta\right) \overset{P}{\rightarrow} 0$ and $\hat G_{NM}\left(\bar \theta\right) - G\left(\bar \theta\right) \overset{P}{\rightarrow} 0,$ where $G(\theta)=E[\nabla_{\theta}g(W_{11},\theta)].$ 
Also, since the conditions of Lemma \ref{theorem:consistency_gmm} are satisfied, we have $\bar \theta \overset{P}{\rightarrow} \theta^0$ and $\hat \theta \overset{P}{\rightarrow} \theta^0$. We thus obtain $ G\left(\hat \theta\right) - G \overset{P}{\rightarrow} 0$ and $G\left(\bar \theta\right) - G \overset{P}{\rightarrow} 0$ by the continuous mapping theorem under Assumption \ref{a:GMM} (iii) (b).
Combining the above results yields $\hat G_{NM}\left(\hat \theta\right)\overset{P}{\rightarrow} G$ and $\hat G_{NM}\left(\bar \theta\right)\overset{P}{\rightarrow} G.$ 
Therefore, $\left[\hat G_{NM}\left(\hat \theta\right)^T\hat V\hat G_{NM}\left(\bar \theta \right)\right]^{-1}\hat G_{NM}\left(\hat \theta\right)^T\hat V  \overset{P}{\rightarrow}\left(G^TVG\right)^{-1}G^TV$ follows by an application of the continuous mapping theorem under Assumption \ref{a:GMM} (v).  
Now, notice that finite function class $\{g_1\left(\cdot,\theta^0\right),..., g_m\left(\cdot,\theta^0\right)\}$ is pointwise measurable since $\mathcal{G}$ is a pointwise measurable class of functions following
Lemma \ref{theorem:consistency_gmm} and $E\left[g\left(W_{ij},\theta^0\right)\right]=0.$ With $E\left[g_{\sup}(W_{ij})^2\right]<\infty$ under assumption \ref{a:GMM} (vii), by applying Theorem \ref{theorem:clt} under Assumptions \ref{a:sampling} and \ref{a:scalar}, we obtain $\sqrt{\underline{C}}\hat g_{NM}\left(\theta^0\right)\overset{d}{\rightarrow} N\left(0,\Omega\right)$, where $\Omega=\Gamma _1+\Lambda \Gamma_2$, with $\Gamma _1= \lambda_1E\left[g\left(W_{11},\theta^0\right)g^T\left(W_{12},\theta^0\right)\right]+\lambda_2E\left[g\left(W_{11},\theta^0\right)g^T\left(W_{21},\theta^0\right)\right]$ and $\Gamma_2= E\left[g\left(W_{11},\theta^0\right)g^T\left(W_{11},\theta^0\right)\right].$ 
The Slutsky's theorem then implies
$$
\sqrt{\underline{C}}\left(\hat \theta-\theta^0\right)\overset{d}{\rightarrow} N\left(0,\left(G^TVG\right)^{-1}G^TV\Omega VG\left(G^TVG\right)^{-1}\right),
$$ 
which concludes the proof.
\end{proof}

\subsection{Proof of Theorem \ref{theorem:consistency_gmm_variance}}\label{sec:theorem:consistency_gmm_variance}
\begin{proof}
First, we want to establish $\widetilde G\overset{P}{\rightarrow} G$ via
$
\left\|\widetilde G-G\right\|\leq \left\|\widetilde G-G\left(\hat \theta\right)\right\|+\left\|G\left(\hat \theta\right)-G\right\|,
$
where $G\left(\theta\right)=E\left[\nabla_\theta g\left(W_{11},\theta\right)\right].$ 
Since the conditions of Lemma \ref{theorem:consistency_gmm} are satisfied, it holds that $\hat \theta \overset{P}{\rightarrow}\theta^0.$ Under Assumption \ref{a:GMM:variance} (i), we obtain $\left\|G\left(\hat \theta\right)-G\right\|\overset{P}{\rightarrow} 0$ by the continuous mapping theorem. 
Note that $\mathcal{G}'$ is pointwise measurable and $N\left(\mathcal{G'},\left\|\cdot\right\|_{Q,2},\epsilon\right)<\infty$ for any finite discrete measure $Q,$ $\epsilon\in (0,1]$ by the proof of Theorem \ref{theorem:asymptotic_normality_gmm}. 
Under Assumptions \ref{a:sampling} and \ref{a:GMM} (vi), by applying Lemma \ref{lemma:consistency}, we thus obtain 
$$
\sup_{\theta \in \Theta}\left|\frac{1}{\hat L}\sum_{i=1}^{N}\sum_{j=1}^{M}\frac{\partial g_r\left(W_{ij}, \theta\right)Z_{ij}}{\partial \theta_l}- E\left[\frac{\partial g_r\left(W_{11}, \theta\right)}{\partial \theta_l}\right]\right| \overset{P}{\rightarrow}0
$$ 
for each $r$ and $l.$ 
Since there are only finite numbers of $l$ and $r$, we get 
$$
\sup_{\theta \in \Theta}\left\|\frac{1}{\hat L} \sum_{i=1}^{N}\sum_{j=1}^{M}Z_{ij}\nabla_{\theta}g\left(W_{ij},\theta\right)-E\left[\nabla_{\theta }g\left(W_{11},\theta\right)\right]\right\|\overset{P}{\rightarrow}0.
$$ 
It then follows that $\tilde G\overset{P}{\rightarrow}G(\hat \theta).$ 
Combining the above arguments, we establish $\tilde G\overset{P}{\rightarrow} G.$ 

We will next verify $\tilde\Gamma_2\overset{P}{\rightarrow} \Gamma_2.$
Define a new class $\mathcal{G}_{\text{sub}}=\left\{(w,z) \mapsto z g_r\left(w, \theta\right), \theta \in \Theta, r \in \left\{1,...,m\right\}\right\}$. 
For any finite discrete measure $Q$ and $\epsilon\in (0,1],$ we have $\sup_{Q}N\left(\mathcal{G},\left\|\cdot\right\|_{Q,2},\epsilon\left\|g_{\sup}\right\|_{Q,2}\right)< \infty$ by the proof of Lemma \ref{theorem:consistency_gmm}. 
By Theorem 9.15 in \citet{kosorok2008introduction}, therefore,
     \begin{equation*}
     \begin{aligned}
      &\sup_{Q}N\left(\mathcal{G}_{\text{sub}}\mathcal{G}_{\text{sub}},\left\|\cdot\right\|_{Q,2},\sqrt{2}\epsilon\left\|g_{\sup}^2\right\|_{Q,2}\right)\\
      &\le \sup_{Q}N\left(\mathcal{G}_{\text{sub}},\left\|\cdot\right\|_{Q,2},\epsilon\left\|g_{\sup}\right\|_{Q,2}\right)\sup_{Q}N\left(\mathcal{G}_{\text{sub}},\left\|\cdot\right\|_{Q,2},\epsilon\left\|g_{\sup}\right\|_{Q,2}\right)<\infty,
     \end{aligned}
 \end{equation*} 
for any finite discrete measure $Q$ and $\epsilon\in (0,1]$, where $\mathcal{G}_{\text{sub}}\mathcal{G}_{\text{sub}}$ is defined as the pointwise product. 
Note that $\mathcal{G}_{\text{sub}}\mathcal{G}_{\text{sub}}$ is a pointwise measurable class of functions since $\mathcal{G}_{\text{sub}}$ is a pointwise measurable class of functions by the arguments in the proof of Lemma \ref{theorem:consistency_gmm}.  This implies that 
with $E[g_{\sup}(W_{ij})^2]<\infty$ and Assumption \ref{a:sampling}, by applying Lemma \ref{lemma:d12}, we thus obtain 
$$
E\left[\sup_{\theta \in \Theta}\left\|\frac{1}{NM}\sum_{i=1}^N\sum_{j=1}^M g\left(W_{ij}, \theta\right)Z_{ij}\left(g\left(W_{ij}, \theta\right)Z_{ij}\right)^T-E\left[g\left(W_{11},  \theta\right)Z_{11}\left(g\left(W_{11},  \theta\right)Z_{11}\right)^T\right]\right\|\right]=o(1).
$$
As we can write 
\begin{equation*}
    \begin{aligned}
     \frac{1}{NM}\sum_{i=1}^N\sum_{j=1}^M g\left(W_{ij}, \theta\right)Z_{ij}\left(g\left(W_{ij}, \theta\right)Z_{ij}\right)^T&=\frac{L}{NM}\frac{1}{L}\sum_{i=1}^N\sum_{j=1}^M g\left(W_{ij}, \theta\right)Z_{ij}\left(g\left(W_{ij}, \theta\right)Z_{ij}\right)^T\\
     &=p\frac{1}{L}\sum_{i=1}^N\sum_{j=1}^Mg\left(W_{ij}, \theta\right)Z_{ij}\left(g\left(W_{ij}, \theta\right)Z_{ij}\right)^T
    \end{aligned}
\end{equation*}
and 
$$
E\left[g\left(W_{11},  \theta\right)Z_{11}\left(g\left(W_{11},  \theta\right)Z_{11}\right)^T\right]=E\left[Z_{11}^2\right]E\left[g\left(W_{11},  \theta\right)g^T\left(W_{11}, \theta\right)\right]=pE\left[g\left(W_{11},  \theta\right)g^T\left(W_{11}, \theta\right)\right].
$$
Therefore, by Markov's inequality, it follows that 
$$
\sup_{\theta \in \Theta}\left\|\frac{1}{L}\sum_{i=1}^N\sum_{j=1}^M g\left(W_{ij}, \theta\right)Z_{ij}\left(g\left(W_{ij}, \theta\right)Z_{ij}\right)^T-E\left[g\left(W_{11},  \theta\right)g^T\left(W_{11},  \theta\right)\right]\right\|\overset{P}{\rightarrow}0.
$$ 
In addition, 
$$
\frac{1}{L}\sum_{i=1}^N\sum_{j=1}^M Z_{ij}g\left(W_{ij}, \theta\right)g\left(W_{ij}, \theta\right)=\frac{1}{L}\sum_{i=1}^N\sum_{j=1}^M g\left(W_{ij}, \theta\right)Z_{ij}\left(g\left(W_{ij}, \theta\right)Z_{ij}\right)^T.
$$
Thus, Lemma  \ref{lemma:inequality} yields $\tilde \Gamma_2\overset{P}{\rightarrow}\Gamma_2(\hat \theta),$ where $\Gamma_2(\theta)=E\left[g\left(W_{11}, \theta\right)g^T\left(W_{11}, \theta\right)\right].$ Meanwhile, $\hat \theta\overset{P}{\rightarrow}\theta^0$ and we have $\Gamma_2(\hat \theta)\overset{P}{\rightarrow}\Gamma_2$ by Assumption \ref{a:GMM:variance} (iii). 
Therefore, $\tilde\Gamma_2\overset{P}{\rightarrow} \Gamma_2$ follows.

Finally, we establish $\widetilde \Gamma_1\overset{P}{\rightarrow}\Gamma_1$.
Note that $\mathcal{G}_{\text{sub}}$ is a pointwise measurable class of functions and that $\sup_{Q}N\left(\mathcal{G}_{\text{sub}},\left\|\cdot\right\|_{Q,2},\epsilon\left\|g_{\sup}\right\|_{Q,2}\right)<\infty$ for any finite discrete measure $Q$ and $\epsilon\in(0,1]$. 
With $E[g_{\sup}(W_{ij})^2]<\infty$ and Assumptions \ref{a:sampling}, \ref{a:scalar}, Lemma \ref{lemma:d11} yields
$$
\lim_{\underline{C}\rightarrow \infty}E\left[\sup_{\theta \in \Theta}\left\|\frac{\underline{C}}{(NM)^2}\sum_{i=1}^{N}\sum_{1\leq j,j'\leq M}g\left(W_{ij},\theta\right)Z_{ij}\left(g\left(W_{ij'},\theta\right)Z_{ij'}\right)^T-E\left[\lambda_1 g\left(W_{11},\theta\right)Z_{11}\left(g\left(W_{12},\theta\right)Z_{12}\right)^T\right]\right\|\right]=0.
$$
As we can write 
\begin{equation*}
    \begin{aligned}
     &\frac{\underline{C}}{(NM)^2}\sum_{i=1}^{N}\sum_{1\leq j,j'\leq M}g\left(W_{ij},\theta\right)Z_{ij}\left(g\left(W_{ij'},\theta\right)Z_{ij'}\right)^T\\
		&=\frac{L^2}{(NM)^2}\frac{\underline{C}}{L^2}\sum_{i=1}^{N}\sum_{1\leq j,j'\leq M}g\left(W_{ij},\theta\right)Z_{ij}\left(g\left(W_{ij'},\theta\right)Z_{ij'}\right)^T\\
     &=p^2\frac{\underline{C}}{L^2}\sum_{i=1}^{N}\sum_{1\leq j,j'\leq M}g\left(W_{ij},\theta\right)Z_{ij}\left(g\left(W_{ij'},\theta\right)Z_{ij'}\right)^T,\\
    \end{aligned}
\end{equation*}
and 
\begin{align*}
E\left[\lambda_1 g\left(W_{11},\theta\right)Z_{11}\left(g\left(W_{12},\theta\right)Z_{12}\right)^T\right]
&=
E\left[Z_{11}Z_{12}\right]E\left[\lambda_1 g\left(W_{11},\theta\right)g^T\left(W_{12},\theta\right)\right]\\
&=
p^2E\left[\lambda_1 g\left(W_{11},\theta\right)g^T\left(W_{12},\theta\right)\right].
\end{align*}
Therefore, by Markov inequality, it follows that 
$$
\sup_{\theta \in \Theta}\left\|\frac{\underline{C}}{L^2}\sum_{i=1}^{N}\sum_{1\leq j,j'\leq M}g\left(W_{ij},\theta\right)Z_{ij}\left(g\left(W_{ij'},\theta\right)Z_{ij'}\right)^T-\lambda_1E\left[g\left(W_{11},\theta\right)g^T\left(W_{12},\theta\right)\right]\right\|\overset{P}{\rightarrow}0
$$
as $\underline{C}\rightarrow \infty$.
In addition, a symmetric argument also shows that
$$
\sup_{\theta \in \Theta}\left\|\frac{\underline{C}}{L^2}\sum_{1\leq i,i'\leq N}\sum_{j=1}^Mg\left(W_{ij},\theta\right)Z_{ij}\left(g\left(W_{i'j},\theta\right)Z_{i'j}\right)^T-\lambda_2E\left[g\left(W_{11},\theta\right)g^T\left(W_{21},\theta\right)\right]\right\|\overset{P}{\rightarrow}0
$$ 
as $\underline C\rightarrow \infty$.
Also note that
$$
\frac{\underline{C}}{L^2}\sum_{i=1}^{N}\sum_{1\leq j,j'\leq M}Z_{ij}Z_{ij'}g\left(W_{ij},\theta\right)g^T\left(W_{ij'},\theta\right)=\frac{\underline{C}}{L^2}\sum_{i=1}^{N}\sum_{1\leq j,j'\leq M}g\left(W_{ij},\theta\right)Z_{ij}\left(g\left(W_{ij'},\theta\right)Z_{ij'}\right)^T
$$
and 
$$
\frac{\underline{C}}{L^2}\sum_{1\leq i,i'\leq N}\sum_{j=1}^MZ_{ij}Z_{i'j}g\left(W_{ij},\theta\right)g^T\left(W_{i'j},\theta\right)=\frac{\underline{C}}{L^2}\sum_{1\leq i,i'\leq N}\sum_{j=1}^Mg\left(W_{ij},\theta\right)Z_{ij}\left(g\left(W_{i'j},\theta\right)Z_{i'j}\right)^T
$$
hold.
Therefore, Lemma \ref{lemma:inequality} yields $\widetilde \Gamma_1\overset{P}{\rightarrow}\Gamma_1(\hat \theta)$ where
$$
\Gamma_1(\theta)=\lambda_1E\left[g\left(W_{11},\theta\right)g^T\left(W_{12},\theta\right)\right]+\lambda_2E\left[g\left(W_{11},\theta\right)g^T\left(W_{21},\theta\right)\right].
$$ 
Meanwhile, since $\hat \theta\overset{P}{\rightarrow} \theta^0,$ we get $\Gamma_1(\hat \theta)\overset{P}{\rightarrow}\Gamma_1$ by Assumption \ref{a:GMM:variance} (ii). We thus obtain $\widetilde \Gamma_1\overset{P}{\rightarrow}\Gamma_1$.

Combining the above results yields $\widetilde \Omega\overset{P}{\rightarrow} \Omega$ by continuous mapping theorem. 
By another application of the continuous mapping theorem and the continuity of matrix inversion under Assumption \ref{a:GMM} (v), we get $\left(\widetilde G^T\hat V\widetilde G\right)^{-1}\overset{P}{\rightarrow}\left(G^TVG\right)^{-1}.$ 
Therefore, $\left(\tilde G^T\hat V\tilde G\right)^{-1}\tilde G^T\hat V\tilde \Omega \hat V\tilde G\left(\tilde G^T\hat V\tilde G\right)^{-1}$ is consistent for $\left(G^TVG\right)^{-1}G^TV\Omega VG\left(G^TVG\right)^{-1}$ by the continuous mapping theorem.
\end{proof}

\section{Proofs for the Application to the M-Estimation} 

For convenience, define the class $\mathcal{Q}''=\left\{\partial^2q\left(\cdot,\theta\right)/\partial \theta_r\partial \theta_l: \theta\in \Theta \; \textrm{and} \; r,l \in \left\{1,...,k\right\}\right\}$ of functions indexed by $\theta$, $r$, and $l.$ 

\subsection{Proof of Lemma \ref{theorem:consistency_M}}\label{sec:theorem:consistency_M}
\begin{proof} We verify the conditions of Theorem 2.1 in \citet{Newey1994}. 
Condition 2.1 (i) that $Q_0\left(\theta\right)$ is uniquely maximized at $\theta^0$ holds follows from the second part in Assumption \ref{a:M} (i).
Condition 2.1 (ii) holds by the first part in Assumption \ref{a:M} (i). 
Condition 2.1 (iii) that $Q_0\left(\theta\right)$ is continuous at $\theta$ follows from Assumption \ref{a:M} (ii) (a). 
Under Assumption \ref{a:M} (i), (ii) (a), (iii), by Example 19.7 in \citet{van2000asymptotic} and Lemma 9.18 in \citet{kosorok2008introduction}, the class $\mathcal{Q}$ has an envelope $\left|q\left(W_{ij},\theta^0\right)\right|+DM< \infty,$ where $D$ is the diameter of a set containing $\Theta$. 
Thus, for any finite discrete measure $Q$ and $\epsilon\in (0,1]$, $N\left(\mathcal{Q},\left\|\cdot\right\|_{Q,2},\epsilon\right)\leq \left(1+4DM/\epsilon\right)^k$, which implies that the class $\mathcal{Q}$ satisfies Assumption \ref{a:moment} (ii). 
$\mathcal{Q}$ is a pointwise measurable class of functions by Lemma \ref{lemma:pointwise_measurable} under Assumption \ref{a:M} (i), (ii) (a). 
Thus, with Assumption \ref{a:sampling}, by applying Lemma \ref{lemma:consistency}, we have $\sup_{\theta \in \Theta}\left|\hat Q_{NM}\left(\theta\right)-Q_0\left(\theta\right)\right|\overset{P}{\rightarrow}0$, so that condition 2.1 (iv) is satisfied. 
Applying Theorem 2.1 in \citet{Newey1994}, we obtain $\hat \theta \overset{P}{\rightarrow}\theta^0$.
\end{proof}

\subsection{Proof of Theorem \ref{theorem:asymptotic_normality_M}}\label{sec:theorem:asymptotic_normality_M}
\begin{proof}
We verify the conditions of Theorem 3.1 in \citet{Newey1994}. Conditions 3.1 (i), (ii) and (v) follow from the Assumption \ref{a:M} (i), (ii) (c), (v). 
Note that the finite function class $\left\{\partial q(\cdot,\theta^0)/\partial \theta_1,...,\partial q(\cdot,\theta^0)/\partial \theta_k\right\}$ is pointwise measurable since $\mathcal{Q}'$ is by Lemma \ref{lemma:pointwise_measurable} under Assumption \ref{a:M} (i), (ii) (b). With $E[\dot q_{\sup}(W_{ij})^2]<\infty$ under Assumption \ref{a:M} (vi), by applying Theorem \ref{theorem:clt} under Assumptions \ref{a:sampling} and \ref{a:scalar}, we obtain $\sqrt{\underline{C}}\nabla_{\theta}\hat Q_{NM}\left(\theta^0\right)\overset{d}{\rightarrow}N\left(0,\Sigma\right)$, where $\Sigma=\Sigma_1+\Lambda \Sigma_2,$  
\begin{align*}
\Sigma_1=&\lambda_1E\left[\nabla_{\theta}q\left(W_{11},\theta^0\right) \nabla_{\theta}q\left(W_{12},\theta^0\right)^T\right]+\lambda_2E\left[\nabla_{\theta}q\left(W_{11},\theta^0\right) \nabla_{\theta}q\left(W_{21},\theta^0\right)^T\right],\\
 \Sigma_2=&E\left[\nabla_{\theta}q\left(W_{11},\theta^0\right) \nabla_{\theta}q\left(W_{11},\theta^0\right)^T\right].
\end{align*}
This implies that condition 3.1 (iii) is satisfied. 
Under Assumption \ref{a:M} (i), (ii) (c), similar lines of argument to those in the proof of Lemma \ref{theorem:consistency_M} yield $N\left(\mathcal{Q''},\left\|\cdot\right\|_{Q,2},\epsilon\right)<\infty$ for any finite discrete measure $Q$ and $\epsilon\in (0,1].$  
Note that $\mathcal{Q}''$ is a pointwise measurable class of functions by Lemma \ref{lemma:pointwise_measurable} under Assumptions \ref{a:M} (i), (ii) (c). With Assumptions \ref{a:sampling} and \ref{a:M} (iv), Lemma \ref{lemma:consistency} thus yields 
$$
\sup_{\theta \in \Theta}\left|\frac{1}{\hat L}\sum_{i=1}^{N}\sum_{j=1}^{M}Z_{ij}\frac{\partial^2 q\left(W_{ij}, \theta\right)}{\partial \theta_r\partial \theta_l}-E\left[\frac{\partial^2 q\left(W_{11}, \theta\right)}{\partial \theta_r\partial \theta_l}\right]\right|\overset{P}{\rightarrow}0
$$ 
for each $r,l\in \{1,...,k\}.$ 
Since there are only finite numbers of $r$ and $l$, we obtain $\sup_{\theta \in \Theta}\left\|\hat H\left(\theta\right)-H(\theta)\right\|\overset{P}{\rightarrow}0$, where $\hat H(\theta)=\nabla_{\theta\theta^T}\hat Q_{NM}\left(\theta\right).$ 
With Assumption \ref{a:M} (ii) (c), condition 3.1 (iv) is satisfied. Applying Theorem 3.1 in \citet{Newey1994}, we obtain $\sqrt{\underline C}\left(\hat \theta-\theta^0\right)\overset{d}{\rightarrow}N\left(0,H^{-1}\Sigma H^{-1}\right).$ 
\end{proof}

\subsection{Proof of Theorem \ref{theorem:consistency_M_variance}}\label{sec:theorem:consistency_M_variance}
\begin{proof}
First, we want to establish $\widetilde H\overset{P}{\rightarrow}H$ via
$
\left\|\widetilde H-H\right\|\leq \left\|\widetilde H-H\left(\hat \theta\right)\right\|+\left\|H\left(\hat \theta\right)-H\right\|,
$ 
where $H\left(\theta\right)=-E\left[\nabla_{\theta \theta^T}q\left(W_{11},\theta\right)\right].$ 
Since the conditions of Lemma \ref{theorem:consistency_M} are satisfied, we have 
$\hat \theta \overset{P}{\rightarrow}\theta^0.$ 
Under Assumption \ref{a:M:variance} (i), we thus obtain $\left\|H\left(\hat \theta\right)-H\right\|\overset{P}{\rightarrow}0$ by the continuous mapping theorem. 
Note that $\mathcal{Q}''$ is pointwise measurable and $N\left(\mathcal{Q''},\left\|\cdot\right\|_{Q,2},\epsilon\right)<\infty$ for any finite discrete measure $Q,$ $\epsilon\in (0,1]$ by the proof of Theorem \ref{theorem:asymptotic_normality_M}. 
Under Assumptions \ref{a:sampling} and \ref{a:M} (iv), by applying Lemma \ref{lemma:consistency}, we obtain 
$$
\sup_{\theta \in \Theta}\left|\frac{1}{\hat L}\sum_{i=1}^{N}\sum_{j=1}^{M}Z_{ij}\frac{\partial^2 q\left(W_{ij}, \theta\right)}{\partial \theta_r\partial \theta_l}-E\left[\frac{\partial^2 q\left(W_{11}, \theta\right)}{\partial \theta_r\partial \theta_l}\right]\right|\overset{P}{\rightarrow}0
$$ 
for each $r,l\in \{1,...,k\}.$ 
Since there are only finite numbers of $r$ and $l$, we get 
$$
\sup_{\theta\in \Theta}\left\|\frac{1}{\hat L}\sum_{i=1}^{N}\sum_{j=1}^{M}Z_{ij}\nabla_{\theta\theta^T}q\left(W_{ij},\theta\right)-E\left[\nabla_{\theta\theta^T}q\left(W_{11},\theta\right)\right]\right\|\overset{P}{\rightarrow}0.
$$ 
Thus, we obtain $\widetilde H\overset{P}{\rightarrow}H\left(\hat \theta\right).$ 
Combining the above result together yields $\widetilde H \overset{P}{\rightarrow} H.$ 
By the continuity of matrix inversion under Assumption \ref{a:M} (v), it follows that $\widetilde H ^{-1}\overset{P}{\rightarrow}H^{-1}.$ 

Next, we will establish $\widetilde \Sigma_2\overset{P}{\rightarrow}\Sigma_2.$
Define a new class $\mathcal{Q}_{\text{sub}}'=\bigl\{(w,z)\mapsto  z \partial q\left(w, \theta\right)/\partial \theta_r, \theta \in \Theta, \; r \in \left\{1,...,k\right\}\bigl\}.$ 
Under Assumption \ref{a:M} (i), (ii) (b), similar lines of argument to those in the proof of Lemma \ref{theorem:consistency_M} yield $\sup_{Q}N\left(\mathcal{Q}_{\text{sub}}',\left\|\cdot\right\|_{Q,2},\epsilon\left\|q_{\sup}\right\|_{Q,2}\right)<\infty,$ for any finite discrete measure $Q$ and $\epsilon\in (0,1].$ 
By Theorem 9.15 in \citet{kosorok2008introduction}, 
\begin{equation*}
     \begin{aligned}
      &\sup_{Q}N\left(\mathcal{Q}_{\text{sub}}'\mathcal{Q}_{\text{sub}}',\left\|\cdot\right\|_{Q,2},\sqrt{2}\epsilon\left\|q_{\sup}^2\right\|_{Q,2}\right)\\
      &\le \sup_{Q}N\left(\mathcal{Q}_{\text{sub}}',\left\|\cdot\right\|_{Q,2},\epsilon\left\|q_{\sup}\right\|_{Q,2}\right)\sup_{Q}N\left(\mathcal{Q}_{\text{sub}}',\left\|\cdot\right\|_{Q,2},\epsilon\left\|q_{\sup}\right\|_{Q,2}\right)<\infty
     \end{aligned}
\end{equation*}
holds for any finite discrete measure $Q$ and $\epsilon\in (0,1],$ where $\mathcal{Q}_{\text{sub}}'\mathcal{Q}_{\text{sub}}'$ is defined as the pointwise product. 
Note that $\mathcal{Q}_{\text{sub}}' \mathcal{Q}_{\text{sub}}'$ is a pointwise measurable class of functions since $\mathcal{Q}_{\text{sub}}'$ is a pointwise measurable class of functions by the same argument as in the proof of Theorem \ref{theorem:asymptotic_normality_M}. 
With Assumption \ref{a:sampling} and $E[\dot q_{\sup}(W_{ij})^2]<\infty$ under Assumption \ref{a:M} (vi), by applying Lemma \ref{lemma:d12},  
we obtain 
$$
E\left[\sup_{\theta \in \Theta}\left\|\frac{1}{L}\sum_{i=1}^{N}\sum_{j=1}^{M}\nabla_{\theta}q\left(W_{ij}, \theta\right)Z_{ij}\left(\nabla_{\theta}q\left(W_{ij}, \theta\right)Z_{ij}\right)^T-E\left[\nabla_{\theta}q\left(W_{11}, \theta\right)\nabla_{\theta}q\left(W_{11}, \theta\right)^T\right]\right\|\right]=o(1).
$$
Therefore, by Markov inequality, it follows that 
$$
\sup_{\theta \in \Theta}\left\|\frac{1}{L}\sum_{i=1}^{N}\sum_{j=1}^{M}\nabla_{\theta}q\left(W_{ij}, \theta\right)Z_{ij}\left(\nabla_{\theta}q\left(W_{ij}, \theta\right)Z_{ij}\right)^T-E\left[\nabla_{\theta}q\left(W_{11}, \theta\right)\nabla_{\theta}q\left(W_{11}, \theta\right)^T\right]\right\|\overset{P}{\rightarrow}0.
$$
In addition, we can write
$$
\frac{1}{L}\sum_{i=1}^{N}\sum_{j=1}^{M}Z_{ij}\nabla_{\theta}q\left(W_{ij}, \theta\right)\nabla_{\theta}q\left(W_{ij}, \theta\right)^T=\frac{1}{L}\sum_{i=1}^{N}\sum_{j=1}^{M}\nabla_{\theta}q\left(W_{ij}, \theta\right)Z_{ij}\left(\nabla_{\theta}q\left(W_{ij}, \theta\right)Z_{ij}\right)^T.
$$
Thus, Lemma \ref{lemma:inequality} yields $\widetilde \Sigma_2\overset{P}{\rightarrow}\Sigma_2(\hat \theta),$ where $\Sigma_2(\theta)=E\left[\nabla_{\theta}q\left(W_{11},\theta\right)\nabla_{\theta}q\left(W_{11},\theta\right)^T\right]$. 
Meanwhile, $\hat \theta\overset{P}{\rightarrow}\theta^0$ and we have $\Sigma_2(\hat \theta)\overset{P}{\rightarrow}\Sigma_2$ by Assumption \ref{a:M:variance} (iii). 
Therefore, we establish $\widetilde \Sigma_2\overset{P}{\rightarrow}\Sigma_2.$

Finally, we will establish $\widetilde \Sigma_1\overset{P}{\rightarrow}\Sigma_1$.
Note that $\mathcal{Q}_{\text{sub}}'$ is a pointwise measurable class of functions and $\sup_{Q}N\left(\mathcal{Q}_{\text{sub}}',\left\|\cdot\right\|_{Q,2},\epsilon\left\|q_{\sup}\right\|_{Q,2}\right)<\infty$, for any finite discrete measure $Q$, $\epsilon\in (0,1].$ 
With Assumptions \ref{a:sampling}, \ref{a:scalar} and $E[\dot q_{sup}(W_{ij})^2]<\infty$ under Assumption \ref{a:M} (vi), Lemma \ref{lemma:d11} yields\small 
\begin{equation*}
    \begin{aligned}
    &\lim_{\underline{C}\rightarrow \infty}E\left[\sup_{\theta \in \Theta}\left\|\frac{\underline{C}}{(NM)^2}\sum_{i=1}^{N}\sum_{1\leq j,j'\leq M}\nabla_{\theta}q\left(W_{ij},\theta\right)Z_{ij}\left(\nabla_{\theta}q\left(W_{ij'},\theta\right)Z_{ij'}\right)^T-E\left[\lambda_1 \nabla_{\theta}q\left(W_{11},\theta\right)Z_{11}\left(\nabla_{\theta}q\left(W_{12},\theta\right)Z_{12}\right)^T\right]\right\|\right]\\
   &=0.
\end{aligned}
\end{equation*}\normalsize
As we can write 
\begin{equation*}
    \begin{aligned}
    &\frac{\underline{C}}{(NM)^2}\sum_{i=1}^{N}\sum_{1\leq j,j'\leq M}\nabla_{\theta}q\left(W_{ij},\theta\right)Z_{ij}\left(\nabla_{\theta}q\left(W_{ij'},\theta\right)Z_{ij'}\right)^T\\
    &=\frac{L^2}{(NM)^2}\frac{\underline{C}}{L^2}\sum_{i=1}^{N}\sum_{1\leq j,j'\leq M}\nabla_{\theta}q\left(W_{ij},\theta\right)Z_{ij}\left(\nabla_{\theta}q\left(W_{ij'},\theta\right)Z_{ij'}\right)^T\\
    &=p^2\frac{\underline{C}}{L^2}\sum_{i=1}^{N}\sum_{1\leq j,j'\leq M}\nabla_{\theta}q\left(W_{ij},\theta\right)Z_{ij}\left(\nabla_{\theta}q\left(W_{ij'},\theta\right)Z_{ij'}\right)^T,
    \end{aligned}
\end{equation*}
and 
\begin{equation*}
    \begin{aligned}
    E\left[\lambda_1 \nabla_{\theta}q\left(W_{11},\theta\right)Z_{11}\left(\nabla_{\theta}q\left(W_{12},\theta\right)Z_{12}\right)^T\right]
    &=E\left[Z_{11}Z_{12}\right]E\left[\lambda_1 \nabla_{\theta}q\left(W_{11},\theta\right)\nabla_{\theta}q\left(W_{12},\theta\right)^T\right]\\
    &=p^2E\left[\lambda_1 \nabla_{\theta}q\left(W_{11},\theta\right)\nabla_{\theta}q\left(W_{12},\theta\right)^T\right].
    \end{aligned}
\end{equation*}
Therefore, by Markov inequality, it follows that
$$
\sup_{\theta\in\Theta}\left\|\frac{\underline{C}}{L^2}\sum_{i=1}^{N}\sum_{1\leq j,j'\leq M}\nabla_{\theta}q\left(W_{ij},\theta\right)Z_{ij}\left(\nabla_{\theta}q\left(W_{ij'},\theta\right)Z_{ij'}\right)^T-\lambda_1E\left[ \nabla_{\theta}q\left(W_{11},\theta\right)\nabla_{\theta}q\left(W_{12},\theta\right)^T\right]\right\|\overset{P}{\rightarrow}0
$$
as  $\underline{C}\rightarrow \infty.$
In addition, a symmetric argument also shows that
$$
\sup_{\theta \in \Theta}\left\|\frac{\underline{C}}{L^2}\sum_{1\leq i, i'\leq N}\sum_{j=1}^M\nabla_{\theta}q\left(W_{ij},\theta\right)Z_{ij}\left(\nabla_{\theta}q\left(W_{i'j},\theta\right)Z_{i'j}\right)^T-\lambda_2E\left[\nabla_{\theta}q\left(W_{11},\theta\right)\nabla_{\theta}q\left(W_{21},\theta\right)^T\right]\right\|\overset{P}{\rightarrow}0
$$ 
as $\underline{C}\rightarrow \infty.$
Also, we can write
$$
\frac{\underline{C}}{L^2}\sum_{i=1}^{N}\sum_{1\leq j,j'\leq M}Z_{ij}Z_{ij'}\nabla_{\theta}q\left(W_{ij},\theta\right)\nabla_{\theta}q\left(W_{ij'},\theta\right)^T=\frac{\underline{C}}{L^2}\sum_{i=1}^{N}\sum_{1\leq j,j'\leq M}\nabla_{\theta}q\left(W_{ij},\theta\right)Z_{ij}\left(\nabla_{\theta}q\left(W_{ij'},\theta\right)Z_{ij'}\right)^T
$$ 
and 
$$\frac{\underline{C}}{L^2}\sum_{1\leq i, i'\leq N}\sum_{j=1}^MZ_{ij}Z_{i'j}\nabla_{\theta}q\left(W_{ij},\theta\right)\nabla_{\theta}q\left(W_{i'j},\theta\right)^T=\frac{\underline{C}}{L^2}\sum_{1\leq i, i'\leq N}\sum_{j=1}^M\nabla_{\theta}q\left(W_{ij},\theta\right)Z_{ij}\left(\nabla_{\theta}q\left(W_{i'j},\theta\right)Z_{i'j}\right)^T.$$ 
Therefore, Lemma \ref{lemma:inequality} yields $\widetilde \Sigma_1\overset{P}{\rightarrow} \Sigma_1\left(\hat \theta\right),$ where 
$$ 
\Sigma_1(\theta)=\lambda_1E\left[\nabla_{\theta}q\left(W_{11},\theta\right)\nabla_{\theta}q\left(W_{12},\theta\right)^T\right]+ \lambda_2E\left[\nabla_{\theta}q\left(W_{11},\theta\right)\nabla_{\theta} q\left(W_{21},\theta\right)^T\right].
$$ 
Meanwhile, since $\hat \theta\overset{P}{\rightarrow} \theta^0,$ we get $\Sigma_1\left(\hat \theta \right) \overset{P}{\rightarrow}\Sigma_1$ by Assumption \ref{a:M:variance} (ii). We thus obtain $\widetilde \Sigma_1\overset{P}{\rightarrow}\Sigma_1$ and $\widetilde \Sigma\overset{P}{\rightarrow}\Sigma$ by continuous mapping theorem. 
Therefore, $\widetilde H ^{-1}\widetilde \Sigma \widetilde H ^{-1}$ is consistent for $H^{-1}\Sigma H^{-1}$ by the continuous mapping theorem. 
\end{proof}

\section{ Multiple Observations in a Cluster}\label{sec:multiple}
In many empirical applications, researchers face situations in which there are multiple observations in some cluster $(i,j)$, and the number of observations may vary across the clusters. In this section, we generalize the results from the main text to accommodate multiple observations per cluster. We shall follow the basic setting in \cite{davezies2018asymptotic}. Throughout this section, we call a pair $(i,j)$ of indices a cell. The number of observations in a cell is allowed to be random and can be correlated with the observations. This allows for a wide range of heterogeneous cluster structures. We will denote the number of observations in the $(i,j)$-th cell by $N_{ij}$, which is itself a random variable that takes a non-negative integer value. The observation that corresponds to the $\ell$-th unit, $1\le \ell\le N_{ij}$, in the $(i,j)$-th cell is a  $d$-dimensional  random vector denoted by $W_{\ell,ij}.$ 

With these notations, we consider the following sampling assumption.
\begin{assumption}[Sampling]\label{a:ext:sampling}
(i) The array $\left(N_{ij},(W_{\ell,ij})_{\ell \geq 1}\right)_{(i,j)\in \mathbb{N}^2}$ is separately exchangeable.
(ii) $\left(N_{ij},(W_{\ell,ij})_{\ell \geq 1}\right)_{\left(i,j\right)\in \mathbb{N}^2}$ is dissociated. 
(iii) $E[N_{11}]>0.$
\end{assumption}

\noindent 
This assumption is essentially identical to Assumption 1 in \citet*{davezies2018asymptotic}. Parts (i) and (ii) parallel Assumption \ref{a:sampling} (i) and (ii), respectively, in the main text except that the cell size is random and can differ across cells here. When $N_{ij}=1$ for all $i$ and $j$, the conditions reduce to Assumption \ref{a:sampling} in the main text. Also, part (ii) allows for a wide range of correlation structures between $N_{ij}$ and $(W_{\ell,ij})_{\ell \geq 1},$ and among $(W_{\ell,ij})_{\ell \geq 1}$ within $(i,j)$-th cell.  Part (iii) excludes the cells that are almost surely empty. 

\begin{assumption}[Function Class]\label{a:ext:moment}
(i) $E\left[\sum_{\ell=1}^{N_{ij}}f\left(W_{\ell,ij}\right)\right]=0.$ 
(ii) $\mathcal{F}$ admits an envelope $F$ satisfying $E\left[\sum_{\ell=1}^{N_{ij}}F(W_{\ell,ij})\right]<\infty$ with  $\sup_{Q} N(\mathcal{F},\left\|\cdot\right\|_{Q,2},\epsilon \left\|F\right\|_{Q,2})<\infty$ for all $\epsilon >0$, where $Q$ is any finite discrete measure. 
(iii) $\mathcal{F}$ is pointwise measurable.
\end{assumption}

\noindent 
This assumption generalizes Assumption \ref{a:moment}  by allowing for multiple observations within a cell as well as heterogeneous cell sizes.

\begin{lemma}[Uniform Weak Law of Large Numbers]\label{lemma:ext:consistency}
Suppose that Assumptions \ref{a:ext:sampling} and \ref{a:ext:moment} (ii)--(iii) hold.
Then we have 
$$ \sup_{f\in \mathcal{F}}\left|\frac{1}{\hat L}\sum_{i=1}^{N}\sum_{j=1}^{M}Z_{ij}\sum_{\ell=1}^{N_{ij}}f\left(W_{\ell,ij}\right)- E\left[\sum_{\ell=1}^{N_{11}}f(W_{\ell,11})\right]\right|\overset{P}{\rightarrow} 0.$$
\end{lemma}
\noindent
A proof of Lemma \ref{lemma:ext:consistency} is a straightforward modification of the proof of Lemma \ref{lemma:consistency} and is therefore omitted.  

\begin{theorem}[Central Limit Theorem]\label{theorem:ext:clt}
Suppose that Assumptions \ref{a:scalar}, \ref{a:ext:sampling} and \ref{a:ext:moment} (i), (iii) hold.
In addition, suppose that any finite function class $\mathcal{F}=\left\{f_1,...,f_k\right\}$ with a fixed $k$ admits an envelope $F$ satisfying $E\left[\left(\sum_{\ell=1}^{N_{ij}}F(W_{\ell,ij})\right)^2\right]<\infty$. Let $f=\left(f_1,...,f_k\right)^T$. 
Then,
\begin{equation*}
    \sqrt{\underline{C}}\frac{1}{\hat L}\sum_{i=1}^{N}\sum_{j=1}^{M}Z_{ij}\sum_{\ell=1}^{N_{ij}}f\left(W_{\ell,ij}\right)\overset{d}{\rightarrow}N\left(0,\Gamma\right),
\end{equation*}
where $\Gamma=\Gamma _A+\Lambda \Gamma_B$,
\begin{align*}
\Gamma_A=&\lambda_1 E\left[\left(\sum_{\ell=1}^{N_{11}}f\left(W_{\ell,11}\right)\right)\left(\sum_{\ell=1}^{N_{12}}f\left(W_{\ell,12}\right)\right)^T\right]+\lambda_2 E\left[\left(\sum_{\ell=1}^{N_{11}}f\left(W_{\ell,11}\right)\right)\left(\sum_{\ell=1}^{N_{21}}f\left(W_{\ell,21}\right)\right)^T\right],\\
\Gamma_B=&E\left[\left(\sum_{\ell=1}^{N_{11}}f\left(W_{\ell,11}\right)\right)\left(\sum_{\ell=1}^{N_{11}}f\left(W_{\ell,11}\right)\right)^T\right].
\end{align*} 

\end{theorem}
Again, a proof is a straightforward modification of that of Theorem \ref{theorem:clt}. Here we describe the necessary modification without repetitively showing the entire proof. Note that under the current setting, we have 
\begin{equation*}
\frac{1}{\hat L}\sum_{i=1}^{N}\sum_{j=1}^{M}Z_{ij}\sum_{\ell=1}^{N_{ij}}f\left(W_{\ell,ij}\right)=\frac{L}{\hat L}\left(A_{NM}+\sqrt{1-p_{NM}}B_{NM}\right),
\end{equation*}
where $A_{NM}$ and $B_{NM}$ are defined as
~\\
\begin{equation*}
    A_{NM}=\frac{1}{NM}\sum_{i=1}^{N}\sum_{j=1}^{M}\sum_{\ell=1}^{N_{ij}}f\left(W_{\ell,ij}\right) \quad \text{and} \quad  B_{NM}=\frac{1}{L}\sum_{i=1}^{N}\sum_{j=1}^{M}
\frac{Z_{ij}-p_{NM}}{\sqrt{1-p_{NM}}}\sum_{\ell=1}^{N_{ij}}f\left(W_{\ell,ij}\right),
\end{equation*}
respectively.
Then, using a similar argument to the one in the proof of Theorem \ref{theorem:clt}, the asymptotic normality for $A_{NM}$ can be established as 
\begin{align*}\label{eq:ext:a_normal}
&\sqrt{\underline{C}}A_{NM}\stackrel{d}{\rightarrow}\\
&N\left(0,\lambda_1 E\left[\left(\sum_{\ell=1}^{N_{11}}f\left(W_{\ell,11}\right)\right)\left(\sum_{\ell=1}^{N_{12}}f\left(W_{\ell,12}\right)\right)^T\right]+\lambda_2 E\left[\left(\sum_{\ell=1}^{N_{11}}f\left(W_{\ell,11}\right)\right)\left(\sum_{\ell=1}^{N_{21}}f\left(W_{\ell,21}\right)\right)^T\right]\right).
\end{align*}
Similarly, the variance-covariance matrices for $B_{NM}$ can be calculated as
\begin{equation*}\label{appen:eq:b_normal}
\begin{aligned}
\var\left(B_{NM}\right)=\frac{1}{L}E\left[\left(\sum_{\ell=1}^{N_{11}}f\left(W_{\ell,11}\right)\right)\left(\sum_{\ell=1}^{N_{11}}f\left(W_{\ell,11}\right)\right)^T\right].
\end{aligned}   
\end{equation*}

\noindent We thus obtain Theorem \ref{theorem:ext:clt} following from the arguments in the proof of Theorem \ref{theorem:clt}.
 
\subsection{Application to the Generalized Method of Moments}
We now generalize the results for GMM from Section \ref{sec:gmm} to allow for multiple observations per cell.
Under the current setting, we assume that the true parameter vector $\theta^0=(\theta_1^0,...,\theta_k^0)^T$  satisfies  $$E\left[\sum_{\ell=1}^{N_{ij}}g\left(W_{\ell,ij},\theta^0\right)\right]=0,$$
where $m\ge k$.    
The multiway algorithmic subsampling GMM estimator $\hat\theta$ can be subsequently defined as  
$$
\max_{\theta \in \Theta}\hat Q_{NM}\left(\theta\right),
$$ 
where  $\hat Q_{NM}\left(\theta\right)=-\hat g_{NM}\left( \theta\right)^T\hat V\hat g_{NM}\left(\theta\right)$, $\hat g_{NM}\left(\theta\right)=\hat L^{-1}\sum_{i=1}^{N}\sum_{j=1}^{M}Z_{ij}\sum_{\ell=1}^{N_{ij}}g\left(W_{\ell,ij},\theta\right)$.  
To establish the asymptotic properties of $\widehat\theta$, we impose the following conditions.

\begin{assumption}\label{ext:a:GMM}
~\\
(i) $V$ is positive semi-definite, and $VE\left[\sum_{\ell=1}^{N_{ij}}g(W_{\ell,ij},\theta)\right]=0$ iff $\theta=\theta^0.$

\noindent (ii) $\theta^0 \in \textrm{int} \left(\Theta\right)$, where $\Theta$ is a compact subset of $\mathbb{R}^k$.

\noindent (iii) (a) $\theta \mapsto g_{r}(w, \theta)$ is Lipschitz with a universal Lipschitz constant.  

(b) Each component of $\theta \mapsto \nabla_{\theta}g_r(w,\theta)$ is Lipschitz with a universal Lipschitz constant.

\noindent (iv) $E\left[\sup_{\theta\in \Theta}\left\|\sum_{\ell=1}^{N_{ij}}g\left(W_{\ell,ij},
\theta\right)\right\|\right]<\infty.$

\noindent (v) $G^TVG$ is nonsingular, where $G=E\left[\sum_{\ell=1}^{N_{ij}} \nabla_{\theta} g\left(W_{\ell,ij},\theta^0\right)\right].$

\noindent (vi) $E\left[\sup_{\theta \in\Theta}\left\| \sum_{\ell=1}^{N_{ij}} \nabla_\theta g\left(W_{\ell,ij},\theta\right)\right\|\right]<\infty.$

\noindent (vii) $g_{\sup}(\cdot)=\max_{r\in \{1,...,m\}}|g_r\left(\cdot,\theta\right)|$ satisfies $E\left[\left(\sum_{\ell=1}^{N_{ij}}g_{\sup}(W_{\ell,ij})\right)^2\right]<\infty$.
\end{assumption}

\begin{lemma}[Consistency of the Multiway Algorithmic Subsampling GMM Estimator]\label{ext:theorem:consistency_gmm}
If Assumptions \ref{a:ext:sampling} and \ref{ext:a:GMM} (i), (ii), (iii), (iv) hold, and that $\hat V\overset{P}{\rightarrow} V$,
then $\hat \theta \overset{P}{\rightarrow} \theta^0.$
\end{lemma}

\noindent A proof of Lemma \ref{ext:theorem:consistency_gmm} follows analogously from the proof of Lemma \ref{theorem:consistency_gmm} and an application of Lemma \ref{lemma:ext:consistency}. We omit the proof. 

\begin{theorem}[Asymptotic Normality of the Multiway Algorithmic Subsampling GMM Estimator]\label{ext: theorem:asymptotic_normality_gmm}
If Assumptions \ref{a:scalar}, \ref{a:ext:sampling} and \ref{ext:a:GMM} hold, and that $\hat V\overset{P}{\rightarrow} V$,
then $$\sqrt{\underline{C}}\left(\hat \theta-\theta^0\right)\overset{d}{\rightarrow} N\left(0,\left(G^TVG\right)^{-1}G^TV\Omega VG\left(G^TVG\right)^{-1}\right),$$
where $\, G=E\left[\sum_{\ell=1}^{N_{11}}\nabla_{\theta} g\left(W_{\ell,11},\theta^0\right)\right]$ and $\, \Omega=\Gamma _1+\Lambda \Gamma_2$, with \begin{equation*}
    \Gamma _1= \lambda_1E\left[\left(\sum_{\ell=1}^{N_{11}}g\left(W_{\ell,11},\theta^0\right)\right)\left(\sum_{\ell=1}^{N_{12}}g\left(W_{\ell,12},\theta^0\right)\right)^T\right]  + \lambda_2E\left[\left(\sum_{\ell=1}^{N_{11}}g\left(W_{\ell,11},\theta^0\right)\right)\left(\sum_{\ell=1}^{N_{21}}g\left(W_{\ell,21},\theta^0\right)\right)^T\right]
\end{equation*} and  $\Gamma_2=E\left[\left(\sum_{\ell=1}^{N_{11}}g\left(W_{\ell,11},\theta^0\right)\right)\left(\sum_{\ell=1}^{N_{11}}g\left(W_{\ell,11},\theta^0\right)\right)^T\right]$.
\end{theorem}

\noindent A proof of Theorem \ref{ext: theorem:asymptotic_normality_gmm} follows straightforwardly from the proof of Theorem \ref{theorem:asymptotic_normality_gmm} and an application of Theorem \ref{theorem:ext:clt} in place of Theorem \ref{theorem:clt}.  To avoid repetition, the proof is omitted.

\subsection{Application to the M-Estimation}
We now generalize the results for  M-Estimation from Section \ref{sec:m} to allow for multiple observations per cell.
Under this setting, the multiway algorithmic subsampling M-estimator $\hat \theta$ is defined as the solution to 
$$
\max_{\theta \in \Theta}-\frac{1}{\hat L}\sum_{i=1}^{N}\sum_{j=1}^{M}Z_{ij}\sum_{\ell=1}^{N_{ij}}q\left(W_{\ell,ij},\theta\right).
$$
The true parameter vector $\theta^0$ is assumed to be the unique solution of $\max_{\theta \in \Theta}-E\left[\sum_{\ell=1}^{N_{ij}}q\left(W_{\ell,ij},\theta\right)\right]$.
For each $\theta \in \Theta$, let $-\hat L^{-1}\sum_{i=1}^{N}\sum_{j=1}^{M}Z_{ij}\sum_{\ell=1}^{N_{ij}}q\left(W_{\ell,ij}, \theta\right)$ and $-E\left[\sum_{\ell=1}^{N_{ij}}q\left(W_{\ell,ij},\theta\right)\right]$ be denoted by $\hat Q_{NM}\left(\theta\right)$ and $Q_0\left(\theta\right)$, respectively.
To establish the asymptotic properties of $\hat \theta$, we restate Assumption \ref{a:M} as follows.

\begin{assumption}\label{ext:a:M}
~\\
(i) $\theta^0 \in \textrm{int} \left(\Theta\right)$ where $\Theta$ is a compact subset of $\mathbb{R}^k$. Also, $E\left[\sum_{\ell=1}^{N_{ij}}q(W_{\ell,ij}, \theta^0)\right]<E\left[\sum_{\ell=1}^{N_{ij}}q(W_{\ell,ij},\theta)\right]$ holds for all $\theta \in \Theta \backslash \{\theta_0\}$. 

\noindent (ii) (a) $\theta \mapsto q\left(w, \theta\right)$ is Lipschitz with a universal Lipschitz constant. 

(b) Each coordinate of $\theta \mapsto \nabla_{\theta}q(w,\theta)$ is Lipschitz with a universal Lipschitz constant. 

(c) Each coordinate of $\theta \mapsto \nabla_{\theta\theta^T}q(w,\theta)=\partial^2q\left(w,\theta\right)/\partial \theta \partial \theta^T$ is Lipschitz with a universal Lipschitz constant. 

\noindent (iii) $E[\sup_{\theta\in \Theta}\sum_{\ell=1}^{N_{ij}}q\left(W_{\ell,ij},\theta\right)]<\infty.$ 

\noindent (iv) $E\left[\sup_{\theta \in \Theta}\left\|\sum_{\ell=1}^{N_{ij}} \nabla_{\theta\theta^T} q\left(W_{\ell,ij},\theta\right)\right\|\right]<\infty.$ 

\noindent (v) $H=H\left(\theta^0\right)$ is nonsingular where $H(\theta)=-E\left[\sum_{\ell=1}^{N_{ij}} \nabla_{\theta\theta^T} q\left(W_{\ell,ij},\theta\right)\right].$

\noindent (vi)  $\dot q_{\sup}(\cdot)=\max_{r \in \left\{1,...,k\right\} }\left|\partial q(\cdot,\theta)/\partial \theta_r\right|$ satisfies $E\left[\left(\sum_{\ell=1}^{N_{ij}}\dot q_{\sup}(W_{\ell,ij})\right)^2\right]<\infty.$
\end{assumption}  
 
\begin{lemma}[Consistency of the Multiway Algorithmic Subsampling M-estimator]\label{ext:theorem:consistency_M}
If Assumptions \ref{a:ext:sampling} and \ref{ext:a:M} (i), (ii), (iii) hold, then, $\hat \theta \overset{P}{\rightarrow} \theta^0.$
\end{lemma}

\noindent
 A proof of Lemma \ref{ext:theorem:consistency_M}  follows analogously from the proof of Lemma  \ref{theorem:consistency_M} and an application of Lemma \ref{lemma:ext:consistency}. We omit the proof.

\begin{theorem}[Asymptotic Normality of the Multiway Algorithmic Subsampling M-estimator]\label{ext:theorem:asymptotic_normality_M}
If Assumptions \ref{a:scalar}, \ref{a:ext:sampling} and \ref{ext:a:M} hold, then $$\sqrt{\underline C}\left(\hat \theta-\theta^0\right)\overset{d}{\rightarrow}N\left(0,H^{-1}\Sigma H^{-1}\right),$$ where $H=-E\left[\sum_{\ell=1}^{N_{11}} \nabla_{\theta\theta^T} q\left(W_{\ell,11},\theta^0\right)\right]$, $\Sigma=\Sigma_1+\Lambda \Sigma_2,$ 
\begin{equation*}
\begin{aligned}
    \Sigma_1=&\lambda_1E\left[\left(\sum_{\ell=1}^{N_{11}}\nabla_{\theta}q\left(W_{\ell,11},\theta^0\right)\right) \left(\sum_{\ell=1}^{N_{12}}\nabla_{\theta}q\left(W_{\ell,12},\theta^0\right)\right)^T\right]\\
    &+ \lambda_2E\left[\left(\sum_{\ell=1}^{N_{11}} \nabla_{\theta} q\left(W_{\ell,11},\theta^0\right)\right) \left(\sum_{\ell=1}^{N_{21}}\nabla_{\theta}q\left(W_{\ell,21},\theta^0\right)\right)^T\right],
\end{aligned}
\end{equation*} and  $\, \Sigma_2=E\left[\left(\sum_{\ell=1}^{N_{11}} \nabla_{\theta} q\left(W_{\ell,11},\theta^0\right)\right)\left(\sum_{\ell=1}^{N_{11}}\nabla_{\theta}q\left(W_{\ell,11},\theta^0\right)\right)^T\right].$
\end{theorem}

\noindent 
A proof of Theorem \ref{ext:theorem:asymptotic_normality_M} follows straightforwardly from the proof of Theorem \ref{theorem:asymptotic_normality_M} and an application of Theorem \ref{theorem:ext:clt} in place of Theorem \ref{theorem:clt}. To avoid repetition, the proof is omitted.
\section {Alternative Subsampling Methods}\label{sec:multiple}
Although we have thus far focused on Bernoulli subsampling as the default subsampling method, other algorithmic subsampling schemes are also applicable.
In \citet*{LeeNg2020sketching}, two classes of algorithmic subsampling schemes are considered, namely, 
 random subsampling methods and random projection methods, with the Bernoulli subsampling considered throughout this paper belongs to the former category. It is possible to adapt other random subsampling methods under the multiway cluster sampling setting, such as uniform sampling with/without replacement. In fact, the robustness against possible degeneracy remains valid when either of these two alternative random subsampling schemes is substituted.  We are going to illustrate such adaptations in the rest of this section. 
On the other hand, as random projection-based methods produce rather different decompositions in the asymptotic terms,  their validity and asymptotic behaviors under the current setting remain unclear to us, and are therefore not discussed here. 

To implement uniform subsampling without replacement, the researcher sets $L$ randomly chosen $\{Z_{ij}:i=1,...,N, j=1,...,M\}$ to $1$ and the rest to $0$. 
To implement uniform subsampling with replacement, the researcher generates $\{Z_{ij}:i=1,...,N, j=1,...,M\}$ following a multinomial distribution with $L$ trials, mutually exclusive events $\{1,...,NM\}$ and equal event probabilities $1/(NM)$.
For both subsampling schemes, total number of subsampled units, $\sum_{i=1}^N\sum_{j=1}^MZ_{ij}=L$, is deterministic, while for
 the Bernoulli random sampling,  $\sum_{i=1}^N\sum_{j=1}^MZ_{ij}=\hat L$ is stochastic.  Despite of such a discrepancy, the uniform subsampling without replacement yields asymptotically the same result as Bernoulli subsampling, since Bernoulli subsampling can be considered as a uniform subsampling without replacement with a random sample size and also $\hat L/L\overset{P}{\rightarrow} 1$ due to Lemma \ref{lemma:inequality}. 
We state the following two propositions for the uniform subsampling with and without replacement.

\begin{proposition} \label{pro:withoutreplacement}
Consider the uniform subsampling without replacement.
If the conditions for Theorem \ref{theorem:clt} hold, then
\begin{equation}
\sqrt{\underline{C}}\frac{1}{ L}\sum_{i=1}^{N}\sum_{j=1}^{M}Z_{ij}f\left(W_{ij}\right) \overset{d}{\rightarrow}N\left(0,\Gamma_{UN}\right),
\end{equation}
\end{proposition}
\noindent  where $\Gamma_{UN}$ has the same form as $\Gamma$ in Theorem \ref{theorem:clt}.

\begin{proposition}\label{pro:withreplacement}
Consider the uniform subsampling with replacement.
If the conditions for Theorem \ref{theorem:clt} hold, then
\begin{equation}
\sqrt{\underline{C}}\frac{1}{ L}\sum_{i=1}^{N}\sum_{j=1}^{M}Z_{ij}f\left(W_{ij}\right) \overset{d}{\rightarrow}N\left(0,\Gamma_{UR}\right),
\end{equation}
\end{proposition}
\noindent  where $\Gamma_{UR}=\Gamma_A + \lim_{N,M\to\infty}(\underline C/(NMp))\Gamma_B$, where $\Gamma_A$ and $\Gamma_B$ are as defined in Theorem \ref{theorem:clt}.

Proofs of Propositions \ref{pro:withoutreplacement} and \ref{pro:withreplacement} follow from a straightforward adaptation of arguments in the proof of Theorem \ref{theorem:clt} with Lemma 2 and Theorem 1 of \citet*{Janson1984}. Here we describe the required modifications rather than reproducing the repetitive proofs. First note that under either of these subsampling schemes, one can proceed the following proof of Theorem \ref{theorem:clt} to obtain the decomposition of Equation (\ref{eq:CLT_decomp}) with the factor $L/\hat L$ replaced by $1$ .  In addition, the $B_{NM}$ term in the decomposition of Equation (\ref{eq:CLT_decomp}) has a different conditional (on observations) distribution  that depends on the subsampling scheme and thus a different conditional variance. The propositions then follow from calculating the alternative conditional variance of $B_{NM}$ and applying the law of total variance.

\end{appendix}

\bibliography{citation}
\end{document}